\title{Maximizing Nash Social Welfare in 2-Value Instances}
\author[1]{Hannaneh Akrami}
\author[1]{Bhaskar Ray Chaudhury}
\author[2]{Martin Hoefer} 
\author[1]{Kurt Mehlhorn}
\author[2]{Marco Schmalhofer} 
\author[1]{Golnoosh Shahkarami}
\author[2]{Giovanna Varricchio}
\author[3]{Quentin Vermande}
\author[3]{Ernest van Wijland}
\affil[1]{Max Planck Institute for Informatics, Universit\"at des Saarlandes.}
\affil[ ]{
\texttt{\{hakrami, braycha, mehlhorn, gshahkar\}@mpi-inf.mpg.de}}
\affil[ ]{}
\affil[2]{Goethe University Frankfurt, Institute for Computer Science.}
\affil[ ]{
\texttt{\{mhoefer, varricchio, schmalhofer\}@em.uni-frankfurt.de}}
\affil[ ]{}
\affil[3]{\'Ecole Normale Sup\'erieure, Paris.}
\affil[ ]{ \texttt{\{quentin.vermande, ernest.van.wijland\}@ens.fr}}
\newcommand{\set}[1]{\{#1\}}
\newcommand{\transGraph}[2]{G_{#1\rightarrow#2}}
\newcommand{\modulus}[1]{\vert #1 \vert}
\newcommand{\goods}{\mathcal{G}}
\newcommand{\agents}{\mathcal{N}}
\newcommand{\NSW}{\text{NSW}}
\newcommand{\instance}{\mathcal{I}}
\newcommand{\NN}{\ensuremath{\mathbb{N}}}
\newcommand{\RR}{\ensuremath{\mathbb{R}}}
\newcommand{\QQ}{\ensuremath{\mathbb{Q}}}
\newcommand{\classNP}{\textsf{NP}}
\newcommand{\classAPX}{\textsf{APX}}
\newcommand{\calA}{{\cal A}}
\newcommand{\calG}{{\cal G}}
\newcommand{\calN}{{\cal N}}
\newcommand{\SB}{\textsf{SB}}
\newcommand{\BB}{\textsf{BB}}
\renewcommand{\SS}{\textsf{SS}}
\newcommand{\BS}{\textsf{BS}}
\newcommand{\Balance}{\textsc{Balance}}
\newcommand{\TwoValueApprox}{\textsc{TwoValueApprox}}
\newtheorem{proposition}{Proposition}
\newtheorem{theorem}{Theorem}
\newtheorem{example}{Example}
\newtheorem{lemma}{Lemma}
\newtheorem{corollary}{Corollary}
\newtheorem{definition}{Definition}
\newtheorem*{claim*}{Claim}
\crefname{algocf}{alg.}{algs.}
\Crefname{algocf}{Algorithm}{Algorithms}
\begin{document}
\maketitle

\begin{abstract}
	We consider the problem of maximizing the Nash social welfare when allocating a set $\calG$ of indivisible goods to a set $\calN$ of agents. We study instances, in which all agents have 2-value additive valuations: The value of every agent $i \in \calN$ for every good $j \in \calG$ is $v_{ij} \in \{p,q\}$, for $p,q \in \NN$, $p \le q$. Maybe surprisingly, we design an algorithm to compute an optimal allocation in polynomial time if $p$ divides $q$, i.e., when $p=1$ and $q \in \NN$ after appropriate scaling. The problem is \classNP-hard whenever $p$ and $q$ are coprime and $p \ge 3$. 
	
	In terms of approximation, we present positive and negative results for general $p$ and $q$. We show that our algorithm obtains an approximation ratio of at most 1.0345. Moreover, we prove that the problem is \classAPX-hard, with a lower bound of $1.000015$ achieved at $p/q = 4/5$.
\end{abstract}


\section{Introduction}

Fair division is an important area at the intersection of economics and computer science. While fair division with divisible goods is relatively well-understood in many contexts, the case of \emph{indivisible} goods is significantly more challenging. Recent work in fair division has started to examine extensions of standard fairness concepts such as envy-freeness to notions such as EF1 (envy-free up to one good)~\cite{LiptonMMS04} or EFX (envy-free up to any good)~\cite{CaragiannisKMPSW16}, most prominently in the case of non-negative, additive valuations of the agents. In this additive domain, notions of envy-freeness are closely related to the \emph{Nash social welfare (NSW)}, which is defined by the geometric mean of the valuations. An allocation maximizing the Nash social welfare is Pareto-optimal, satisfies EF1~\cite{CaragiannisKMPSW16} and in some cases even EFX~\cite{AmanatidisBFHV20}. An important question is, thus, if we can efficiently compute or approximate an allocation that maximizes \NSW. This is the question we study in this paper.

More formally, we consider an allocation problem with a set $\calN$ of $n$ agents and a set $\cal G$ of $m$ indivisible goods. Each agent $i \in \calN$ has a valuation function $v_i : 2^{\calG} \to \QQ_{\ge 0}$. We assume all functions to be non-negative, non-decreasing, and normalized to $v_i(\emptyset) = 0$. For convenience, we assume every $v_i$ maps into the rational numbers, since for computation these functions are part of the input. The goal is to find an allocation of the goods $\calA = (A_1,\ldots,A_n)$ to maximize the Nash social welfare, given by the geometric mean of the valuations
\[
	\NSW(\calA) = \left(\prod_{i=1}^n v_i(A_i) \right)^{1/n} \enspace.
\]
Clearly, an allocation that maximizes the \NSW\ is Pareto-optimal. By maximizing the \NSW, we strike a balance between maximizing the sum-social welfare $\sum_i v_i(A_i)$ and the egalitarian social welfare $\min_i v_i(A_i)$. Notably, optimality and approximation ratio for \NSW\ are invariant to scaling each valuation $v_i(A_i)$ by an agent-specific parameter $c_i > 0$. This is yet another property that makes \NSW\ an attractive objective function for allocation problems. It allows a further normalization -- we can assume every $v_i :  2^{\calG} \to \NN_0$ maps into the \emph{natural} numbers. 

Maybe surprisingly, finding desirable approximation algorithms for maximizing the \NSW\ has recently become an active field of research. For instances with additive valuations, where $v_i(A) = \sum_{j \in A} v_{ij}$ for every $i \in \calN$, in a series of papers~\cite{ColeDGJMVY17,ColeG18,AnariGSS17,BarmanKV18} several algorithms with small constant approximation factors were obtained. The currently best factor is $e^{1/e} \approx 1.445$~\cite{BarmanKV18}. The algorithm uses prices and techniques inspired by competitive equilibria, along with suitable rounding of valuations to guarantee polynomial running time.

Even for identical additive valuations, the problem is \classNP-hard, and a greedy algorithm with factor of 1.061~\cite{BarmanKV18AAMAS} as well as a PTAS~\cite{NguyenR14} were obtained. In terms of inapproximability, the best known lower bound for additive valuations is $\sqrt{8/7}\approx 1.069$~\cite{GargHM18}. Notably, this lower bound applies even in the case when the additive valuation is composed of only three values with one of them being 0 (i.e., $v_{ij} \in \{0,p,q\}$ for all $i \in \calN$, $j \in \cal G$, where $p,q \in \NN$). For the case of two values with one $0$ and one positive value (i.e., $v_{ij} \in \{0,q\}$  for all $i \in \calN$, $j \in \cal G$), an allocation maximizing the NSW can be computed in polynomial time~\cite{BarmanKV18AAMAS}. 

\paragraph{Contribution and Results.}
In this paper, we consider computing allocations with (near-)optimal \NSW\ when every agent has a \emph{2-value} valuation. In such an instance, $v_{ij} \in \{p,q\}$ for every $i \in \calN$ and $j \in \calG$, where $p,q \in \NN_{0}$. Notably, in 2-value instances any optimal allocation satisfies EFX, which is not true when agents have 3 or more values~\cite{AmanatidisBFHV20}. The case $p = q$ is trivial. An optimal allocation can be computed in polynomial time when $p = 0 < q$~\cite{BarmanKV18AAMAS}. Hence, we concentrate on the case $1 \le p < q$. Maybe surprisingly, we design a polynomial-time algorithm to find an optimal allocation when $p$ divides $q$, i.e., after appropriate scaling, when $p = 1$ and $q \in \NN$. Even if $p$ does not divide $q$, the algorithm still guarantees an approximation factor of at most $1.035$. This is drastically lower than the constant factors obtained for general additive valuations~\cite{ColeDGJMVY17,ColeG18,BarmanKV18}. An approximation algorithm for 2-value instances with approximation factor $1.061$ has been obtained in~\cite{GargM21}. The algorithm is based on ideas from competitive equilibria. Our algorithm is a greedy procedure and improves this guarantee.

Complementing these positive results, we also prove new hardness results for 2-value instances. Maximizing the \NSW\ is \classNP-hard whenever $p$ and $q$ are coprime and $p \ge 3$. Since for $p=1$ we have a polynomial-time algorithm, $p=2$ remains as an interesting open problem. Maximizing the \NSW\ in 2-value instances can even be \classAPX-hard. Our reduction from Gap-4D-Matching avoids the use of utilities $v_{ij} = 0$, which poses a substantial technical challenge over the more direct reduction for 3-value instances in~\cite{GargHM18}. Our lower bound on the approximation factor is $1.000015$ for $p/q = 4/5$. This answers an open problem from~\cite{AmanatidisBFHV20}.

\subsection{Related Work}

In addition to additive valuations, the design of approximation algorithms for maximizing \NSW\ with \emph{submodular valuations} has been subject to significant progress very recently. 
While small constant approximation factors have been obtained for special cases~\cite{GargHM18,AnariMGV18} (such as a factor $e^{1/e}$ for capped additive-separable concave~\cite{ChaudhuryCGGHM18} valuations), 
(rather high) constants for the approximation of \NSW\ with Rado valuations~\cite{GargHV21} and also general non-negative, non-decreasing submodular valuations~\cite{LiV21} have been obtained.

Interestingly, for \emph{dichotomous} submodular valuations where the marginal valuation of every agent for every good $j$ has only one possible non-negative value (i.e., $v_i(S \cup \{j\}) - v(S) \in \{0,p\}$ for $p \in \NN$), an allocation maximizing the \NSW\ can be computed in polynomial time~\cite{BabaioffEF21}. 
In particular,
in this case one can find in polynomial time an allocation that is Lorenz dominating, and simultaneously minimizes the lexicographic vector of valuations,
and maximizes both sum social welfare and Nash social welfare.
Moreover, this allocation also has favorable incentive properties in terms of misreporting of agents.

More generally, approximation algorithms for maximizing \NSW\ with subadditive valuations~\cite{BarmanBKS20,ChaudhuryGM21} and asymmetric agents~\cite{GargKK20} have been obtained, albeit thus far not with constant approximation ratios.

\section{Preliminaries}

An instance $\instance$ is given by the triple $(\agents, \goods, \set{v_i}_{i\in \agents})$ where $\agents$ is a set of $n$ agents and  $\goods$ is a set of $m\geq n$ indivisible goods. Every agent $i\in \agents$ has an additive valuation function with $v_i(A) = \sum_{j \in A} v_{ij}$ for every $A \subseteq \goods$. Here $v_{ij}$ represents the value $i$ assigns to the good $j \in \goods$. We assume that all $v_{ij} \ge 0$. In this paper, we study \emph{2-value} additive valuations, in which $v_{ij} \in \set{p,q}$ for $p,q \in \NN_0$. To avoid trivialities, we assume $0 < p < q$. Note that for $p = 0$ we recover the dichotomous case studied in~\cite{BarmanKV18AAMAS,BabaioffEF21}. We scale down the valuation of every agent by $q$ such that $v_{ij} \in \set{v,1}$ where $0 < v = p/q < 1$. Moreover, throughout the paper we assume $p$ and $q$ are coprime.

An allocation $A=(A_1, \dots, A_n)$ is a partition of $\goods$ among the agents, where $A_i \cap A_j = \emptyset$, for each $i \neq j$, and $\bigcup_{i\in \agents} A_i = \goods$. We evaluate an allocation using the Nash social welfare $\NSW(A) = \left(\prod _{i\in N}v_i(A_i)\right)^{\frac{1}{n}}.$

We represent every valuation by distinguishing between \emph{big} and \emph{small} goods for that agent. We use sets $B_i = \{ j \mid v_{ij} = 1\}$ and $S_i = \goods \setminus B_i$ to denote the subsets of goods that agent $i$ considers as big and small, respectively. Globally, we use $B = \bigcup_i B_i$ and $S = \bigcap_i S_i = \goods \setminus B$ for the sets of goods that are big for at least one agent or small for all agents, respectively. As such, an instance $\instance$ with 2-value additive valuations can be fully described by the vector $(\agents, \goods, (B_i)_{i \in \agents}, v)$. 

Of particular interest will be \emph{non-wasteful} allocations (c.f.~\cite{BabaioffEF21}), in which we only assign the goods from $B$ and give them to agents that value them as big goods. Formally, a non-wasteful allocation $A^b =(A^b_1, \dots, A^b_n)$ has $A^b_i \subseteq B_i$ and $\bigcup A_i^b = B$.

\paragraph{Comparing Optimal Allocations.} 
In our analysis, we often compare optimal allocations for 2-value valuations to optimal allocations for the same 2-value valuations with $v$ replaced by 0. Given a fair allocation instance $\instance$, we denote by $O^*$ an optimal allocation and $\NSW(O^*)$ the NSW of an optimal allocation. Similarly, for any $\instance$ we consider a corresponding \emph{dichotomous} instance $\instance^{(d)} = (\agents, B, \{v^{(d)}_i\}_{i \in \agents})$ obtained by setting $v^{(d)}_{ij} = 0$ for all $i \in \agents$, $j \in S_i$. We use $O$ to denote an optimal allocation in the dichotomous instance $\instance^{(d)}$. In particular, if we cannot assign a big good to every agent in $\instance^{(d)}$, we assume $O$ assigns a big good to as many agents as possible, and it maximizes the NSW among this set of agents. Note that we assume the goods in $S$ that are small for all agents are never assigned in $O$, and as such we exclude them from consideration in $\instance^{(d)}$. Clearly, $O$ will be a non-wasteful allocation.

We denote by $b_i = \modulus{B_i \cap O_i}$ and by $b^*_i = \modulus{B_i \cap O^*_i}$ the number of big goods agent $i$ is receiving in $O$ and $O^*$, respectively. Also, $b$ and $b^*$ are used to represent the vectors of $b_i$ and $b_i^*$, respectively.

In $O^*$, by Pareto-optimality, each good must be assigned to an agent. However, for any 
agent $i$, the set of big goods in $O_i$ might not be a superset of the set of big goods in $O_i^*$.

\begin{example} \label{ex:OPTvsOPTstar} \rm
	Let $\instance$ be a fair allocation instance with $n=2$ agents, $m=5$ goods, and $v=2/3$. All agents have identical valuations. There are two big goods and three small goods. Then only two optimal allocations exist (with $\NSW=2$) obtained by assigning all big goods to one agent and all small goods to the other. However, for $v=0$, every optimal allocation assigns each agent one big good. 
\end{example}

In general, there is no simple direct connection between $O$ and $O^*$, not even between vectors $b$ and $b^*$. In order to simplify our proofs, we will assume that $\agents$ is numbered in non-increasing order of $b_i$'s and subject to that in non-increasing order of $b^*_i$'s, i.e., for $i, j \in \agents$, if $b_i < b_j$, or $b_i = b_j$ and $b^*_i < b^*_j$, then $j < i$. There can be many optimal solutions $O^*$. For a rigorous reasoning we pick $O^*$ based on a hierarchy of three criteria based on $O$: (1) $O^*$ maximizes the NSW (i.e., it is \emph{optimal}); among all these solutions it (2) maximizes the overlap in big goods $\sum_{i \in \agents} |O_i \cap O_i^*|$ (i.e., it is \emph{sum-closest} to $O$); among all these solutions it (3) maximizes lexicographically $(\modulus{O_i \cap O^*_i})_{i \in \agents}$ (i.e., it is \emph{sum-lex-closest} to $O$). Condition (3) is tied to the ordering of the agents, for which the tie-breaking in turn depends on $O^*$. Tie-breaking and lexicographic maximization allow a consistent choice of $O^*$, since both aim to maximize the number of big goods in $O^*$ for agents with small index.

Given this choice of $O^*$, we capture the relation to $O$ in a more structured fashion using the notion of a \emph{transformation graph}.

Let $A$ and $A'$ be two possible allocations. We denote by $\transGraph{A}{A'}$ the {\em transformation graph} from allocation $A$ to allocation $A'$. More formally, $\transGraph{A}{A'} = (\agents, E_{A \to A'})$ is a directed multigraph, where $\agents$ is the set of the vertices. Each edge $e = (i,j) \in E_{A \to A'}$ corresponds to some good $k \in A_i \cap A'_j$ and vice versa. We use the notation $g(e) = k$. Observe that $\transGraph{A'}{A}$ can be obtained by simply reversing all the directed edges in $\transGraph{A}{A'}$.

A path in $\transGraph{A}{A'}$ can be seen as a sequence of goods $(g(e_1), g(e_2), \dots , g(e_{k-1}))$ such that $e_j = (i_j, i_{j+1})$ and $g(e_j) \in A_{i_j} \cap A'_{i_{j+1}}$ for all $j=1,\dots,k-1$. We say we {\em trade (goods along) a path} if we remove $g(e_j) $ from $A_{i_j} $ and add it in $A_{i_{j+1}}$, for each $j=1, \dots, k-1$. Moreover, we say that a path is a {\em balancing path} if after trade the utilities of the interior agents remain unchanged, i.e., $v_{i_j}(g(e_j)) = v_{i_j}(g(e_{j-1}))$, for each $j = 2, \dots, k-1$. Observe that every edge in the transformation graph is a balancing path; moreover, every path contained in a balancing path is a balancing path as well.

In general, there exist four types of balancing paths.
A \emph{small-to-big} or \emph{\SB-balancing path} is a balancing path  $(g(e_1), g(e_2), \dots , g(e_{k-1}))$, where $g(e_1)\in S_{i_1}$ and $g(e_{k-1}) \in B_{i_k}$. {\BS}/{\SS}/{\BB}-balancing paths are defined accordingly. Finally, we will briefly pay attention to \BB-balancing paths starting and ending at the same agent -- we term them \emph{balancing cycles} (and omit the prefix \BB, since clear from context).

\paragraph{Preliminaries on $O$, $O^*$, and $\transGraph{O^*}{O}$.}
Given the pair of allocations $O$ and $O^*$ with vectors $b$ and $b^*$ for the numbers of big goods, the next lemmas reveal some interesting structure of $\transGraph{O^*}{O}$. Notice that by the properties of $O$ and $O^*$ described above, the graph $\transGraph{O^*}{O}$ neither has \SS- nor \BS-balancing paths. Moreover, it has no balancing cycles, since $O^*$ is optimal and sum-closest to $O$. We are particularly interested in all agents, for which the number of big goods assigned in $O$ and $O^*$ differ. These agents are inherently connected to each other in the transformation graph.

\begin{lemma}\label{lemma:reachableAgentMoreBigGoods}
For every agent $i$ with $b_i^* > b_i$ there is an agent $j$ with $b_j^*< b_j$ such that in $\transGraph{O^*}{O}$ there is a \BB-balancing path from $i$ to $j$.
\end{lemma}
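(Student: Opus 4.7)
The plan is to identify a sub-multigraph $H$ of $\transGraph{O^*}{O}$ whose directed walks correspond exactly to the \BB-balancing paths, and then run a degree-counting argument on the set of agents reachable from $i$ in $H$. Concretely, I would let $H$ contain every edge $e=(i',j')$ of $\transGraph{O^*}{O}$ whose underlying good $g(e)$ lies in $B_{i'}$, i.e.\ is big for its $O^*$-owner. Because $O$ is non-wasteful, every good in $O_{j'}$ automatically lies in $B_{j'}$, so the underlying good of every edge of $H$ is in fact big for \emph{both} endpoints. From the balancing condition it then follows that every directed walk $i=i_1\to\dots\to i_k=j$ in $H$ yields a \BB-balancing path from $i$ to $j$ in $\transGraph{O^*}{O}$, and conversely every \BB-balancing path uses only edges of $H$.

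Next I would record two elementary degree counts in $H$. For any agent $r$, the out-degree of $r$ in $H$ equals $|B_r\cap O^*_r|=b_r^*$, since each big good $r$ holds in $O^*$ contributes exactly one out-edge; and the in-degree of $r$ in $H$ is at most $|O_r|=b_r$, since each in-edge is charged to a distinct good of $O_r$ and $|O_r|=b_r$ by non-wastefulness of $O$.

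The main step is then a reachability argument. Let $R$ be the set of agents reachable from $i$ by directed walks in $H$; by construction $i\in R$ and $R$ is closed under out-edges of $H$. Hence the total out-degree of $R$ in $H$ equals the number of $H$-edges with both endpoints in $R$, which is at most the total in-degree into $R$, yielding
\[
\sum_{r\in R} b_r^* \;\le\; \sum_{r\in R} b_r.
\]
Since $i\in R$ already contributes $b_i^*>b_i$ strictly to the left-hand side, some $j\in R$ must satisfy $b_j^*<b_j$, and by definition of $R$ a directed walk from $i$ to $j$ in $H$ exists, providing the desired \BB-balancing path (automatically $j\neq i$, since $b_i^*>b_i$). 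I do not foresee a serious obstacle; the one point to check is that self-loops of $H$ (goods $r$ keeps in both $O^*$ and $O$) contribute equally to the in- and out-degree of $r$ and therefore do not disturb the counting inequality.
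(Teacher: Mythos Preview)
Your proposal is correct and follows essentially the same approach as the paper: both arguments close off the set of agents reachable from $i$ via \BB-balancing paths and apply a degree-counting inequality to force some $j$ in this set with $b_j^*<b_j$. Your version is slightly more explicit (you isolate the subgraph $H$ of ``big'' edges up front and observe that the out-degree in $H$ is exactly $b_r^*$, including self-loops), whereas the paper phrases the same count as a contradiction using the inequality $b_j^*\le b_j+\delta_j^+-\delta_j^-$ within the induced subgraph; conceptually the two arguments are the same.
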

\begin{proof}
	Assume for contradiction that it is not true. Then for every agent $j$ that is reachable from $i$ by a {\sf BB}-balancing path it holds that $b_j \leq  b^*_j$. We denote by $C$ the set of these reachable agents, $i$ included. Let $\delta_j^{-}$ and $\delta_j^{+}$ be the in/out-degree of $j$, respectively, in the subgraph of $\transGraph{O^*}{O}$ induced by $C$. Observe that $\delta_j^{+}$ is at least the number of big goods given away by $j$ when comparing $O^*$ to $O$. Similarly, $\delta_j^{-}$ is at most the number of big goods agent $j$ receives in this comparison. Thus, $b^*_j \leq b_j + \delta_j^{+} -\delta_j^{-}$ for each $j\in C$. By assumption $b_j \leq b^*_j$ for each $j\in C\setminus \set{i}$, we see $\delta_j^{+} -\delta_j^{-} \geq 0$. Since $\sum_{j\in C} \delta_j^{+} = \sum_{j\in C} \delta_j^{-} $, we have $b^*_i - b_i \leq \delta_i^{+} - \delta_i^{-} \leq 0$ -- a contradiction to $b_i  < b^*_i$. Hence, a reachable agent $j \in C$ with $b_j^* < b_j$ must exist.
\end{proof}

\begin{lemma}\label{lemma:reachableAgentLessBigGoods}
	For every agent $j$ with $b_j^* < b_j$ there is an agent~$i$ 
	\begin{enumerate}
		\item \label{SBpath} such that in $\transGraph{O^*}{O}$ there is an \SB-balancing path from $i$ to $j$, or
		\item \label{BBpath} with $b_i^* > b_i$ such that in $\transGraph{O^*}{O}$ there is a \BB-balancing path from $i$ to $j$.
	\end{enumerate}
\end{lemma}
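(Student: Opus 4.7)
The plan is to mirror the degree-counting argument of Lemma~1 but run it ``backwards'' from $j$, with a careful separation of \BB- versus \SB-balancing paths. The crucial structural observation I would establish first is that, because $O$ is non-wasteful, every good carried by an edge of $\transGraph{O^*}{O}$ is big for its target. Combined with the balancing condition, this forces the outgoing good at every interior agent of a balancing path to be big for that agent as well. Hence the \BB/\SB classification is determined purely by whether the \emph{initial} good is big or small for the starting agent; ``big for the end-agent'' is automatic.

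Define $C := \{k \in \agents : \text{there is a } \BB\text{-balancing path from } k \text{ to } j \text{ in } \transGraph{O^*}{O}\}$, which includes $j$ itself via the trivial empty path. Assume toward contradiction that the lemma fails, so that (a) no agent has an \SB-balancing path to $j$, and (b) every $k \in C$ satisfies $b^*_k \le b_k$. I would then show that every in-edge $(u, k)$ of an agent $k \in C$ starts inside $C$ and has its good big for $u$. Prepending such an edge to the \BB-path from $k$ to $j$ preserves balance at $k$ (the incoming good is big for $k$ by non-wastefulness, the outgoing good is big for $k$ since the original path is \BB-starting at $k$); if the good is big for $u$ the extended path is a \BB-path, giving $u \in C$, and if the good is small for $u$ the extended path is an \SB-path to $j$, contradicting (a).

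In the subgraph of $\transGraph{O^*}{O}$ induced by $C$, the claim above gives, for each $k \in C$, an in-degree of $|O_k \setminus O^*_k| = b_k - |O_k \cap O^*_k|$ (using $O_k \subseteq B_k$). For out-degrees, the same \SB-prepending argument shows that any small-source out-edge from $k \in C$ must terminate outside $C$, so the out-degree within $C$ is at most the big-source out-degree of $k$, namely $|B_k \cap O^*_k \setminus O_k| = b^*_k - |O_k \cap O^*_k|$. Equating total in- and out-degrees of the induced subgraph and summing over $k \in C$ yields
\[
\sum_{k \in C}(b_k - |O_k \cap O^*_k|) \;=\; \sum_{k \in C} \delta_k^{-} \;=\; \sum_{k \in C} \delta_k^{+} \;\le\; \sum_{k \in C}(b^*_k - |O_k \cap O^*_k|),
\]
so $\sum_{k \in C} b_k \le \sum_{k \in C} b^*_k$. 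Combined with (b) this forces $b_k = b^*_k$ for every $k \in C$, contradicting $j \in C$ with $b^*_j < b_j$. The main obstacle I expect is the preliminary structural observation: recognizing that non-wastefulness of $O$ makes ``big for target'' automatic on every edge of $\transGraph{O^*}{O}$, so a balancing path is classified \BB or \SB only by its first edge. This is what allows Lemma~1's forward-reachability argument to be transposed cleanly to a backward-reachability argument terminating at $j$ without mixing the two path types.
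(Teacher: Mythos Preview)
Your proof is correct and follows essentially the same degree-counting approach as the paper: both argue on the set of agents that can reach $j$ via balancing paths and exploit that non-wastefulness of $O$ makes every edge-good big for its target (the paper phrases this as ``$\transGraph{O^*}{O}$ has no \SS- or \BS-balancing paths''). The only cosmetic difference is that the paper takes $C$ to be agents reaching $j$ via \emph{any} balancing path and then case-splits on whether an \SB-edge lies inside $C$, whereas you restrict $C$ to \BB-reachability from the start and fold the \SB-edge case into your ``every in-edge of $C$ starts in $C$'' claim---a slightly more streamlined packaging of the same argument.
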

\begin{proof}
We denote by $C$ the set of agents that can reach $j$ through a balancing path, $j$ included. Let $\delta_i^{-}$ and $\delta_i^{+}$ be the in/out-degree of agent $i$ in the subgraph induced by $C$. We show that if in the induced subgraph there is no \SB-balancing path ending in $j$, then there must exist agent $i \in C$ such that $b_i^* > b_i$. Then, since $i \in C$ and $\transGraph{O^*}{O}$ neither has \SS- nor \BS-balancing paths, there must be a \BB-balancing path in $\transGraph{O^*}{O}$.

First, suppose there is at least one ``\SB-edge'' among agents in $C$, i.e., a pair $(\ell,\ell') \in E_{O^* \to O}$ with $\ell, \ell' \in C$, $g(\ell,\ell') \in S_{\ell}$, and $g(\ell,\ell') \in B_{\ell'}$. Clearly, by construction there is an \SB- or \BB-balancing path from $\ell'$ to $j$. If it is a \BB-balancing path, by extending it with edge $(\ell,\ell')$ we obtain an \SB-balancing path from $\ell$ to $j$. Hence, property (1) of the lemma holds.

Now suppose there is no ``\SB-edge'' among agents in $C$. Then there can be no \SB-balancing path in the subgraph induced by $C$. For each $i\in C$, $\delta_i^{+}$ is at most the number of big goods given away by $i$ when comparing $O^*$ to $O$. Now in $O$ no small good is assigned. Hence, by the construction of $C$, we see $\delta_i^{-}$ exactly corresponds to the number of big goods agent $i$ receives when changing $O^*$ to $O$, for each $i\in C$. Similar to the proof of \Cref{lemma:reachableAgentMoreBigGoods}, the connection between degrees and the vectors $b$ and $b^*$ implies property (2) of the lemma.
\end{proof}

\section{An Optimal Algorithm when $p$ Divides $q$}

Consider algorithm \TwoValueApprox. In phase 1, it computes $O$, the optimal allocation in the corresponding dichotomous instance $\instance^{(d)}$. This can be done in polynomial time~\cite{BarmanKV18AAMAS}. Note that after phase 1, there can be agents with empty bundles. Then we assume $O$ maximizes the number of agents receiving at least one good. Moreover, restricting attention to the set of agents with nonempty bundles, $O$ maximizes the NSW among them. It is easy to see that an allocation $O$ with this property is computed both by the algorithm for dichotomous additive instances in~\cite{BarmanKV18AAMAS} and its' generalization to dichotomous submodular ones in~\cite{BabaioffEF21}.

For phases 2 and 3, the algorithm calls procedure \Balance. In phase 2, if there exist unassigned goods (i.e., goods that are small for all agents), they get assigned sequentially to an agent with the currently smallest valuation. 

Finally, in phase 3, big goods received by the agents may be reallocated and turned into small ones. In particular, we greedily move a big good from the agent with the highest valuation to an agent with the smallest valuation if and only if this move increases the \NSW.  
\begin{algorithm}[t]
	\SetNoFillComment
	\KwIn{A fair allocation instance $\instance = (\agents ,\goods, (B_i)_{i\in \agents}, v)$}
	\KwOut{An allocation $A=(A_1, \dots, A_n)$}
	\tcc{Phase 1: Find optimal allocation for dichotomous instance} 
	Compute an optimal allocation $O = (O_1, \dots, O_n)$ for $\instance^{(d)}$ for goods in $B$\\
	\tcc{Phases 2 and 3} 
	$A \leftarrow$ \textsc{Balance}$(\instance,O)$ \\
	\KwRet{A}\\
	\caption{Algorithm \TwoValueApprox \label{Algorithm}}	
\end{algorithm}
\begin{algorithm}[t]
	\SetNoFillComment
	\KwIn{A fair allocation instance $\instance = (\agents, \goods, (B_i)_{i\in \agents}, v)$ and a non-wasteful allocation $A^b =(A^b_1, \dots, A^b_n)$ (i.e. with $A^b_i \subseteq B_i$ and $\bigcup A_i^b = B$)}
	\KwOut{An allocation $A= (A_1, \dots, A_n)$ of all goods in $\goods$}
	\tcc{Phase 2: Adding only-small valued goods} 
	Let $A_i = A^b_i$ for all $i \in \agents$\;
	\While{there exists $g\in S$}{
	$i= \arg\min_j {v_j(A_j)}$\\
	$A_i \leftarrow A_i \cup \set{g}$ and $S \leftarrow S\setminus \set{g}$ 
	}
	\tcc{Phase 3: Local search} 
	$i_1 = \arg\max_j{v_j(A_j)}$ and $i_2 = \arg\min_j{v_j(A_j)}$ \\
	\While{moving a good $g\in A_{i_1}$ to $A_{i_2}$ strictly increases $\NSW(A)$}{
		$A_{i_1} \leftarrow  A_{i_1} \setminus \set{g}$ and $ A_{i_2} \leftarrow  A_{i_2} \cup \set{g}$\\
		$i_1 = \arg\max_j{v_j(A_j)}$ and $i_2 = \arg\min_j{v_j(A_j)}$
	}
	\KwRet{A}\\
	\caption{Algorithm \Balance \label{Balance}}
\end{algorithm}

\paragraph{Running Time.}
To bound the running time, we start by proving a lemma about properties of phases 2 and 3 of the algorithm. We denote by $i_1^t$ and $i_2^t$ the agents $i_1$ and $i_2$ in round $t$ of phase 3.

\begin{lemma}
	\label{lem:propertiesAlgo}
	The following properties hold during the execution of \Balance$(\instance,O)$:
	\begin{itemize}
	    \item Every agent $i$ with small goods has a valuation of at most $v_i(A_i) \le \min_{j \in \agents} v_j(A_j) + v$.
	    \item If a move in round $t$ of phase 3 strictly increases the NSW, then (1) $i_1^t$ only has big goods, (2) we never moved a good away from agent $i_2^t$ during earlier rounds $1,\ldots,t-1$ of phase 3, and (3) none of the goods $g \in A_{i_1^t}$ is big for $i_2^t$.
    \end{itemize}
\end{lemma}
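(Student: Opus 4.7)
The plan is to prove the two bullets by induction on rounds: first establishing the invariant at the end of phase~2 directly, then maintaining it together with (1), (2), (3) by induction on the round index $t$ in phase~3.

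For phase~2, consider any agent $i$ holding a good in $S$, and let $t_i$ denote $i$'s last phase-2 receipt. At round $t_i$ agent $i$ was the argmin, so $v_i(A_i)$ immediately after round $t_i$ equals the then-minimum plus $v$. Since each phase-2 step only increments one valuation, the minimum across agents is non-decreasing, which yields the invariant at the end of phase~2.

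For phase~3 I would induct on $t$, assuming the invariant holds at the start of round $t$ and that (1), (2), (3) have held at every earlier round in which a move occurred; suppose a move happens at round $t$. For (1), I would enumerate the four possibilities for a candidate moved good $h$ (big/small for $i_1^t$ crossed with big/small for $i_2^t$); each case gives a strict lower bound on $v_{i_1^t}$ for NSW-improvement. If $i_1^t$ held a small-for-itself good, the invariant gives $v_{i_1^t} \le v_{i_2^t} + v$, directly contradicting the improvement conditions in cases A (big-big), B (big-small), and D (small-small). Case~C would require $h \in S_{i_1^t} \cap B_{i_2^t}$ and hence $h \in B \setminus S$; such $h$ cannot arrive via phases~1 or 2, so it must have been moved in phase~3, forcing $i_1^t = i_2^s$ for some $s<t$ with $h \in B_{i_1^s}$, and tracing this leads back to a configuration in which $i_2^t$ was moved from earlier, contradicting (2) at round $t$. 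For (3), combine (1) with the dichotomous optimality of $O$: any $g \in A_{i_1^t} \cap B_{i_2^t}$ must by (1) lie in $B_{i_1^t}$ and must originate from $O_{i_1^t}$ (since phase-3 case-B moves only add small-for-receiver goods); using (1) to get $v_{i_1^t} \le b_{i_1^t}$ and (2) to get $v_{i_2^t} \ge b_{i_2^t}$, the case-A improvement condition $v_{i_1^t} > v_{i_2^t}+1$ would force $b_{i_1^t} \ge b_{i_2^t}+2$, contradicting the optimality of $O$ (swapping $g$ would strictly improve the dichotomous NSW). Invariant maintenance after a case-B move is immediate: the new $v_{i_2^t}$ equals the old minimum plus $v$ and the minimum is non-decreasing (a direct consequence of the case-B condition), while all other valuations are unchanged, and $i_1^t$ is excluded from the invariant's scope by (1).

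The main obstacle is (2): showing the current argmin was never the earlier argmax. I expect the argument to leverage the strict case-B condition $v_{i_1^s} > v_{i_2^s}/v + 1$, which implies that after any case-B move the former argmax stays strictly above the contemporaneous minimum, combined with monotonicity of min (non-decreasing) and max (non-increasing) in phase~3. Pinning down precisely why an agent, once having given a big good, cannot subsequently drop to become the new argmin -- and in particular ruling out the delicate boundary case where intervening receipts of small goods could pull its valuation back down to the moving minimum -- is the most subtle step, and I expect it to require a careful case analysis on the intervening rounds, perhaps supplemented by an exchange argument contradicting the strict NSW improvement at round $r$ or $t$.
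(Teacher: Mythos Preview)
Your inductive scaffolding and the phase-2 argument match the paper's. The main gap is (2), which you flag as ``the most subtle step'' and leave as a sketch; in fact it is the cleanest of the three once you apply the inductive hypothesis in the right place.

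Here is the missing idea. Suppose $i_2^t$ was a sender in some earlier round, and let $t'$ be the \emph{latest} such round. Apply the inductive hypothesis for (2) at each round $r\in\{t'+1,\ldots,t-1\}$: it says the receiver $i_2^r$ was never a sender before $r$, so $i_2^t$ (which \emph{was} a sender at $t'<r$) cannot be $i_2^r$. Hence $i_2^t$ neither sends nor receives between $t'$ and $t$, and its valuation at the start of round $t$ equals the maximum at round $t'$ minus one. Since the running maximum is non-increasing across phase-3 rounds, $v_{i_1^t}(A_{i_1^t}) \le v_{i_2^t}(A_{i_2^t}) + 1$. By (1) at round $t$ (already established) the candidate good is big for $i_1^t$, so a strict NSW improvement would force $v_{i_1^t}(A_{i_1^t}) > v_{i_2^t}(A_{i_2^t}) + 1$, a contradiction. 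No delicate boundary case or auxiliary exchange argument is needed; the whole point is to invoke (2) at the \emph{intermediate} rounds, not only at round $t$.

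Two secondary remarks. First, your four-case treatment of (1) is more detailed than the paper's one-liner, and your Case~C (a good small for $i_1^t$ but big for $i_2^t$) is a genuine subtlety the paper elides; however, your resolution of Case~C (``contradicting (2) at round $t$'') is circular, since (2) at round $t$ is still pending at that point in your argument. Second, for (3) your approach is sound but heavier than the paper's, which simply observes that by (2) the valuation of $i_2^t$ has only increased since the end of phase~1, so any big-for-$i_2^t$ good still held by $i_1^t$ could already have been reassigned in $O$ to improve it, contradicting optimality of $O$.
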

\begin{proof}
	We show the properties inductively. For the base case, we observe that they hold before the beginning of phase 3 (i.e., in a round 0 of phase 3). Since we allocated small goods greedily in phase 2, all agents $i,j$ with small goods in their bundle satisfy $v_i(A_i) \le \min_{j \in \agents} v_j(A_j) + v$. The properties (1)-(3) hold trivially before the beginning of phase 3, since the condition is not fulfilled and there are no earlier rounds. In other words, Properties (1)-(3) are empty for $t = 0$.
	%
	
	Now assume that all properties hold until round $t-1 \ge 0$ of phase 3 and consider round $t$. Clearly, when the hypothesis holds, the valuation of agents that have only big goods is decreasing, while the valuation of agents with small goods is increasing. Moreover, since the target agent has minimum valuation and receives a small good, all agents with small goods have valuation at most $\min_{j \in \agents} v_j(A_j) + v$. As a consequence, if $i_1^t$ in round $t$ has a small good, then moving any good is not profitable for the NSW, so we have (1). Suppose we moved a good from $i_2^t$ in an earlier round $t'$, i.e., $i_2^t = i_1^{t'}$. By hypothesis for round $t'$, $i_1^{t'}$ was the agent with highest valuation and had only big goods in round $t'$; moreover, for rounds $t'+1,\ldots,t-1$, the valuation of $i_1^t$ and $i_2^t = i_1^{t'}$ at the beginning of round $t$ can differ by at most 1. Then a move of a good from $i_1^t$ to $i_2^t$ is not profitable, which proves (2). Finally, suppose $g \in A_{i_1^t}$ is big for $i_2^t$. Then, since $i_2^t$ never lost a good up to round $t$, the valuation of $i_2^t$ only increased since the end of phase 1. As such, it must have been possible to improve $O$, either in terms of the number of agents receiving big goods, or in terms of NSW of agents with non-empty bundles. This is a contradiction and shows (3).
\end{proof}

Algorithm \TwoValueApprox\ runs in polynomial time: Phase 1 runs in polynomial time~\cite{BarmanKV18AAMAS}, and Lemma~\ref{lem:propertiesAlgo} shows that \Balance$(\instance,O)$ (re-)allocates each good at most once.

\paragraph{Optimality.}
Let us now focus on the Nash social welfare of the final allocation. We show that the algorithm computes an optimal allocation when $p$ divides $q$, i.e., when $p=1$ and $q \in \NN$ (after scaling valuations). In this case, an integer number of small goods are exactly as valuable as a big one. This fact will be key to show the main result in this section.
\begin{theorem}\label{thm:algoOPTintegerCase}
	If $p = 1$ and $q \in \mathbb{N}$, then Algorithm~\TwoValueApprox\ computes an optimal allocation in polynomial time.
\end{theorem}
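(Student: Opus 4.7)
The plan is to show $\NSW(A) = \NSW(O^*)$, where $A$ is the output of \TwoValueApprox. The crucial leverage from $p = 1$ is that every valuation $v_i(A_i)$ lies in $\frac{1}{q}\NN_{0}$, so the NSW landscape has a discrete $1/q$ resolution. I expect the argument to be a hybrid optimality proof: characterize the structure of $A$ using \Cref{lem:propertiesAlgo} and the structure of $O^*$ using \Cref{lemma:reachableAgentMoreBigGoods,lemma:reachableAgentLessBigGoods}, then derive a contradiction from a hypothetical NSW gap.

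First, I would record the structural properties of $A$ at the end of phase~3 via \Cref{lem:propertiesAlgo}: every agent $i$ holding a small good satisfies $v_i(A_i) \le \min_j v_j(A_j) + 1/q$; and for $i_1^{\mathrm{f}} = \arg\max_i v_i(A_i)$ and $i_2^{\mathrm{f}} = \arg\min_i v_i(A_i)$ at termination, either $i_1^{\mathrm{f}}$ holds a small good (hence no profitable single-good move from $i_1^{\mathrm{f}}$ exists), or $i_1^{\mathrm{f}}$ holds only big goods, all of which are small for $i_2^{\mathrm{f}}$. In the latter case, expanding the failed improvement condition for moving one such good yields $v_{i_1^{\mathrm{f}}}(A_{i_1^{\mathrm{f}}}) \le q \cdot v_{i_2^{\mathrm{f}}}(A_{i_2^{\mathrm{f}}}) + 1$.

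Second, I would analyze $O^*$ relative to $O$ through the transformation graph $\transGraph{O^*}{O}$. By \Cref{lemma:reachableAgentMoreBigGoods,lemma:reachableAgentLessBigGoods}, every agent $i$ with $b_i^* > b_i$ admits a BB-balancing path in $\transGraph{O^*}{O}$ to some $j$ with $b_j^* < b_j$, and every such $j$ is reached by either an SB- or BB-balancing path. Since $O^*$ is optimal and sum-(lex-)closest to $O$, trading along any such path cannot strictly improve NSW; the BB case gives $v_i(O^*) \le v_j(O^*) + 1$, and the SB case gives $q \cdot v_i(O^*) \le v_j(O^*) + 1$. Combined with the $\frac{1}{q}\NN_{0}$-discreteness, these bounds pin down $O^*$'s valuation profile to be close to $O$'s.

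Third, for contradiction assume $\NSW(A) < \NSW(O^*)$. Both allocations sit on the grid $(\tfrac{1}{q}\NN_{0})^n$, so the strict NSW gap forces a quantified valuation discrepancy between some ``over-valued'' agent in $A$ (relative to $O^*$) and some ``under-valued'' one. Examining $\transGraph{A}{O^*}$, I would decompose the difference along SB- and BB-balancing paths and, exploiting that a big good is worth exactly $q$ small goods when $p = 1$, exhibit a single good $g \in A_{i_1^{\mathrm{f}}}$ whose reallocation to $i_2^{\mathrm{f}}$ strictly increases NSW, contradicting the termination condition of phase~3. The main obstacle is this last translation: a priori, multi-agent exchanges could improve NSW while no single max-to-min move does, and the argument requires pairing the tie-breaking for $O^*$ (sum-closest, then sum-lex-closest to $O$) with the $1/q$-discreteness to show that any remaining NSW gap is detectable by the single-good move considered by the algorithm.
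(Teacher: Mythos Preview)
Your plan has a genuine gap at the third step, and it also misses the specific way the paper exploits $1/v \in \NN$.

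The paper does not argue by comparing $A$ to $O^*$ directly. Instead it proves two structural propositions and then observes that $1/v \in \NN$ makes one of them applicable to $O$. \Cref{prop:bigGoodsInOpt} shows that whenever $b_i < b_i^*$, the \BB-balancing path guaranteed by \Cref{lemma:reachableAgentMoreBigGoods} leads to some $j$ with the strict double inequality
\[
v_i(O^*_i) - 1 + v\lfloor 1/v\rfloor \;<\; v_j(O^*_j) \;<\; v_i(O^*_i) - 1 + v\lceil 1/v\rceil.
\]
When $1/v \in \NN$ the two outer terms coincide, so no such $j$ can exist; hence $b_i \ge b_i^*$ for every $i$. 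This makes $O$ itself \emph{well-structured} in the sense of \Cref{def:well-structured}, and then \Cref{prop:sameNumberBigGoods} (a nontrivial tracking argument showing that \Balance, started from any well-structured allocation, removes exactly the right multiset of big goods and assigns the rest optimally via \Cref{prop:bestComplement}) finishes the proof. The integrality hypothesis is used once, sharply, to collapse the strict interval in \Cref{prop:bigGoodsInOpt}; it is not used as a generic ``$1/q$-grid'' argument.

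Your step~3, by contrast, proposes to look at $\transGraph{A}{O^*}$ and extract from a hypothetical \NSW\ gap a single profitable move from $i_1^{\mathrm f}$ to $i_2^{\mathrm f}$. You yourself flag the obstacle: the algorithm's termination condition is extremely local (one good, from the current max-valuation agent to the current min-valuation agent), whereas a general \NSW\ improvement might require a multi-edge exchange or involve agents other than the max/min pair. Discreteness of the valuations alone does not bridge this; indeed, valuations are equally discrete when $1/v \notin \NN$, yet there the algorithm is only approximate. What makes the $1/v \in \NN$ case special is precisely the conclusion $b_i \ge b_i^*$ for all $i$, which your outline never derives. Without it (or an equivalent statement ruling out agents that must \emph{gain} big goods in the optimum), the decomposition of $\transGraph{A}{O^*}$ into balancing paths gives you inequalities like $v_i(O^*_i) \le v_j(O^*_j) + 1$ but no mechanism to localize the improvement to the single move the algorithm checks. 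That is the missing idea.
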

\Cref{prop:bestComplement} is the first step toward proving the theorem. It implies that \Balance$(\instance, O)$ maintains an optimal assignment for a fixed number of big goods assigned to each of the agents. Towards this end, consider a \emph{partial big-allocation} $A^P$ such that $A^P_i \subseteq B_i$ for all $i \in \agents$, i.e., in $A^P$ all agents only receive big goods. Since $A^P$ is partial, there might be unassigned goods $\goods^U = \goods \setminus \bigcup_i A^P_i$. Now consider a \emph{small-extension} $A$ of $A^P$ obtained by assigning each good $g \in \goods^U$ to some agent $i$ with $g \in S_i$. Note that if a good $g \in \goods^U$ is big for all agents, then $A^P$ does not have any small-extension. We use the notation $s_i = \modulus{A_i \cap S_i}$.

\begin{proposition}\label{prop:bestComplement}
    If $v_i(A_i) + v < v_j(A_j)$ implies $s_j=0$ for every $i,j \in \agents$, then $A$ is a small-extension of $A^P$ with maximum NSW.
\end{proposition}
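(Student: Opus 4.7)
The plan is to argue by contradiction: assume there exists a small-extension $A'$ of $A^P$ with $\NSW(A') > \NSW(A)$, and derive a contradiction by showing that the transport from $A$ to $A'$ can be realized as a sequence of operations whose cumulative effect on NSW is at most $1$.

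Since $A$ and $A'$ agree on the partial big-allocation $A^P$, they differ only in how the unassigned goods $\goods^U = \goods \setminus \bigcup_i A^P_i$ are distributed; each such good is small for whichever agent receives it in either allocation. I would form $H = \transGraph{A}{A'}$, restricted to edges that correspond to goods whose assignment differs between $A$ and $A'$. Since $\sum_i s_i(A) = \sum_i s_i(A') = |\goods^U|$, a standard Eulerian decomposition writes $H$ as an edge-disjoint union of directed cycles and directed paths, where each path goes from some vertex $i$ with $s_i(A) > s_i(A')$ to some vertex $j$ with $s_j(A') > s_j(A)$.

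I would then execute these cycles and paths one at a time. A cycle execution merely rotates small goods among the agents on the cycle, so each participating agent both loses and gains exactly one small good; valuations (and hence NSW) are unchanged. A path from $i$ to $j$ also leaves each intermediate vertex with net valuation change $0$ (it loses one small good and gains one small good), while $v_i(A_i)$ decreases by $v$ and $v_j(A_j)$ increases by $v$. Writing $u_k$ for the current valuation of agent $k$ immediately before the path is executed, the factor by which $\prod_\ell v_\ell(A_\ell)$ changes along the path is $(u_i - v)(u_j + v)/(u_i u_j)$, which is at most $1$ precisely when $u_i \leq u_j + v$.

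The main obstacle is verifying $u_i \leq u_j + v$ at the moment each path is executed, independent of the order in which we process paths and cycles. This is where the hypothesis enters: since $s_i(A) > s_i(A') \geq 0$, we have $s_i(A) > 0$, so the hypothesis yields $v_i(A_i) \leq v_k(A_k) + v$ for every $k$, and in particular for $k=j$. A monotonicity observation then lifts this to current valuations: throughout the process, agent $i$ only acts as a source or intermediate on the operations it participates in (never as a sink, since $s_i(A) > s_i(A')$), so $u_i \leq v_i(A_i)$; symmetrically, agent $j$ only ever accrues value, so $u_j \geq v_j(A_j)$. Chaining gives $u_i \leq v_i(A_i) \leq v_j(A_j) + v \leq u_j + v$, so each path contributes a factor at most $1$, and summing over all paths and cycles produces $\NSW(A') \leq \NSW(A)$, contradicting the assumption.
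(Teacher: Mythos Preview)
Your argument is correct and reaches the same conclusion as the paper, but by a genuinely different route. The paper fixes an optimal small-extension $A^*$ that is \emph{sum-closest} to $A$, finds a single \SS-balancing path in $\transGraph{A^*}{A}$ from some $i$ with $s_i < s_i^*$ to some $j$ with $s_j > s_j^*$, and uses the sum-closest property to force the trade along that path to be strictly \NSW-decreasing; this yields $v_j(A_j^*) + v > v_i(A_i^*)$, which (via $s_i^* > s_i$ and $s_j^* < s_j$) translates back to $v_j(A_j) > v_i(A_i) + v$ with $s_j > 0$, contradicting the hypothesis. You instead take an arbitrary better extension $A'$, decompose the \emph{entire} transfer graph $\transGraph{A}{A'}$ into cycles and paths via an Eulerian decomposition (with path sources exactly at the surplus vertices $s_i(A) > s_i(A')$ and sinks at the deficit vertices), and show each path execution has \NSW\ factor at most $1$ by combining the hypothesis at the source with the monotonicity $u_i \le v_i(A_i)$, $u_j \ge v_j(A_j)$ that the decomposition guarantees. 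The paper's approach is shorter because the sum-closest tiebreak collapses the analysis to one path; your approach avoids that auxiliary choice but pays for it with the decomposition and the order-independence argument. One small point worth making explicit in your write-up: the claim that a source vertex $i$ is never a sink (and vice versa) depends on choosing a \emph{minimal} path-cycle decomposition, in which the number of paths starting at $v$ is exactly $\max(0,\text{outdeg}(v)-\text{indeg}(v))$; you should state this rather than leave it implicit under ``standard Eulerian decomposition.''
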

\begin{proof} 
    Assume by contradiction that $A$ is not the best small-extension of $A^P$. Let $A^*$ be a small-extension of $A^P$ with largest $\NSW$ that is sum-closest to $A$ (i.e., maximizes $\sum_{i \in \agents} |A_i \cap A_i^*|$). We define $s^*_i = \modulus{A^*_i \cap S_i}$. If $A$ is not optimal, then there exists $i\in \agents$ such that $s_i < s^*_i$. As in the proof of \Cref{lemma:reachableAgentMoreBigGoods}, we see that there must be an \SS-balancing path in $G_{A^* \rightarrow A}$ from $i$ to $j$ with $s^*_j< s_j$. Observe that $s_j > 0$. Hence, there exists a way to trade along the path without changing the valuation of interior agents. Since $A^*$ is an optimal small-extension that is sum-lex-closest to $A$, this must be strictly profitable, so 
	\[
		v_i(A^*_i) \cdot v_j(A^*_j) > \left( v_i(A^*_i)- v\right) \cdot\left( v_j(A^*_j) +v\right) \enspace .
	\]
    Then, since $v>0$, this is equivalent to 
	$ v_j(A^*_j) + v > v_i(A_i^*)$.
	Since $A_i \cap B_i = A_i^* \cap B_i = A^P_i$ and $A_j \cap B_j = A_j^* \cap B_j = A^P_j$, we see that $v_j(A^*_j) \le v_j(A_j)-v$ and $v_i(A_i^*) \ge v_i(A_i)+v$. Putting it all together we get 
	\[
		v_j(A_j) \ge v_j^*(A_j^*) + v > v_i(A^*_i) \ge v_i(A_i) + v \enspace . 
	\]
	However, we have $s_j > 0$, a contradiction to the assumption of $A$ in the lemma.
\end{proof}

We observed in Lemma~\ref{lem:propertiesAlgo} that throughout \Balance$(\instance,O)$, all agents receiving small goods differ in valuation by at most $v$. This implies that when $v_i(A_i) + v < v_j(A_i)$ at any point during the algorithm, then $s_j=0$, i.e., $j$ has no small goods.

For the next proposition, we assume \Balance$(\instance, \cdot)$ is applied to a particular form of non-wasteful allocation, which will eventually result in an optimal allocation. Recall that numbers $b_i$ and $b_i^*$ refer to the number of goods that agent $i$ receives in allocations $O$ and $O^*$, respectively, and that agents are numbered in non-increasing order of their valuation in $O$ and then in $O^*$, i.e., if $i \le j$, then $(b_i > b_j)$ or $(b_i = b_j$ and $b_i^* \ge b_j^*)$.

\begin{definition}
    \label{def:well-structured}
    An allocation $\tilde{O}$ is said to be \emph{well-structured} if it is non-wasteful and there is some value $0 \le K \le n$ 
    s.t.\
    \begin{itemize}
    \item $\tilde{b}=(b_1,\dots, b_K, b^*_{K+1}, \dots, b^*_n)$,
    \item for each $i \leq K$ either $b_i > b_i^*$, or $b_i = b^*_i$ and there is $j \leq K$ with $b_j > b^*_j$ and $b^*_j < b^*_i$, 
    \item for each $i \le K$ and $j > K$, $b^*_i \geq b^*_j$. 
\end{itemize}
\end{definition}

\begin{proposition}\label{prop:sameNumberBigGoods} 
Let $\tilde{O}$ be any well-structured allocation. Then \Balance$(\instance,\tilde{O})$ computes an optimal allocation. 
\end{proposition}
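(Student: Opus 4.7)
My plan has two parts. First, I would apply Proposition~\ref{prop:bestComplement} to show that $\Balance(\instance, \tilde{O})$ outputs the best small-extension of $\tilde{O}$. Second, I would construct a small-extension of $\tilde{O}$ whose NSW equals $\NSW(O^*)$. Combining the two parts yields the claim, since $\NSW(O^*)$ is the global maximum.

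For the first part, Lemma~\ref{lem:propertiesAlgo} guarantees that the hypothesis $v_i(A_i) + v < v_j(A_j) \Rightarrow s_j = 0$ of Proposition~\ref{prop:bestComplement} holds at the end of Phase~2, so the Phase~2 allocation is already the NSW-maximum small-extension of $\tilde{O}$. Phase~3 can then only (weakly) increase the NSW further.

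For the second part, I would construct $A^*$ starting from $O^*$ by a sequence of NSW-preserving swaps, enabled by $p \mid q$ (equivalently $v = 1/q$). The elementary swap is: pick $g \in B_i \cap B_j$ with $g \in O^*_i$, transfer $g$ from $i$ to $j$, and transfer $q$ goods that are small for both $i$ and $j$ from $j$ to $i$. Because one big good is worth exactly $q$ small goods in value, both $v_i(A_i)$ and $v_j(A_j)$ are preserved, and so is NSW. The target big-good vector is $\tilde{b}$; by the well-structured assumption we have $\tilde{b}_i \ge b^*_i$ for every $i$ (with equality for $i > K$), and the aggregate deficit $\sum_{i \le K}(b_i - b^*_i)$ equals the number $|B| - \sum_i b^*_i$ of big goods that $O^*$ places with agents valuing them as small. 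The decomposition of $G_{O^* \to O}$ into \SB- and \BB-balancing paths supplied by Lemmas~\ref{lemma:reachableAgentMoreBigGoods} and~\ref{lemma:reachableAgentLessBigGoods} provides a roadmap for rerouting these misplaced big goods to the receivers in $\{i \le K : b_i > b^*_i\}$.

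\textbf{Main obstacle.} The technical heart is verifying feasibility of the swap sequence: at each step the $q$ compensating small goods have to be present in the donor's bundle and small for both endpoints of the swap. This is where the well-structured conditions are used. Condition~(3), $b^*_i \ge b^*_j$ for $i \le K$ and $j > K$, together with the agent ordering by $(b_i, b^*_i)$, ensures small-good mass can flow in the correct direction; condition~(2) prevents cyclic obstructions by forcing any increase in big goods for $i \le K$ to have a legitimate source. Optimality and sum-lex-closeness of $O^*$ supply the valuation inequalities needed to show that each elementary swap is NSW-neutral and that the intermediate allocations remain consistent with the framework of Proposition~\ref{prop:bestComplement}, so that the final $A^*$ is genuinely a small-extension of $\tilde{O}$ with $\NSW(A^*) = \NSW(O^*)$.
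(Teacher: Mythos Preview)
Your plan has a genuine gap: Part~2 is not achievable in general, because the best small-extension of $\tilde{O}$ need \emph{not} attain $\NSW(O^*)$. Phase~3 is essential, not merely a harmless add-on.

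Concretely, take $n=2$, three goods all in $B_1\setminus B_2$, and $v=1/2$. Then $O$ gives all three goods to agent~1, so $b=(3,0)$; the optimum $O^*$ (sum-closest to $O$) gives two goods to agent~1 and one to agent~2, so $b^*=(2,0)$ and $\NSW(O^*)=1$. With $K=1$, $\tilde{O}=O$ is well-structured. But $\tilde{O}$ already allocates every good, so its \emph{only} small-extension is $\tilde{O}$ itself, with $\NSW=0$. There is no small-extension of $\tilde{O}$ reaching $\NSW(O^*)$; \Balance\ succeeds here only because Phase~3 moves a big good from agent~1 to agent~2.

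Your swap mechanism also breaks down for a structural reason you did not anticipate: in $O^*$, every agent $i\le K$ holds \emph{no} small goods (this is exactly what the paper establishes as implication~(2) in its proof). Hence the agent $j\le K$ who is supposed to receive a big good in your swap and return $q$ small goods in compensation has none to return. The well-structured conditions~(2) and~(3) do not rescue this; the ``small-good mass'' you need on the $i\le K$ side simply is not there.

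Finally, you invoke $p\mid q$ to make the swaps value-neutral, but Proposition~\ref{prop:sameNumberBigGoods} is stated for arbitrary $v$ and is reused verbatim in the approximation section where $1/v\notin\NN$; your argument would not cover that use.

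The paper's route is operational rather than static: it shows inductively that during Phase~3 the big-good vector of agents $1,\dots,K$ stays pointwise above a permutation of $(b^*_1,\dots,b^*_K)$, that the maximum-valuation agent is always among $1,\dots,K$, and that exactly $m'=\sum_{i\le K}(b_i-b^*_i)$ big goods are removed from these agents and handed (as small) to agents $j>K$. Only then is Proposition~\ref{prop:bestComplement} applied, to the partial allocation \emph{after} Phase~3 has reshaped the big-good profile to a permutation of $b^*$. You will need to track Phase~3 explicitly along these lines.
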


\begin{proof}
    We denote by $m' = \sum_{i=1}^{K} b_i - \sum_{i=1}^{K} b^*_i$ the number of goods from $B$ assigned as small in $O^*$.

    We start with some structural observations. Suppose we remove an arbitrary set $G$ of $m'$ goods from $\bigcup_{i\leq K}\tilde{O}_i$ in such a way that the numbers of the remaining big goods for agents $i\leq K$ compose a permutation of $(b^*_1,\ldots, b^*_K)$. We then assign the goods in $S \cup G$ sequentially to an agent with the currently lowest valuation. Moreover, let us pretend for the moment that the goods in $S \cup G$ are small for all the agents, that means, we increase the valuation of the agents receiving them by $v$. By Proposition \ref{prop:bestComplement}, this will lead to an optimal small-extension and, since we start from a partial allocation inducing a permutation of $(b^*_1,\ldots, b^*_K)$, this must be an allocation with maximum \NSW. This has several implications:
	\begin{enumerate}
	\item The goods in this process are indeed small for any agent receiving it. Otherwise, the allocation could be Pareto-improved, contradicting the optimality of $O^*$. 
	
	\item All small goods in $O^*$ are allocated to agents $i$ with $i > K$. For contradiction, suppose agent $i \leq K$ receives a small good. As the small goods were allocated in turn to an agent with minimum valuation, we can assume that $i$ has $\min_{k \le K} b^*_k$ big goods. Thus $i$ must have given some big good away. Then exchanging this good with the small one Pareto-improves the allocation, contradicting the optimality of the allocation.

	\end{enumerate}

    We now show that \Balance\ indeed removes a set $G$ of $m'$ goods as described above.
    
    If $m' = 0$, the statement is trivial and the proposition follows. We denote by $\tilde{O}^t$ the allocation after the $t$-th round in \Balance\ (counting both phases 2 and 3) and $\tilde b^t$ the vector of big goods. We will show inductively that (1) in every round the number of big goods remain ``above'' $O^*$, i.e., there is a permutation $\sigma$ of $\{1,\ldots,K\}$ such that $b_{\sigma(i)}^* \le \tilde{b}_i^t$ for all $i \le K$; and (2) in phase 3 the agent with highest valuation is an agent $i \le K$. As the base case, consider $t=0$ before the start of phase 2. Clearly, (1) and (2) hold by assumption. 
    
    Suppose both properties hold until the end of some round $t < m' + |S| - 1$. Consider round $t+1$. By hypothesis there is a permutation $\sigma$ such that $b_{\sigma(i)}^* \le \tilde{b}_i^t$ for all $i \le K$ and $\sum_{i \in \agents} \tilde{b}_i^t > \sum_{i \in \agents} b_i^*$. This implies that $\tilde{O}^{t}$ is sum-closer to $O$ than $O^*$, hence cannot be optimal. Moreover, there is $i < K$ such that $\tilde{b}_i^t > b_{\sigma(i)}^*$. If for all $j \le K$ we remove $\tilde{b}_j^t - b^*_{\sigma(j)}$ goods and assign them  iteratively to the least-valuation agents $k > K$, the NSW becomes optimal and thus strictly improves. This implies that after round $t$ there is a move improving the \NSW, so \Balance\ will not terminate since it would execute another round of phase 3.
    
    Now consider $i$ as the highest-valuation agent at the end of round $t$. By (2) this is an agent $i \le K$. 
    
    Suppose round $t+1$ is in phase 2. Then $\sigma$ still fits, and (1) holds after round $t+1$. Suppose (2) does not hold, i.e., after round $t+1$ an agent $j > K$ has highest valuation. This agent must have received the small good in round $t+1$, so the valuations of all agents differ by at most $v$. Hence, phase 3 would not start if phase 2 ended after round $t+1$. However, since there is at least one agent $k \le K$ with $b_{\sigma(k)}^* < \tilde{b}^t_k$, we proved above phase 3 would start after round $t+1$, a contradiction.
    
    Now suppose round $t+1$ is in phase 3. If $b^*_{\sigma(i)} < \tilde b^t_i$, then $\sigma$ still fits, so let us assume that $b^*_{\sigma(i)} = \tilde b^t_i$. If there is $j \leq K$ such that $\tilde b^t_j = \tilde b^t_i$ and $b^*_{\sigma(j)} < \tilde b^t_j$, then $(i, j) \circ \sigma$ works. Let us assume that all agents with maximum valuation in $\tilde O^t$ have as many goods as in $O^*$. We have $\tilde b^{t+1}_i < b^*_{\sigma(i)}$ and $\sum_{j \leq K} b^*_j \leq \sum_{j \leq K} \tilde b^{t+1}_j$ (because $t+1 \leq m'$), so there is $j \leq K$ such that $b^*_{\sigma(j)} < \tilde b^{t+1}_j$. Since $j$ can not have maximum valuation in $\tilde O^t$, so $\tilde b^{t+1}_j \leq \tilde b^{t+1}_i$ Consider the allocation where every $k \leq K$ gives away $\max(0, \tilde b^{t+1}_k - b^*_{\sigma(k)})$ goods, except $j$ that gives $\tilde b^{t+1}_j - b^*_{\sigma(j)} - 1$ goods. This allocation differs in valuation profile from $O^*$ only by agents $i$ and $j$ (up to a permutation) and we have $b^*_{\sigma(j)} < \tilde b^{t+1}_j \leq \tilde b^{t+1}_i \leq b^*_{\sigma(i)}$, so this new allocation has higher \NSW\ than $O^*$, a contradiction to the optimality of $O^*$. This proves that (1) holds after round $t+1$.

    Suppose (2) does not hold, i.e., there is an agent $j > K$ with highest valuation. This agent must have a small good, since $b_i^* \ge b_j^*$ for all $i \le K$, $j > K$. Hence, at the end of round $t+1$, the valuations of all agents differ by at most $v$, and there is no improving move left for round $t+2$. If $t+1 < m' + |S|$ we have an agent $k \le K$ with $b_{\sigma(k)}^* < \tilde{b}^t_k$, and \Balance\ will execute another round in phase 3, a contradiction. 

    Note that the good moved in round $t+1$ must be given to an agent $j > K$ -- even if we expanded the set of goods removed from agents $1,\ldots,K$ from the ones in rounds $1,\ldots,t+1$ to a set $G$ of goods considered above, all goods would be given only to agents $j > K$.
    
    Finally, we consider the case $t = m' + |S| - 1$. Then after round $t+1$, we obtain a permutation $\sigma$ of $\{1,\ldots,K\}$ such that $b^*_{\sigma(i)} \leq \tilde b^{m'}_i$ for all $i \le K$. We also have $\sum_{i \leq K} \tilde b^{m'+\modulus S}_i = \sum_{i \leq K} b_i - m' = \sum_{i \leq K} b^*_i$. Hence, $\tilde b^{m'+\modulus S}_i = b^*_i$ for all $i \le K$. Thus, the set of removed goods is a set $G$ considered above, and as such the resulting allocation $\tilde O^{m'-\modulus S}$ is optimal. As a consequence, \Balance\ stops after this iteration and returns an optimal allocation.
\end{proof}

The proposition shows that if the allocation computed in phase 1 has suitable properties, then the allocation computed by \Balance\ is an optimal one. We now further compare $O$ and $O^*$ to better understand why the hypothesis of \Cref{prop:sameNumberBigGoods} is not always satisfied by $O$ and which conditions on $v = p/q$ are sufficient for it.

In $O$ the big goods are as evenly balanced as possible. When $v\neq 0$, an optimal allocation $O^*$ might require to make the big goods more unbalanced. In the next proposition, we examine the details of this observation. In case $1/v \in \mathbb{N}$, we observe that \Cref{prop:sameNumberBigGoods} holds, and thus Algorithm \ref{Algorithm} computes an optimal allocation. Recall that we assume agents to be numbered in non-increasing order of $b_i$. The following proposition holds even when $O^*$ is optimal and sum-closest to $O$ (but not necessarily sum-lex-closest).

\begin{proposition}\label{prop:bigGoodsInOpt}
	Suppose $O^*$ is optimal and sum-closest to $O$ and there is an agent $i$ such that $b_i < b_i^*$. Consider an agent $j$ such that $b_j^* < b_j$ and there is a \BB-balancing path in $\transGraph{O^*}{O}$ from $i$ to $j$. 
	Then 
    \[
    v_i(O^*_i) -1 + v \cdot \lfloor 1/v \rfloor  \; < \;  v_j(O^*_j) \; < \; v_i(O^*_i) - 1 +  v \cdot \lceil 1/v \rceil \enspace,    
	\]
	as well as $b_j \leq b_i + 1$ and $b_j \le b_i^*$.
\end{proposition}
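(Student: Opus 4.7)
The plan is to first establish items (2) and (3) from the optimality of $O$ in the dichotomous instance, and then leverage these bounds on big-good counts to prove item (1) via compound swaps from $O^*$.

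For items (2) and (3), the key observation is that the given $\BB$-balancing path in $\transGraph{O^*}{O}$ reverses to a $\BB$-balancing path from $j$ to $i$ in $\transGraph{O}{O^*}$, traversing the same goods. By the balancing property together with non-wastefulness of $O$, each good on the path is big for both endpoints of its edge. Trading this reversed path starting from $O$ produces a non-wasteful dichotomous allocation $O'$ with $b_i' = b_i + 1$, $b_j' = b_j - 1$, and other agents' big-good counts unchanged. Since $O$ first maximizes the number of agents with non-empty bundles and then NSW among them, a short case analysis (distinguishing $b_j = 1$ from $b_j \ge 2$, and $b_i = 0$ from $b_i \ge 1$) rules out $O'$ being strictly better, yielding $b_j \le b_i + 1$. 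Item (3) then follows at once from $b_i^* \ge b_i + 1 \ge b_j$.

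For item (1), I first establish an auxiliary claim: every small good $g \in O^*_j \cap S_j$ lies in $S_i$. Indeed, if some such $g$ were in $B_i$, combining the $\BB$-trade with transferring $g$ from $j$ to $i$ would leave $V_i := v_i(O^*_i)$ unchanged yet raise $V_j := v_j(O^*_j)$ by $1 - v > 0$, strictly improving NSW and contradicting optimality of $O^*$. Next I consider the compound swap that performs the $\BB$-trade and transfers $k$ small goods from $j$ to $i$, giving $V_i'' = V_i - 1 + kv$ and $V_j'' = V_j + 1 - kv$. The $\BB$-trade strictly increases the overlap $\sum_\ell |O_\ell \cap O^*_\ell|$ (each interior agent and $j$ gains one good matching $O$), while small-good transfers leave overlap unchanged (by the auxiliary claim the transferred goods are small for both $i$ and $j$, hence unassigned in $O$). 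Sum-closeness of $O^*$ then forces strict NSW decrease whenever the swap is feasible: taking $k = \lfloor 1/v \rfloor$ yields the lower bound and $k = \lceil 1/v \rceil$ the upper one, provided $s_j^* := |O^*_j \cap S_j| \ge k$.

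The main obstacle is the case where $j$ has too few small goods in $O^*$ — $s_j^* < \lfloor 1/v \rfloor$ for the lower bound, $s_j^* < \lceil 1/v \rceil$ for the upper one — making the natural compensating swap infeasible. My workaround leverages item (3). For the lower bound, I instead use the achievable compound swap with $k = s_j^*$, which gives the weaker strict bound $V_j > V_i - 1 + s_j^* v$; substituting $V_j = b_j^* + s_j^* v$, $V_i = b_i^* + s_i^* v$ and using $s_i^* v \ge 0$ yields $b_j^* > b_i^* - 1$, hence the integer inequality $b_j^* \ge b_i^*$. For the upper bound, I directly substitute the assumed violation into the inequality, and using $\lceil 1/v \rceil - s_j^* \ge 1$ together with $s_i^* \ge 0$ I obtain
\[
b_j^* \;\ge\; b_i^* - 1 + v(\lceil 1/v \rceil - s_j^* + s_i^*) \;\ge\; b_i^* - 1 + v,
\]
forcing $b_j^* \ge b_i^*$ again. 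Both conclusions contradict item (3), which gives $b_j^* < b_j \le b_i^*$.
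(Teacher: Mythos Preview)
Your proof is correct and takes a somewhat different route from the paper's.

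Both proofs use the same core move: reverse the \BB-path to get $b_j \le b_i+1$, observe that small goods of $j$ cannot be big for $i$ (else a Pareto improvement), and then argue that the compound swap ``\BB-trade plus transfer $k$ small goods from $j$ to $i$'' would raise the overlap with $O$, so sum-closeness forces a strict NSW drop, which unwinds to the two inequalities in (1).

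Where you diverge is in how feasibility of that swap is secured. The paper first proves the stronger intermediate fact $b_i^*-b_j^*\ge 2$ via a short case split (the boundary case $b_j=b_i+1$, $b_j^*=b_i$, $b_j=b_i^*$ is killed by a separate small-good exchange that contradicts sum-closeness). Combined with the basic optimality inequality $V_i\le V_j+1$, this yields $s_j^*-s_i^*\ge 1/v$, so $s_j^*\ge\lceil 1/v\rceil$ and both swaps with $k=\lfloor 1/v\rfloor$ and $k=\lceil 1/v\rceil$ are always feasible. You instead derive item~(3) directly from $b_j\le b_i+1\le b_i^*$ and then treat the infeasible ranges of $s_j^*$ separately: the achievable swap with $k=s_j^*$ (for the lower bound) and a direct substitution (for the upper bound) both force $b_j^*\ge b_i^*$, which clashes with $b_j^*<b_j\le b_i^*$. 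This cleanly sidesteps the paper's case analysis for $b_i^*-b_j^*\ge 2$, at the price of two small extra arguments; neither approach dominates, but yours is arguably more elementary.

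One minor imprecision: in your overlap argument you write that the transferred small goods are ``unassigned in $O$''. That need not be true (such a good could be big for a third agent and sit in her $O$-bundle); what you actually need, and what follows from non-wastefulness of $O$ together with your auxiliary claim, is only that these goods lie in neither $O_i$ nor $O_j$, which suffices to keep $\sum_\ell |O_\ell\cap O^*_\ell|$ unchanged under the transfer.
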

\begin{proof}

    For $k \in \agents$, we denote by $s^*_k = \modulus{O^*_k \cap S_k}$ the number of goods of $k$ that are small to $k$.

	As $O^*$ is optimal, trading along a \BB-balancing path in $\transGraph{O^*}{O}$ from $i$ to $j$ cannot increase the \NSW, i.e. $v_i(O^*_i)\cdot v_j(O^*_j) \ge (v_i(O^*_i) - 1) \cdot (v_j(O^*_j) + 1)$ and, hence, $v_i(O^*_i) \leq v_j(O^*_j) +1$, leading to the optimality condition $b^*_i + v s^*_i \leq b^*_j + v s^*_j + 1$. Besides, if $j$ has a good that is big to $i$, then either there is a balancing cycle, which contradicts the fact that $O^*$ is closest to $O$, or the good is small for $j$ and trading along the cycle gives a new allocation that Pareto-dominates $O^*$. So none of the goods of $j$ is considered big by $i$.

	We first show that $b_j \leq b_i+1$. Suppose for contradiction that this is not the case. Then by reversing the path between $i$ and $j$ and trading goods, we see that $O$ is not optimal in the dichotomous instance.
	
	Next we show $b^*_i - b^*_j \geq 2$ and $b_j \le b_i^*$. If $b_j \le b_i$, then $b^*_j < b_j \le b_i < b^*_i$ and since these numbers are integers we obtain $b^*_i - b^*_j \geq 2$, as well as $b_j \le b_i^*$. Thus, we are left with the case $b_j = b_i+1$. We have $b^*_j \leq b_i$ and $b_j \leq b^*_i$, and thus the following inequalities: $b^*_j \leq b_i = b_j-1 \leq b^*_i-1$. If one of the inequalities is strict, then we obtain $b^*_j \leq b^*_i-2$ and $b_j \le b_i^*$. Otherwise, $b^*_j = b_i$ and $b_j = b^*_i$. Then the optimality condition gives $s^*_i \leq s^*_j$. Now we trade along the path. Thereby we assign a big good to $j$. In exchange, agent $i$ receives $s^*_j - s^*_i $ many small goods from $j$'s bundle. This exchanges $v_i(O^*_i)$ and $v_j(O^*_j)$, and thus does not impact the \NSW. This contradicts the fact that $O^*$ is closest to $O$.
	
	Having shown that $b^*_i - b^*_j \geq 2$, we see with the optimality condition that $s^*_j -s^*_i \geq \frac{1}{v}$. We prove by contradiction that the relation between $v_j(O^*_j)$ and $v_i(O^*_i)$ holds. 
	
	Assume $v_j(O^*_j) \leq v_i(O^*_i)-1+ v\cdot \lfloor 1/v \rfloor$. Then
	\begin{dmath*}
	    v_i(O^*_i) \cdot v_j(O^*_j) \leq (v_i(O^*_i)-1+ v\cdot \lfloor 1/v \rfloor) \cdot (v_j(O^*_j)+1- v \cdot \lfloor 1/v \rfloor),
	\end{dmath*}
	which means that trading along the path from $i$ to $j$ and transferring $\lfloor 1/v \rfloor$ small goods from $O^*_j$ to $O_i$ does not decrease the $\NSW$ of the allocation. This is impossible because $O^*$ was taken as close to $O$ as possible.
	
	Now, if $v_j(O^*_j) \geq v_i(O^*_i)-1+ v \cdot \lceil 1/v \rceil$, then 
	\begin{dmath*} 
	    v_i(O^*_i) \cdot v_j(O^*_j) \leq (v_i(O^*_i)-1+v\cdot \lceil 1/v \rceil) \cdot (v_j(O^*_j)+1-v \cdot \lceil 1/v \rceil)
	\end{dmath*}
	and same reasoning applies by using $\lceil 1/v \rceil$ small goods.
\end{proof}

\Cref{lemma:reachableAgentMoreBigGoods} can be combined with \Cref{prop:bigGoodsInOpt} to yield the following corollary.

\begin{corollary}
	\label{cor:bigGoodsInOpt}
	If there is an agent $i$ with $b_i < b_i^*$, then there is some agent $j$ reachable from $i$ by a \BB-balancing path in $\transGraph{O^*}{O}$. Every such agent $j$ has at least one small good in $O^*_j$.
\end{corollary}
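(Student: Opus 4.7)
The plan is to prove the corollary by directly combining Lemma~\ref{lemma:reachableAgentMoreBigGoods} and Proposition~\ref{prop:bigGoodsInOpt}. The first assertion is immediate from Lemma~\ref{lemma:reachableAgentMoreBigGoods} applied to the given $i$: since $b_i < b_i^*$, that lemma supplies some agent $j$ with $b_j^* < b_j$ and a \BB-balancing path from $i$ to $j$ in $\transGraph{O^*}{O}$. I read ``every such agent $j$'' in the corollary as ranging over precisely those $j$ reachable from $i$ by a \BB-balancing path and satisfying $b_j^* < b_j$, i.e., exactly the scope in which Proposition~\ref{prop:bigGoodsInOpt} can be invoked.

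For the second assertion, I would fix any such $j$, apply Proposition~\ref{prop:bigGoodsInOpt}, and use its lower bound $v_i(O_i^*) - 1 + v \lfloor 1/v \rfloor < v_j(O_j^*)$. Writing $v_k(O_k^*) = b_k^* + v \cdot s_k^*$ with $s_k^* = |O_k^* \cap S_k|$ for $k \in \{i,j\}$ (recall $v_{ij} \in \{v,1\}$ after normalization) and rearranging gives
\[
v \cdot s_j^* \;>\; (b_i^* - b_j^*) - 1 + v \cdot s_i^* + v \lfloor 1/v \rfloor.
\]
The proposition also records $b_j \leq b_i^*$, and by assumption $b_j^* < b_j$, so $b_i^* - b_j^* \geq 1$. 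Since $s_i^* \geq 0$ and $v < 1$ forces $\lfloor 1/v \rfloor \geq 1$ (whence $v \lfloor 1/v \rfloor \geq v > 0$), the right-hand side is strictly positive. Hence $v \cdot s_j^* > 0$, so $s_j^* \geq 1$, which is the claim (in fact one even obtains $s_j^* > \lfloor 1/v \rfloor$).

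I do not foresee a substantive obstacle: essentially all the work has already been done in Proposition~\ref{prop:bigGoodsInOpt}, and the corollary amounts to extracting the integer count $s_j^*$ from the valuation bound by decomposing $v_k(O_k^*)$ into its big- and small-good contributions. The only moment of care lies in the chain $b_j^* < b_j \leq b_i^*$ used to eliminate the $(b_i^* - b_j^*)-1$ term, and in noting that the correction $v \lfloor 1/v \rfloor$ is strictly positive under the standing assumption $v < 1$---this is exactly what rules out $s_j^* = 0$ and thereby gives the corollary its content.
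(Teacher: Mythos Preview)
Your proposal is correct and matches the paper's approach: the paper itself offers no explicit proof, merely stating that Lemma~\ref{lemma:reachableAgentMoreBigGoods} can be combined with Proposition~\ref{prop:bigGoodsInOpt}, and your argument is precisely the natural unpacking of that combination. Your reading of ``every such agent $j$'' as those $j$ with $b_j^* < b_j$ reachable via a \BB-balancing path is the intended one (this is the scope in which Proposition~\ref{prop:bigGoodsInOpt} applies and the way the corollary is later invoked, e.g., in Lemma~\ref{lemma:eitherSBorBBpaths}), and the arithmetic extracting $s_j^* \geq 1$ from the valuation inequality via $b_j^* < b_j \le b_i^*$ is clean.
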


We can now prove \Cref{thm:algoOPTintegerCase}.
\begin{proof}[Proof of \Cref{thm:algoOPTintegerCase}]
We show that \Balance$(\instance,O)$ is an optimal allocation.
To this aim we show that $O$ satisfies the assumptions of \Cref{prop:sameNumberBigGoods}.

We first observe that if $1/v \in \mathbb{N}$, then there exists no agent $i$ such that $b_i< b_i^*$. Otherwise, by \Cref{lemma:reachableAgentMoreBigGoods} and \Cref{prop:bigGoodsInOpt}, there must exist an agent $j$ such $v_i(O^*_i) -1 + v \cdot \lfloor 1/v \rfloor < v_j(O^*_j) < v_i(O^*_i) - 1 + v \cdot \lceil 1/v \rceil$. Since, $1/v\in \mathbb{N}$, we have $\lfloor 1/v \rfloor = \lceil 1/v \rceil =1/v$ implying $ v_i(O^*_i) < v_j(O^*_j) +1 - v \cdot \lfloor 1/v \rfloor < v_i(O^*_i)$ which is impossible.
Thus, for each $i \in \agents$, $b_i\geq b_i^*$. Moreover, the entries of $b$ are sorted in non-increasing order.
By selecting $K$ as the maximum index $i \in \{0, \dots, n\}$ for which $b_i>b_i^*$, we see that $O$ is well-structured. Therefore, by \Cref{prop:sameNumberBigGoods}, \Balance$(\instance,O)$ returns an optimal allocation. 
\end{proof}

\section{Approximation}

In this section we study the case $1/v \not\in \mathbb{N}$ and prove a small approximation ratio for our algorithm. The idea is to compare the behavior of \Balance$(\instance,O)$ to \Balance$(\instance,\tilde{O})$ for a suitably chosen allocation $\tilde{O}$ such that the final allocation of the latter procedure is optimal. Towards the choice of $\tilde{O}$, we observe some additional properties of $O$ and $O^*$.

\begin{lemma}\label{lemma:eitherSBorBBpaths}
	For every agent $j\in \agents$ with $b_j^* < b_j$ exactly one of the following two properties hold:
	\begin{enumerate}
		\item \label{SBpath*} Either there is an agent $i$ and an \SB-balancing path from $i$ to $j$ in $\transGraph{O^*}{O}$,
		\item \label{BBpath*} or there is an agent $i$ such that $b_i^* > b_i$ and there is a \BB-balancing path from $i$ to $j$ in $\transGraph{O^*}{O}$.
	\end{enumerate}
\end{lemma}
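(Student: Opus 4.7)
The plan is to supplement Lemma~\ref{lemma:reachableAgentLessBigGoods}, which establishes that at least one of (1) and (2) holds, by proving that they cannot both hold. I proceed by contradiction: suppose some $j$ with $b^*_j < b_j$ is the endpoint both of an \SB-balancing path $P_1 = (h_1,\ldots,h_{m-1})$ from some $i_1 = w_1$ to $j = w_m$, and of a \BB-balancing path from some $i_2$ with $b^*_{i_2} > b_{i_2}$ to $j$, all in $\transGraph{O^*}{O}$. The goal is to construct an allocation $O^{**}$ with strictly larger \NSW\ than $O^*$, contradicting the optimality of $O^*$.

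The construction relies on Corollary~\ref{cor:bigGoodsInOpt}: property (2) supplies us with $i_2$ satisfying $b_{i_2} < b^*_{i_2}$ and reachable to $j$ via a \BB-path, so $j$ has at least one good $g_s \in O^*_j \cap S_j$ that is small for $j$. I define $O^{**}$ from $O^*$ by two independent, non-conflicting operations: (a)~trade the goods of $P_1$ along the path, and (b)~reassign $g_s$ from $j$'s bundle to $i_1$'s bundle. Two sanity checks on compatibility: first, $i_1 \neq j$, because otherwise $P_1$ would be a balancing cycle, which $\transGraph{O^*}{O}$ cannot contain; second, $g_s$ is distinct from every path good $h_k$, because $g_s \in O^*_j$ whereas each $h_k \in O^*_{w_k}$ with $w_k \neq j$. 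Hence $g_s$ still lies in $j$'s bundle after step (a), and the reassignment in step (b) is well-defined.

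The key computation is that only $v_{i_1}$ and $v_j$ change under the combined operations (interior agents along $P_1$ retain their valuations, by the balancing property). Step~(a) yields $v_{i_1} \mapsto v_{i_1} - v$ (losing the small $h_1 \in S_{i_1}$) and $v_j \mapsto v_j + 1$ (gaining the big $h_{m-1} \in B_j$); step~(b) yields $v_j \mapsto v_j - v$ and $v_{i_1} \mapsto v_{i_1} + v_{i_1}(g_s)$, where $v_{i_1}(g_s) \in \{v, 1\}$. Overall, $v_j(O^{**}_j) = v_j(O^*_j) + 1 - v > v_j(O^*_j)$ since $v < 1$, and $v_{i_1}(O^{**}_{i_1}) \geq v_{i_1}(O^*_{i_1})$, so the \NSW\ ratio $\NSW(O^{**})/\NSW(O^*)$ is strictly greater than $1$—the desired contradiction.

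The main obstacle, I expect, is recognizing this particular combined move. Individually, neither operation is profitable: the \SB-optimality of $O^*$ already enforces $v_{i_1} \leq v(v_j + 1)$, which blocks the \SB-trade alone, and moving $g_s$ in isolation is also not guaranteed to improve the \NSW. The interaction is what matters: at $i_1$ the $-v$ loss from the \SB-trade is cancelled (or overcompensated) by the gain from $g_s$, while at $j$ the two contributions combine into a net surplus of $1 - v$. The existence of $g_s$ is exactly what property (2) supplies via Corollary~\ref{cor:bigGoodsInOpt}, which is precisely why properties (1) and (2) cannot coexist.
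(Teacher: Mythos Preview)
Your proof is correct and follows exactly the same approach as the paper: invoke Lemma~\ref{lemma:reachableAgentLessBigGoods} for the ``at least one'' direction, then for mutual exclusion use Corollary~\ref{cor:bigGoodsInOpt} to obtain a small good $g_s$ in $O^*_j$, trade along the \SB-path, and return $g_s$ to the path's source, yielding a Pareto-improvement over $O^*$. The paper states this in two sentences; your version spells out the valuation bookkeeping and adds sanity checks, but the argument is the same.

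One small remark: your justification that $i_1\neq j$ (``otherwise $P_1$ would be a balancing cycle'') is slightly loose, since the paper reserves the term \emph{balancing cycle} for \BB-paths. The correct reason is that an \SB-path from $j$ to $j$ would already Pareto-improve $O^*$ on its own (net $+1-v$ at $j$), so no such path can exist. Similarly, your claim that $w_k\neq j$ for all interior $k$ is not argued, but it is not needed: even if the path revisited $j$, the combined move still yields a strict improvement at $j$ and no decrease elsewhere.
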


\begin{proof}
	\Cref{lemma:reachableAgentLessBigGoods} implies that at least one of the two conditions hold. We now show that they cannot hold simultaneously.
	
	Let us assume for contradiction that there are $i,k\in \agents$ such that both conditions hold: there is an \SB-balancing path from $k$ to $j$, $b_i^* > b_i$, and there is a \BB-balancing path from $i$ to $j$. Since the assumptions of \Cref{prop:bigGoodsInOpt} are fulfilled, by \Cref{cor:bigGoodsInOpt}, $j$ has a small good in her bundle. Suppose we trade along the \SB-balancing path from $k$ to $j$, and $j$ gives one of her small goods to $k$ in turn. Then we obtain a new allocation that Pareto-dominates $O^*$ -- a contradiction.
\end{proof}

We now show how to transform $O^*$ into an allocation $\tilde{O}$ with corresponding vector of big goods $\tilde{b}$ such that for each $i\in \agents$ either $\tilde{b}_i=b_i$ or $\tilde{b}_i=b^*_i$. Let us denote by $\mathcal{P}_{\mathsf{SB}}$ the union of all \SB-balancing paths in $\transGraph{O^*}{O}$. Note that since $O^*$ is closest to $O$, every \SB-balancing path consists of a single edge. In order to get allocation $\tilde{O}$, we perform the following steps. We initialize $\tilde{O} \leftarrow O^*$ and 
\begin{itemize}
	\item for every edge $(i,j)\in \mathcal{P}_{\mathsf{SB}}$
	\begin{itemize}
		\item $\tilde{O}_i \leftarrow  \tilde{O}_i \setminus \set{g(i,j)}$
		\item $\tilde{O}_j \leftarrow  \tilde{O}_j \cup \set{g(i,j)}$
	\end{itemize}
	\item remove all the goods $g \in S$.
\end{itemize}

The result of this procedure is an allocation $\tilde{O}$, in which each good from $B$ that is assigned as small in $O^*$ is moved to the agent that owns it in $O$. 
Finally, all goods from $S$ are removed.

Observe that no \BB-balancing cycles exist in $\transGraph{O^*}{O}$ since $O^*$ is closest to $O$. Since $\transGraph{\tilde O}{O}$ is a subgraph of $\transGraph{O^*}{O}$, no \BB-balancing cycle can emerge in $\transGraph{\tilde O}{O}$. Moreover, for each $i\in \agents$ we have $\tilde{b}_i \ge b^*_i$. In particular, $\tilde{b}_i=b_i$ or $\tilde{b}_i=b^*_i$, or both (in case $b_i = b_i^*$). Indeed, for each agent with $b_i > b_i^*$ that was reachable by an \SB-balancing path, we know that, by \Cref{lemma:eitherSBorBBpaths}, $\mathcal{P}_{\mathsf{SB}}$ contains $b_i - b_i^*$ \SB-balancing paths, one for each good in $O_i \setminus O_i^*$. Thus, in the first part of the procedure $i$ will receive  $b_i - b_i^*$ big goods.
All the other agents do not lose or gain any big good during the transformation of $O^*$ into $\tilde{O}$ and thus for such agents $\tilde{b}_i=b_i^*$.

The next lemma shows that the vector $\tilde{b}$ can be written as $(b_1,\dots, b_K, b^*_{K+1}, \dots, b^*_n)$ for some index $0 \le K \le n$.

\begin{lemma}\label{lemma:orderingTildeB}
	For every $i \in \agents$ with $\tilde{b}_i=b_i > b_i^*$ and every $j \in \agents$ with $\tilde{b}_j = b^*_j \neq b_j$, it holds $b_i > b_j$ and $b^*_i \geq b^*_j$.
\end{lemma}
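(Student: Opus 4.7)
The plan is to argue by contradiction. Suppose that, for some pair $(i,j)$, we have $i \in I_1$ and $j \in I_2 \cup I_3$ as in the hypotheses of the lemma, yet either $b_i \le b_j$ or $b_i^* < b_j^*$. I will construct from $O^*$ a new allocation $O'$ that is still optimal and sum-closest to $O$ but strictly lex-closer to $O$ in the agent ordering, contradicting the choice of $O^*$ as sum-lex-closest.

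First I would collect the structural ingredients. Because $i \in I_1$ there is an \SB-edge $(\ell,i) \in \mathcal{P}_{\mathsf{SB}}$, which by the observation recorded above is a single edge. Since $O^*$ is sum-closest to $O$, executing this \SB-edge alone must strictly decrease the NSW, which rearranges to the strict bound
\[
	v_\ell(O^*_\ell) \;<\; v\,\bigl(v_i(O^*_i) + 1\bigr).
\]
For the agent $j$, in the sub-case $j \in I_2$ \Cref{lemma:reachableAgentLessBigGoods} together with \Cref{lemma:eitherSBorBBpaths} yields a \BB-balancing path from some $k$ with $b_k < b_k^*$ to $j$, and \Cref{prop:bigGoodsInOpt} provides the integer bounds $b_j \le b_k + 1$, $b_j \le b_k^*$, together with the valuation interval $v_k(O^*_k) - 1 + v \lfloor 1/v \rfloor < v_j(O^*_j) < v_k(O^*_k) - 1 + v \lceil 1/v \rceil$. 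The sub-case $j \in I_3$ is handled symmetrically: \Cref{lemma:reachableAgentMoreBigGoods} gives a \BB-balancing path from $j$ to some $j'$ with $b_{j'} > b_{j'}^*$, which by \Cref{lemma:eitherSBorBBpaths} must lie in $I_2$, and \Cref{prop:bigGoodsInOpt} applied to $(j,j')$ gives the analogous bounds.

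Then I would define $O'$ by combining two moves on $O^*$: (i) execute the \SB-edge, pushing $g(\ell,i)$ from $\ell$ to $i$; and (ii) reverse the \BB-path involving $j$, so that $j$ gives a big good back along the path and the appropriate endpoint ($k$ in the $I_2$ case, $j$ in the $I_3$ case) regains one, all interior vertices of the path being balanced. If the raw NSW does not match, one couples (i)--(ii) with a transfer of $\lfloor 1/v \rfloor$ or $\lceil 1/v \rceil$ many small goods along the path, exactly as in the proof of \Cref{prop:bigGoodsInOpt}, to make $\NSW(O') = \NSW(O^*)$. The SB-bound on $v_\ell$ and the \Cref{prop:bigGoodsInOpt} bounds on $v_j$ versus $v_k$ (respectively $v_{j'}$ versus $v_j$) ensure that this is feasible. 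Under the assumed violation of $b_i > b_j$ or $b_i^* \ge b_j^*$, the modified allocation $O'$ then has the same sum-overlap with $O$ but its intersection vector $(|O_a \cap O'_a|)_a$ would strictly lex-dominate that of $O^*$ at the position of the agent with the smaller label, contradicting the sum-lex-closest choice of $O^*$.

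The hard part will be the arithmetic bookkeeping. One has to verify that the strict SB-bound $v_\ell < v(v_i+1)$ together with the $\lfloor 1/v \rfloor / \lceil 1/v \rceil$-bounds of \Cref{prop:bigGoodsInOpt} is enough to realize an \emph{exactly} NSW-preserving modification combining the \SB-edge with the \BB-path reversal, and to translate the resulting valuation-level equalities into a strict lex gain at the correct position. This requires a careful split into whether $j \in I_2$ or $j \in I_3$, and whether the failure is $b_i < b_j$, $b_i = b_j$ with $b_i^* \le b_j^*$, or $b_i > b_j$ with $b_i^* < b_j^*$; the integer constraints on the $b$'s are what ultimately promote the strict rational inequalities into contradictions in each sub-case.
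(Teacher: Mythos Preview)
Your approach is genuinely different from the paper's, and as written it has a real gap.

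The paper never touches the sum-\emph{lex}-closest property here. Its argument is short and direct: from the \SB-edge ending at $i$ it deduces that $i$ has no small good in $O^*$ (otherwise trade the \SB-edge and hand a small good of $i$ back to its origin, Pareto-improving $O^*$), so $v_i(O^*_i)=b_i^*$. On the other side it locates an agent $j'$ (equal to $j$ in Case~1, or the source of the \BB-path to $j$ in Case~2) that \emph{does} have a small good, with $v_{j'}(O^*_{j'}) > v_j(O^*_j)-1+v\lfloor 1/v\rfloor$ from \Cref{prop:bigGoodsInOpt}. Plain optimality then forbids moving one small good from $j'$ to $i$, giving $v_{j'}(O^*_{j'})\le b_i^*+v$, and chaining yields $b_j^*-b_i^*<1$, hence $b_i^*\ge b_j^*$; the inequality $b_i>b_j$ then drops out of $b_i>b_i^*\ge b_j^*$ together with the relation between $b_j$ and $b_j^*$.

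Your plan, by contrast, tries to manufacture an $O'$ with $\NSW(O')=\NSW(O^*)$ that beats $O^*$ lexicographically. The decisive problem is that you have no mechanism to keep the \NSW\ equal. The \SB-edge trade at $i$ \emph{strictly} decreases \NSW\ (this is exactly what sum-closeness of $O^*$ gives you, and you use it yourself to get the strict bound $v_\ell< v(v_i+1)$). Likewise, \Cref{prop:bigGoodsInOpt} shows that trading the \BB-path and shipping either $\lfloor 1/v\rfloor$ or $\lceil 1/v\rceil$ small goods \emph{also} strictly decreases \NSW; that is precisely how the two strict inequalities there are derived. Combining two moves that each strictly decrease the product does not, in general, restore it, and you offer no coupling between the $(\ell,i)$ side and the $(k,j)$ side that would make cancellation possible. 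So the claim ``to make $\NSW(O')=\NSW(O^*)$'' is unsupported, and without it no contradiction with sum- or sum-lex-closeness follows.

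There are secondary issues as well. Your description of ``reversing the \BB-path'' is not a valid operation: the edges of $\transGraph{O^*}{O}$ carry goods from their $O^*$-owner toward their $O$-owner, and there is no reverse edge to trade along. In the $I_3$ sub-case ($b_j<b_j^*$) it may well be that $O_j\subseteq O_j^*$, so $|O_j\cap O^*_j|$ is already maximal and cannot be increased at $j$. Finally, under the negated conclusion the ordering between $i$ and $j$ can go either way (e.g.\ $b_i>b_j$ but $b_i^*<b_j^*$ forces $i<j$), so your lex argument would need to increase overlap at $i$, not $j$, in some sub-cases; this is not addressed. I would recommend abandoning the lex-closeness route and instead extracting the single fact that makes the paper's proof work: $i$ has no small good in $O^*$.
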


\begin{proof} We split the proof into two cases.
	
	\paragraph{Case 1:}
	If $b_j < b^*_j$, then by~\Cref{lemma:reachableAgentMoreBigGoods},~\Cref{prop:bigGoodsInOpt} and~\Cref{cor:bigGoodsInOpt}, there is some agent $j' \in \agents$ such that $v_{j'}(O^*_{j'}) > v_j(O^*_j) - 1 + v \cdot \lfloor 1/ v \rfloor$ and $j'$ has a small good in $O^*_{j'}$. On the other hand, since $\tilde b_i = b_i > b_i^*$, by construction of $\tilde O$, there is an \SB-balancing path in $\transGraph{O^*}{O}$ from some agent $k$ to $i$. If $i$ has a small good in her bundle in $O^*$, we can trade along the path from $k$ to $i$ and then let agent $i$ give one small good in her bundle to $k$. In this way, we obtain an allocation that Pareto-dominates $O^*$. Hence, $i$ must have no small good in her bundle in $O^*$, so $v_i(O_i^*) = b^*_i$. Because of optimality of $O^*$, it is not strictly more profitable in terms of \NSW\ to move one small good of $j'$ to the bundle of $i$, implying $v_{j'}(O^*_{j'}) - v \leq v_i(O_i^*)$ and thus,
	\begin{dmath*}
	b^*_i + v = v_i(O^*_i) + v \ge v_{j'}(O^*_{j'}) > v_j(O^*_j) - 1 + v \cdot \lfloor 1/ v \rfloor \geq b^*_j - 1 + v \cdot \lfloor 1/ v \rfloor \enspace . 
	\end{dmath*}
	Thus, $b^*_j - b^*_i < 1+v- v\cdot  \lfloor 1/ v \rfloor \leq 1$. Since $b_i^*$ and $b_j^*$ are integers, we see that this implies $b_i^* \ge b_j^*$ and hence $b_i > b_j$.

	\paragraph{Case 2:} 
	We now assume $b_j > b_j^* $. Observe that we removed all \SB-balancing paths of $\transGraph{O^*}{O}$ in the transformation of $O^*$ into $\tilde{O}$ but we still have $\tilde b_j = b^*_j$.
	Hence, there exists some agent $j'$ with $b_{j'} < b_{j'}^*$ and  a \BB-balancing path in $\transGraph{O^*}{O}$ from $j'$ to $j$. 
	\Cref{prop:bigGoodsInOpt} gives $b^*_{j'} \geq b_j$.
	Finally, since $b_{j'} < b_{j'}^*$ it must be that $ \tilde{b}_{j'}= b_{j'}^*$. We can apply Case 1 on $j'$ and obtain $b_i >b^*_i \ge b^*_{j'} \ge b_j > b_j^*$ and the claim follows.
\end{proof}

 We set $K$ to the largest index such that $\forall i \leq K$ we have $\tilde b_i = b_i > b^*_i$, or $b_i = b^*_i$ and there is $j \leq K$ such that $\tilde b_j = b_j > b^*_j$ and $b^*_j < b^*_i$. If there is no such index, we simply set $K = 0$. Intuitively, we choose $K$ as the largest index such that $\tilde O$ qualifies as a well-structured allocation in the sense of \Cref{def:well-structured}. 
\begin{lemma}
    \label{lemma:tildeOoptimal}
    $\tilde{O}$ is well-structured.
\end{lemma}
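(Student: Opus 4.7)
The plan is to verify the three bullets of \Cref{def:well-structured} for $\tilde O$ with the specified $K$. Non-wastefulness of $\tilde O$ is immediate from its construction: the procedure discards every good in $S$ and along each \SB-edge moves a good from an agent who values it small to an agent who values it big, so in the final allocation every good is held by an agent who values it big and every good of $B$ remains assigned. The second bullet of the definition is literally the defining property of $K$, so nothing additional needs to be proved for it.

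The identity $\tilde b=(b_1,\dots,b_K,b^*_{K+1},\dots,b^*_n)$ has two directions. For $i\le K$, the defining property of $K$ gives either $b_i>b_i^*$, in which case the construction of $\tilde O$ together with \Cref{lemma:eitherSBorBBpaths} ships all $b_i-b_i^*$ missing big goods back to $i$ along \SB-paths and yields $\tilde b_i=b_i$, or $b_i=b_i^*$, which trivially forces $\tilde b_i=b_i=b_i^*$. For $i>K$ I first rule out $\tilde b_i=b_i>b_i^*$: were such $i$ to exist, every $k\in\{1,\dots,i\}$ would satisfy the defining property with $K'=i$, since every $k\le K$ is fine by definition, every $k\in(K,i)$ with $\tilde b_k=b_k^*\ne b_k$ is excluded by \Cref{lemma:orderingTildeB} (which would force $b_i>b_k$, contradicting $k<i$), and every remaining $k\in(K,i]$ either satisfies $\tilde b_k=b_k>b_k^*$ directly or has $b_k=b_k^*$ and admits $i$ as a valid witness since $b_k^*=b_k\ge b_i>b_i^*$. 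This contradicts the maximality of $K$, so every $i>K$ has $\tilde b_i=b_i^*$.

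For the third bullet, $b^*_i\ge b^*_j$ when $i\le K<j$, I split into subcases according to whether $b_i=b_i^*$ or $b_i>b_i^*$, and likewise for $j$. The case $b_i>b_i^*$ with $\tilde b_j=b_j^*\ne b_j$ is \Cref{lemma:orderingTildeB} applied directly. If $b_i=b_i^*$ and $\tilde b_j=b_j^*\ne b_j$, the defining property of $K$ supplies a witness $l\le K$ with $b_l>b_l^*$ and $b_l^*<b_i^*$, and \Cref{lemma:orderingTildeB} applied to $l,j$ yields $b_j^*\le b_l^*<b_i^*$. If both $i$ and $j$ satisfy $b=b^*$, the non-increasing ordering of $b$ gives $b_i^*=b_i\ge b_j=b_j^*$.

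The delicate remaining subcase is $b_i>b_i^*$ with $b_j=b_j^*$. Assume $b_i^*<b_j^*$ for contradiction and try to enlarge the initial segment to $K'=j$. Agent $j$ satisfies the second disjunct with $l=i$ as witness; every intermediate $k\in(K,j)$ with $b_k=b_k^*$ also admits $l=i$ since $b_k^*=b_k\ge b_j=b_j^*>b_i^*$; and the maximality argument from the previous paragraph already excludes any intermediate $k$ with $\tilde b_k=b_k>b_k^*$. The only remaining obstruction is an intermediate $k$ with $\tilde b_k=b_k^*\ne b_k$. If $b_k<b_k^*$ (a \BB-source in $\transGraph{O^*}{O}$), \Cref{lemma:orderingTildeB} yields $b_k^*\le b_i^*<b_j\le b_k<b_k^*$, a contradiction. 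If $b_k>b_k^*$ (so $k$ is a \BB-target, as it cannot be an \SB-target since $\tilde b_k\ne b_k$), \Cref{lemma:eitherSBorBBpaths} supplies a \BB-source $k'$ reaching $k$, \Cref{prop:bigGoodsInOpt} gives $b_k\le b^*_{k'}$, and \Cref{lemma:orderingTildeB} then yields $b_i^*\ge b^*_{k'}\ge b_k\ge b_j=b_j^*$, again contradicting $b_i^*<b_j^*$. Threading this chain of cases together is the main technical obstacle.
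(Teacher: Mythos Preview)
Your proof is correct and follows the same overall skeleton as the paper: verify non-wastefulness and then the three bullets of \Cref{def:well-structured}. Your treatment of the first bullet is the contrapositive of the paper's (the paper shows directly that any $i$ with $\tilde b_i=b_i>b_i^*$ must satisfy $i\le K$), and for the third bullet you explicitly handle the subcase $b_i=b_i^*$ with $\tilde b_j=b_j^*\ne b_j$ via the witness $l$, which the paper silently folds into its invocation of \Cref{lemma:orderingTildeB}.

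The one place where your argument genuinely diverges is the ``delicate subcase'' $b_i>b_i^*$, $b_j=b_j^*$. The paper handles it by a direct exchange argument exploiting that $O^*$ is sum-\emph{lex}-closest to $O$: assuming $b_j^*>b_i^*$, it trades along the \SB-path ending in $i$ and hands one of $j$'s big goods back to the path's source, obtaining an allocation with equal \NSW\ and equal overlap sum but with $\modulus{O_i\cap O_i^*}$ increased and $\modulus{O_j\cap O_j^*}$ decreased; since $i<j$, this contradicts the lexicographic tiebreak. You instead argue that $K'=j$ would be admissible, contradicting the maximality of $K$; the work then goes into eliminating intermediate agents $k\in(K,j)$ with $\tilde b_k=b_k^*\ne b_k$, which you do by chaining \Cref{lemma:orderingTildeB}, \Cref{lemma:eitherSBorBBpaths}, and \Cref{prop:bigGoodsInOpt}. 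The paper's route is shorter and is the only place in the whole argument where the sum-lex-closest condition (3) on $O^*$ is actually used; your route avoids that condition entirely and stays within the structural lemmas already proved, at the cost of a longer case analysis.
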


\begin{proof}
 First, let $i \in \agents$ such that $\tilde b_i = b_i > b^*_i$. We show that $i$ satisfies the condition defining $K$. Let $j \leq i$. If $\tilde b_j = b_j > b^*_j$, then there is nothing to show. If $b_j = b^*_j$, then because of the ordering of $\agents$ we have $\tilde b_j = b^*_j = b_j \geq b_i > b^*_i$. The only case left is $\tilde b_j = b^*_j \neq b_j$. In this case,~$\Cref{lemma:orderingTildeB}$ tells us that $b_i > b_j$, a contradiction to the ordering on $\agents$. Thus $i \leq K$.

Finally, it remains to consider $i \leq K$ and $j > K$ and show that $b^*_j \leq b^*_i$. If $\tilde b_j = b_j > b^*_j$, then $j \le K$, which is not the case. If $\tilde b_j = b^*_j \neq b_j$, then by~$\Cref{lemma:orderingTildeB}$, we have $b^*_i \geq b^*_j$. The only case left is $b_j = b^*_j$.
If $b_i = b_i^*$ this is trivial by the ordering of agents, so we assume $b_i > b_i^*$. Now if $b^*_j > b^*_i$, then let $k \in \agents$ such that there exists an \SB-balancing path in $\transGraph{O^*}{O}$ from $k$ to $i$. Note that $i$ cannot have a small good in $O^*$ -- otherwise we could trade along the \SB-balancing path, give $i$'s small good to $k$ and obtain an allocation that Pareto-dominates $O^*$. Hence $v_i(O^*_i) = b_i^*$. But then trading along the \SB-balancing path and giving a big good of $j$ to $k$ does not decrease the NSW but increases $\modulus{O_i \cap O^*_i}$ by one and decreases $\modulus{O_j \cap O^*_j}$ by one, a contradiction to the fact that $O^*$ is closest to $O$ since $i < j$.
\end{proof}

Now, with \Cref{prop:sameNumberBigGoods}, we have that \Balance$(\instance, \tilde{O})$ returns an optimal allocation. Suppose we run \Balance$(\instance, \tilde{O})$. Let $\tilde{O}^t$ denote the allocation and $\tilde b^t$ the vector of big goods after $t$ rounds of phase 3, and let $\tilde{T}$ be the last step before \Balance$(\instance, \tilde{O})$ terminates. The previous lemma shows that the allocation $\tilde{O}^{\tilde{T}}$ is an optimal allocation (possibly different from $O^*$). \Balance\ moves big goods from agents $i \le K$ and assigns them as small to agents $j > K$ as long as it is strictly profitable for the \NSW. For this reason, for every agent $j > K$ the number of big goods stays the same during the procedure. 
In $\tilde{O}^{\tilde{T}}$, every agent $j > K$ has $b_j^*$ big goods, resp. for the agents $i \le K$, the numbers of big goods can be different from $b_i^*$. Recall that there are no \BB-balancing cycles in $\transGraph{\tilde O}{O}$, and the execution of \Balance$(\instance, \tilde{O})$ will not introduce any of them. Moreover, the execution of \Balance$(\instance, \tilde{O})$ will not introduce \BB-balancing paths adjacent to agents $i \le K$. Hence, all \BB-balancing paths in $\transGraph{\tilde{O}^{\tilde{T}}}{O}$
also exist in $\transGraph{O^*}{O}$. Recall that $O^*$ is an optimal allocation sum-lex-closest to $O$, which implies that $\tilde{O}^{\tilde{T}}$ is an optimal allocation that is sum-closest to $O$. Note that this allows us to apply \Cref{prop:bigGoodsInOpt} and \Cref{cor:bigGoodsInOTilde} with $\tilde{O}^{\tilde{T}}$ instead of $O^*$.

\begin{corollary}
    \label{cor:bigGoodsInOTilde}
	Suppose there is an agent $i$ such that $b_i < \tilde{b}_i^{\tilde{T}}$. There is at least one agent $j$ such that $\tilde{b}_j^{\tilde{T}} < b_j$ and there is a \BB-balancing path in $\transGraph{\tilde{O}^{\tilde{T}}}{O}$ from $i$ to $j$. For every such agent $j$, we have 
    \[
	v_i(\tilde{O}^{\tilde{T}}_i) -1 + v \cdot \lfloor 1/v \rfloor  \; < \;  v_j(\tilde{O}^{\tilde{T}}_j) \; < \; v_i(\tilde{O}^{\tilde{T}}_i) - 1 +  v \cdot \lceil 1/v \rceil \enspace,
	\]
	as well as $b_j \leq b_i + 1$, $b_j \le \tilde{b}_i^{\tilde{T}}$, and there is at least one small good in $\tilde{O}^{\tilde{T}}_j$.
\end{corollary}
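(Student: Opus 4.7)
The plan is to observe that Corollary \ref{cor:bigGoodsInOTilde} is essentially just the conjunction of Lemma \ref{lemma:reachableAgentMoreBigGoods}, Proposition \ref{prop:bigGoodsInOpt}, and Corollary \ref{cor:bigGoodsInOpt} applied with $\tilde O^{\tilde T}$ substituted for $O^*$. Hence the whole proof reduces to verifying that $\tilde O^{\tilde T}$ satisfies the hypotheses under which those earlier results were established, namely that $\tilde O^{\tilde T}$ is (i) optimal and (ii) sum-closest to $O$.

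For (i), the preceding construction shows that $\tilde O$ is non-wasteful and, by Lemma \ref{lemma:tildeOoptimal}, well-structured. Applying Proposition \ref{prop:sameNumberBigGoods} to $\tilde O$ then yields directly that \Balance$(\instance,\tilde O)$ terminates in an optimal allocation, so $\tilde O^{\tilde T}$ is optimal. For (ii), I would quote the argument sketched in the paragraph preceding the corollary: no \BB-balancing cycles exist in $\transGraph{\tilde O}{O}$ (since $\transGraph{\tilde O}{O}$ is a subgraph of $\transGraph{O^*}{O}$ and $O^*$ is sum-closest to $O$), and the execution of \Balance$(\instance,\tilde O)$ neither introduces new \BB-balancing cycles nor new \BB-balancing paths incident to agents $i \le K$. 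Consequently every \BB-balancing path in $\transGraph{\tilde O^{\tilde T}}{O}$ is already present in $\transGraph{O^*}{O}$, and since $O^*$ is sum-lex-closest (in particular sum-closest) to $O$, so is $\tilde O^{\tilde T}$.

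With (i) and (ii) established, the three bullets of the corollary then follow mechanically. The existence of an agent $j$ with $\tilde b_j^{\tilde T} < b_j$ and a \BB-balancing path from $i$ to $j$ is Lemma \ref{lemma:reachableAgentMoreBigGoods} applied to $\tilde O^{\tilde T}$. The sandwich inequalities for $v_j(\tilde O_j^{\tilde T})$ in terms of $v_i(\tilde O_i^{\tilde T})$ together with $b_j \le b_i + 1$ and $b_j \le \tilde b_i^{\tilde T}$ are exactly Proposition \ref{prop:bigGoodsInOpt} with $\tilde O^{\tilde T}$ in place of $O^*$. The existence of a small good in $\tilde O_j^{\tilde T}$ is Corollary \ref{cor:bigGoodsInOpt} with the same substitution.

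The only non-routine step is justifying sum-closeness of $\tilde O^{\tilde T}$; this is where care is needed, as one must track that the two transitions $O^* \rightsquigarrow \tilde O$ (removing \SB-edges and forgetting $S$-goods) and $\tilde O \rightsquigarrow \tilde O^{\tilde T}$ (executing \Balance) neither destroy overlap with $O$ nor introduce any \BB-structure absent from $\transGraph{O^*}{O}$. Once these structural facts are in hand, everything else is a direct invocation of the prior lemmas, so no new calculation is required and the proof is short.
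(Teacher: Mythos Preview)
Your proposal is correct and follows essentially the same route as the paper. The paper likewise states the corollary without a separate proof, relying on the preceding paragraph to establish that $\tilde{O}^{\tilde{T}}$ is optimal (via Lemma~\ref{lemma:tildeOoptimal} and Proposition~\ref{prop:sameNumberBigGoods}) and sum-closest to $O$ (via the absence of new \BB-balancing structure), after which Lemma~\ref{lemma:reachableAgentMoreBigGoods}, Proposition~\ref{prop:bigGoodsInOpt}, and Corollary~\ref{cor:bigGoodsInOpt} apply verbatim with $\tilde{O}^{\tilde{T}}$ in place of $O^*$.
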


We derive another useful property when starting \Balance\ from $\tilde{O}$.

\begin{lemma}\label{lemma:MaxMinInTildeO}
At any time step $t \leq \tilde{T}$ of  Phase 3 in \Balance$(\instance, \tilde{O})$ and  for any agent $i$ who receives a small good, $\min\limits_{k \leq K} b^*_k >  v_i(\tilde{O}^t_i)$ holds.
\end{lemma}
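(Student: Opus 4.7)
My plan is to exploit the strict-NSW-improvement condition that triggers each Phase~3 move, together with the structural invariants established in the proof of \Cref{prop:sameNumberBigGoods}.

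First, I would recall from the inductive analysis in the proof of \Cref{prop:sameNumberBigGoods} that every move in Phase~3 of \Balance$(\instance, \tilde{O})$ transfers a big good from an agent $i_1^t \le K$ (whose valuation, by \Cref{lem:propertiesAlgo}(1), consists entirely of big goods) to an agent $i_2^t > K$. Consequently, the agent $i$ receiving the small good at step $t$ satisfies $i = i_2^t > K$, and the giver's valuation equals $v_{i_1^t}(\tilde{O}^{t-1}_{i_1^t}) = \tilde{b}^{t-1}_{i_1^t}$, a positive integer.

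Next, I would write out the strict improvement condition at step $t$, namely
\[
(\tilde{b}^{t-1}_{i_1^t} - 1)\bigl(v_i(\tilde{O}^{t-1}_i) + v\bigr) > \tilde{b}^{t-1}_{i_1^t} \cdot v_i(\tilde{O}^{t-1}_i),
\]
which simplifies to $v_i(\tilde{O}^t_i) = v_i(\tilde{O}^{t-1}_i) + v < v \cdot \tilde{b}^{t-1}_{i_1^t}$. Hence the new valuation of $i$ is strictly below $v \cdot \tilde{b}^{t-1}_{i_1^t}$.

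To conclude, I would show that $v \cdot \tilde{b}^{t-1}_{i_1^t} \le \min_{k \le K} b^*_k$. The strategy is a contradiction: if instead $\tilde{b}^{t-1}_{i_1^t} > \min_{k \le K} b^*_k / v$, then one can construct a modification of the optimal allocation $\tilde{O}^{\tilde{T}}$ by transferring $\lceil 1/v \rceil$ small goods from some heavily-loaded receiver back to an agent indexed by $\arg\min_{k \le K} b^*_k$ and simultaneously giving that agent the big good that $i_1^t$ keeps at termination. Using \Cref{cor:bigGoodsInOTilde} (applied to $\tilde{O}^{\tilde{T}}$ in place of $O^*$) to identify a suitable \BB-balancing target and invoking the optimality and sum-lex-closest property of $\tilde{O}^{\tilde{T}}$, this modified allocation is shown to have strictly greater \NSW, yielding the required contradiction.

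The main obstacle will be this last step: carefully specifying the target agent at termination reachable from $i_1^t$ via a \BB-balancing path in $\transGraph{\tilde{O}^{\tilde{T}}}{O}$, tracking the net effect of the hypothesized transfer on the NSW, and ensuring the argument is consistent with Phase~3's history of moves. The bound on $\tilde{b}^{t-1}_{i_1^t}$ must be derived from the rationality of $v = p/q$ together with the integer constraints on big-good counts, in a manner analogous to \Cref{prop:bigGoodsInOpt}.
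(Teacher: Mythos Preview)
Your approach has a genuine gap: the intermediate inequality $v \cdot \tilde{b}^{t-1}_{i_1^t} \le \min_{k \le K} b^*_k$ that you set out to prove in your third step is \emph{false in general}, so the contradiction argument you sketch cannot succeed. A small counterexample: two agents, five goods, all big for agent~1 and small for agent~2, and $v=4/5$. Then $O$ gives all five goods to agent~1, $O^*$ gives three big goods to agent~1 and two (small) to agent~2, and $\tilde O = O$ with $K=1$ and $\min_{k\le K} b^*_k = b^*_1 = 3$. At the first Phase~3 step, the giver has $\tilde b^{0}_{i_1^1}=5$, and $v\cdot 5 = 4 > 3$. The lemma itself is still true here (the receiver's valuation after the move is $4/5 < 3$), but your chain of inequalities breaks because the improvement condition only tells you $v_i(\tilde O^t_i) < v\cdot \tilde b^{t-1}_{i_1^t}$, and this upper bound can vastly overshoot $\min_{k\le K} b^*_k$ in early rounds. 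Your proposed contradiction via \Cref{cor:bigGoodsInOTilde} also does not connect: that corollary concerns agents $i$ with $b_i < \tilde b^{\tilde T}_i$, i.e.\ agents \emph{outside} $\{1,\dots,K\}$, whereas you are trying to argue about the giver $i_1^t\le K$ at an intermediate time, and at termination its valuation is just one of the $b^*_k$'s with no associated \BB-balancing path to exploit.

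The paper's proof avoids this trap by never attempting to bound a single receiver via a single giver at one step. Instead it establishes the global invariant that, at every time $t\le\tilde T$, every agent that has \emph{ever} given a good still has strictly larger valuation than every agent that has \emph{ever} received one (this uses an integrality argument $a+b < M-1 < a+b+1$ applied to three cases depending on the order of the last give/receive times). Since givers' valuations are non-increasing and receivers' are non-decreasing, one can pass to $t=\tilde T$; there the givers' valuations are exactly a subset of $(b^*_k)_{k\le K}$ by \Cref{prop:sameNumberBigGoods}, and a short argument using the definition of $K$ shows that the minimum over givers actually equals $\min_{k\le K} b^*_k$. The key idea you are missing is this separation-of-populations invariant, which is what lets you replace the step-local bound $v\cdot \tilde b^{t-1}_{i_1^t}$ by the correct terminal quantity.
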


\begin{proof}
    Let $I_1^t$ and $I_2^t$  be the set of agents giving goods away and of agents receiving small goods during rounds $1,\ldots,t$ in phase 3, respectively. 
    
    We first show $\min_{i \in I_1^t} v_i(\tilde{O}^t_i) > \max_{i \in I_2^t} v_i(\tilde{O}^t_i)$.
    
    Assume towards a contradiction that we have $i \in I_1^t$ and $j \in I_2^t$ such that $v_i(\tilde{O}^t_i) \leq v_j(\tilde{O}^t_j)$. Let $s_1$ and $s_2$ denote the latest iterations such that $i = i_1^{s_1+1}$ and $j = i_2^{s_2+1}$. We have $v_i(\tilde{O}^{s_1}_i) = v_i(\tilde{O}^t_i)+1$ and $v_j(\tilde{O}^{s_2}_j) = v_j(\tilde{O}^t_j)-v$. First observe a condition for the improvement of the NSW when a big good becomes a small good: for $m, M \in \mathbb R$, we have $(M-1)(m+v) > Mm$ if and only if $m/v < M-1$, since $v \in (0,1)$. 
    
    Now if $s_2 < s_1$, let $m = \min_{k \in \agents} v_k(\tilde{O}^{s_1}_k)$, $M = \max_{k \in \agents} v_k(\tilde{O}^{s_1}_k)$, and $a, b \in \NN$ such that $m = av+b$. Then $m+v \geq v_j(\tilde{O}^{s_1}_j) = v_j(\tilde{O}^t_j) \geq v_i(\tilde{O}^t_i) = v_i(\tilde{O}^{s_1}_i)-1 = M-1$. In round $s_1 + 1$ the NSW improves by moving a big good away from $i = i_1^{s_1+1}$. With the improvement condition \[a+b < m/v < M-1 \leq m+v = (a+1)v+b < a+b+1,\]
    but $a+b < M-1 < a+b+1$ is impossible since the numbers are integers. This yields a contradiction to $v_i(\tilde{O}^t_i) \leq v_j(\tilde{O}^t_j)$.
    
    If $s_1 < s_2$, let $m = \min_{k \in \agents} v_k(\tilde{O}^{s_2}_k)$, $M = \max_{k \in \agents} v_k(\tilde{O}^{s_2}_k)$, and $a, b \in \NN$ such that $m = av+b$. Then $M-1 \leq v_i(\tilde{O}^{s_2}_i) = v_i(\tilde{O}^t_i) \leq v_j(\tilde{O}^t_j) = v_j(\tilde{O}^{s_1}_j)+v = m+v$. Again, we obtain a contradiction with $a+b < M-1 < a+b+1$.
    
    Finally, if $s_1 = s_2$, let $m = \min_{k \in \agents} v_k(\tilde{O}^{s_2}_k)$, $M = \max_{k \in \agents} v_k(\tilde{O}^{s_2}_k)$, and $a, b \in \NN$ such that $m = av+b$. Then $m+v = v_j(\tilde{O}^{s_2}_j)+v = v_j(\tilde{O}^t_j) \ge v_i(\tilde{O}^t_i) = v_i(\tilde{O}^{s_1}_i) -1 = M-1$. Again, we obtain a contradiction with $a+b < M-1 < a+b+1$.
    
    In conclusion, $\min_{i \in I_1^t} v_i(\tilde{O}^t_i) > \max_{i \in I_2^t} v_i(\tilde{O}^t_i)$ for any $t\leq \tilde{T}$.
    
    Since the valuation of the agents in $ I_1^{\tilde{T}}$ are not increasing and the ones of the agents in $ I_2^{\tilde{T}}$ are not decreasing during the Phase 3  we then also have $\min_{i \in I_1^{\tilde{T}}} v_i(\tilde{O}^{\tilde{T}}_i) > \max_{i \in I_2^t} v_i(\tilde{O}^t_i)$. To conclude, we recall \Cref{prop:sameNumberBigGoods} shows that $(v_k(\tilde{O}^{\tilde{T}}_k))_{k=1,\ldots,K}$ is a permutation of $(b_k^*)_{k=1,\ldots,K}$.   As a consequence, $ \min_{i \in I_1^{\tilde{T}}} v_i(\tilde{O}^{\tilde{T}}_i)\geq \min\limits_{k \leq K} b^*_k$ as $I_1^{\tilde{T}}\subseteq \set{1, \dots, K}$. Moreover, if $ \min_{i \in I_1^{\tilde{T}}} v_i(\tilde{O}^{\tilde{T}}_i)> \min\limits_{k \leq K} b^*_k$, since every agent in $I_1^{\tilde{T}}$ gave away at least one good from their initial bundle, then $b_h -1> \min\limits_{k \leq K} b^*_k$ for each $h \in I_1^{\tilde{T}}$. Hence, by definition of $K$, $ \arg\min\limits_{k \leq K} b^*_k > K$,  a contradiction. In conclusion, $ \min_{i \in I_1^{\tilde{T}}} v_i(\tilde{O}^{\tilde{T}}_i)= \min\limits_{k \leq K} b^*_k$, thus, for any $i \in I^t_2$, we have $\min\limits_{k \leq K} b^*_k >  v_i(\tilde{O}^t_i)$.
\end{proof}

To show the approximation factor of our algorithm, we relate $\tilde{O}^{\tilde{T}}$ to the output of our algorithm, i.e., the output of \Balance$(\instance, \tilde{O})$ to the one of \Balance$(\instance, O)$. For this purpose, we track the allocations in \Balance$(\instance, \tilde{O})$ and simultaneously apply them on $O$. Let $O^t$ denote the allocation and $b^t$ the vector of big goods after $t$ rounds of phase 3. We couple the changes to big goods in $\tilde{O}^t$ and $O^t$ in the following way: 1) In a step $t$ of phase 2, a globally small good from $S$ is added to both $O^t$ and $\tilde{O}^t$. It is given to an agent with the current smallest valuation in the respective allocation. 2) In a step $t$ of phase 3, in which a big good is removed from the bundle of agent $i\leq K$ in $\tilde{O}^t$, we also remove one big good from $i$'s bundle in $O^t$. The good is given to an agent with the current smallest valuation in the respective allocation. 
Note that we couple the removal of the big good, but as small good it then gets assigned to potentially different agents in $O^t$ and $\tilde{O}^t$.

Let $T$ be the final step of \Balance$(\instance, O)$. Observe that in every step $t \leq T$, we can assume that the coupled process on $O$ behaves exactly like \Balance$(\instance, O)$. 
However, it might be that $\tilde{T} \neq T$. Then, if $\tilde{T} > T$, the coupled process forces \Balance$(\instance,O)$ to continue turning big goods into small ones although this is not profitable for the NSW.

We also observe that, if $i$ is the agent with current lowest valuation in $O^{\tilde{T}} $, then, she will receive a small good. More formally, we prove the following lemma. 

\begin{lemma}\label{lemma:senderNotReceiver}
    In the beginning of any round $t+1 \le \tilde{T}$, suppose $h$ is any agent with minimum valuation. Then $v_h(O^t_h) < \min\limits_{k \leq K} b^*_k$, and the good reallocated in round $t+1$ is small for $h$.
\end{lemma}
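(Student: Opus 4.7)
Set $m := \min_{k\le K} b^*_k$ throughout. The plan is to establish both parts of the lemma by induction on $t$, where the induction hypothesis encodes that in all earlier rounds of phase 3 the good moved in $O$ was small for the receiver (this is exactly the second assertion applied to those earlier rounds). The inductive step will combine a water-level comparison between $O^t$ and $\tilde O^t$ with a structural swap argument based on the dichotomous optimality of $O$.

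For the valuation bound I would first apply \Cref{lemma:MaxMinInTildeO} to the receiver in round $t+1$: since $t+1 \le \tilde T$ this receiver has post-value $< m$ and therefore pre-value $< m-v$, so $\min_i v_i(\tilde O^t_i) < m$. The proof of \Cref{prop:sameNumberBigGoods} tells me that all small goods in $\tilde O$ go to agents $j>K$, while the well-structured property gives $b^*_j \le m$ for each such $j$. So in $\tilde O^t$ the water $v(|S|+t)$ added on $\{j>K\}$ fell short of raising every $j>K$ to level $m$; quantitatively,
\[
\sum_{j>K}(m-b^*_j) \;>\; v(|S|+t).
\]
The induction hypothesis then ensures that in $O$ no small good went to an agent $\le K$ during phase 3, so the small goods received by $\{j>K\}$ in $O^t$ number at most $|S|+t$. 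Combined with the identity $\sum_{j>K}b_j=\sum_{j>K}b^*_j$ (from non-wastefulness of both $O$ and $\tilde O$), the water $O$ would need on $\{j>K\}$ to reach level $m$ is at least $\sum_{j>K}(m-b^*_j)$, which strictly exceeds the $\le v(|S|+t)$ of water available. Hence some $j>K$ has $v_j(O^t_j)<m$. Finally, for every $i\le K$ the coupling and \Cref{prop:sameNumberBigGoods} force $\ell_i \le b_i-m$ (where $\ell_i$ is the number of big goods the coupling has taken from $i$), so $v_i(O^t_i) \ge (b_i-\ell_i) + v s_i^{O,t} \ge m$. Thus the minimum of $O^t$ is attained by some $h>K$ with $v_h(O^t_h)<m$.

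For the second assertion I would first observe that $v_h<m$ and $h>K$ give $b_h\le m-1$, and that a short case split on Definition~\ref{def:well-structured} shows $b_i\ge m+1$ for every $i\le K$: in the case $b_i=b^*_i$, the companion agent $j$ required by the definition satisfies $b^*_j<b^*_i$ and $b^*_j\ge m$, forcing $b^*_i>m$. In particular, the sender $i := i_1^{t+1}$ has $b_i-b_h \ge 2$. I would then rule out any good $g \in O_i \cap B_i \cap B_h$ in the original dichotomous allocation: swapping $g$ from $i$ to $h$ would strictly improve the NSW when $b_h\ge 1$ (since $(b_i-1)(b_h+1)>b_ib_h$ whenever $b_i-b_h\ge 2$), or when $b_h=0$ would add $h$ to the set of agents with non-empty bundle without emptying $i$ (since $b_i-1\ge m\ge 1$); both contradict the choice of $O$. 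Therefore $O_i\cap B_i\cap B_h=\emptyset$, and since the coupling only removes big goods from $i$'s bundle, $O^t_i\cap B_i\subseteq S_h$ as well. Because $|O^t_i\cap B_i|=b_i-\ell_i\ge m\ge 1$ (the case $m=0$ being vacuous, since \Cref{lemma:MaxMinInTildeO} would then force $\tilde T=0$), the coupling can choose a big-for-$i$ good in $O^t_i$ that lies in $S_h$, and this is the good moved in round $t+1$.

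The hardest step will be transferring the strict water deficit from $\tilde O$ to $O$: it relies on (i) small goods in $\tilde O$ never going to $\le K$ (from the proof of \Cref{prop:sameNumberBigGoods}), (ii) the identity $\sum_{j>K}b_j=\sum_{j>K}b^*_j$ coming from non-wastefulness, and (iii) the induction hypothesis to rule out small goods going to $\le K$ in phase~3 of $O$ as well. The rest is a clean optimality-swap argument in the dichotomous instance, enabled by the lower bound $b_i\ge m+1$ on the sender.
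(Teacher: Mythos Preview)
Your overall strategy---comparing the total value on the agents $j>K$ in $\tilde O^t$ and in $O^t$---is close in spirit to the paper's proof, and your second half (deriving $b_i \ge m+1$ directly from Definition~\ref{def:well-structured} and then using the dichotomous optimality of $O$ to rule out any good of $O_i$ being big for $h$) is correct and matches the paper's argument.

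There is, however, a real gap in your derivation of the key inequality
\[
\sum_{j>K}(m-b^*_j) \;>\; v\,(|S|+t).
\]
You deduce it from the single fact $\min_j v_j(\tilde O^t_j) < m$ together with the greedy intuition ``the water fell short of raising every $j>K$ to level $m$''. That implication is \emph{false} for a generic greedy water-filling process: with two agents at heights $0$ and $5$, target $m=6$, and $v=0.7$, pouring ten units (total water $7$, exactly the deficit) leaves the agents at $6.3$ and $5.7$, so the minimum is still below $m$ even though water equals deficit. Greedy filling can overshoot one agent past $m$ while another stays below. What makes your inequality true in the actual setting is that \Cref{lemma:MaxMinInTildeO} applies to \emph{every} receiver up through round $t$, forcing each $j>K$ that ever received a good to satisfy $v_j(\tilde O^t_j) < m$; together with $b^*_j \le m$ for non-receivers, this gives $v_j(\tilde O^t_j) \le m$ for all $j>K$ with at least one strict, hence $\sum_{j>K} v_j(\tilde O^t_j) < (n-K)m$, which is your inequality. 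You only invoked \Cref{lemma:MaxMinInTildeO} for the round-$(t+1)$ receiver, and that alone does not suffice.

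The paper sidesteps this subtlety by arguing by contradiction: assuming $\min_k v_k(O^t_k) \ge m$, one obtains that no agent $i\le K$ ever received a good in $O$, hence $v_i(O^t_i)=v_i(\tilde O^t_i)$ for $i\le K$ and therefore $\sum_{j>K} v_j(O^t_j)=\sum_{j>K} v_j(\tilde O^t_j)$ \emph{exactly}. The right-hand side is then shown to be strictly below $(n-K)\cdot\min_k v_k(O^t_k)$ using the full strength of \Cref{lemma:MaxMinInTildeO}, contradicting the trivial lower bound on the left-hand side. This contradiction route avoids needing the induction hypothesis to bound $\sum_{j>K} v_j(O^t_j)$ from above; your direct route is workable too, but only once the gap above is closed.
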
 

\begin{proof}
We prove the statement by strong induction by assuming that it holds for all $t' < t$. We first show that $v_{h}(O^t_{h}) < \min\limits_{k \leq K} b^*_k$ for any agent $h$ with minimum valuation in $O^t$.

 Assume towards a contradiction that we have $\min\limits_{k \leq K} b^*_k \leq v_h(O^t_h) = \min\limits_{j \in \agents} v_j(O^t_j)$.

\Cref{prop:sameNumberBigGoods} shows that $(v_k(\tilde{O}^{\tilde{T}}_k))_{k=1,\ldots,K}$ is a permutation of $(b_k^*)_{k=1,\ldots,K}$. Thus, for all $i \le K$ and $t' < t$ we have $v_i(O^{t'}_i) \geq v_i(\tilde{O}^{t'}_i) \geq v_i(\tilde{O}^{\tilde T}_i) \geq \min\limits_{k \leq K} b^*_k$, i.e., so no such agent can have a minimum valuation before time step $t$, and thus none of them received a good. 
Thus, we have, for $i \leq K$, $v_i(O^t_i) = v_i(\tilde O^t_i)$, implying  $\sum\limits_{i \leq K} v_i(O^t_i) = \sum\limits_{i \leq K} v_i(\tilde O^t_i)$. Furthermore, until the end of round $t$ the (re-)allocated goods were always small for the agents receiving them, hence $\sum\limits_{j > K} v_j(O^t_j) = \sum\limits_{j > K} v_j(\tilde O^t_j)$.
	
Now look at $\tilde{O}^{t}$. For $j > K$ that did not receive any good, using the hypothesis we have
\[  
    v_j(\tilde O^{t}_j) = b^*_j \leq \min\limits_{k \leq K} b^*_k \leq v_h(O^{t}_h) = \min\limits_{k \in \agents} v_k(O^{t}_k) \enspace, 
\]
since $b^*_j \leq \min\limits_{k \leq K} b^*_k $ for $j >K$ by \Cref{lemma:orderingTildeB}. On the other hand, for any agent $j>K$ that received some small good up to round $t-1$, we
have by \Cref{lemma:MaxMinInTildeO} that 
$v_j(\tilde O^t_j)< \min\limits_{k \leq K} b^*_k \le v_h(\tilde O^t_h) = \min\limits_{k \in \agents} v_k(O^t_k)$. In conclusion,  $\sum\limits_{j > K} v_j(O^t_j) = \sum\limits_{j > K} v_j(\tilde O^t_j) <  \min\limits_{k \in \agents} v_k(O^t_k)\cdot(n-K+1)$ -- a contradiction.

Let us now show that the received good at time step $t+1$ is small for any minimum valuation agent $h$. Let $j$ be the highest valuation agent at time step $t+1$ who is giving away a good from her bundle. Since until time step $t+1$ no agent $i \leq K$ received a  good in the transformation of $O$ into $O^{\tilde{T}}$, the following inequalities hold 

	$
	    b_j \ge v_j(O^t_j) = v_j(\tilde O^t_j) > v_j(\tilde O^{\tilde T}_j) \ge \min\limits_{k \le K} b^*_k > v_h(O^t_h) \ge b_h \enspace.
	$

    Since $b_j$, $\min_{k \leq K} b^*_k$ and $b_h$ are integers, we have $b_j> b_h+1$.
	On the other hand, if a good in $O_j$ is big for $h$ then $b_h+1 \geq b_j$ as $O$ is an optimal allocation when $v=0$. In conclusion, we get $b_h+1 \geq b_j > b_h+1$ -- a contradiction.
\end{proof}

This lemma has several implications reported in the following.

\begin{corollary}\label{cor:behaviorOnO}
For any $t< \tilde{T}$, we have that at time step $t$
\begin{enumerate}
\item no agent $i\leq K$ becomes a minimum valuation agent in $O^t$,
\item no agent $j > K$ receives big goods,
\item no agent $i> K$ becomes a maximum valuation agent in $O^t$ .
\end{enumerate}
\end{corollary}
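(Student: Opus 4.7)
The corollary collects three consequences of Lemma~\ref{lemma:senderNotReceiver}, which I would prove jointly by induction on $t$. The common structural ingredient is that $\tilde b_i = b_i$ for every $i \leq K$: by the definition of well-structured, either $\tilde b_i = b_i > b_i^*$, or $b_i = b_i^* = \tilde b_i$. Since the coupling removes the same big goods from each $i \leq K$ in both $O^t$ and $\tilde O^t$, and Lemma~\ref{lemma:MaxMinInTildeO} prevents any $i \leq K$ from ever receiving a good in $\tilde O^t$, one obtains the key identity $v_i(O^t_i) = v_i(\tilde O^t_i)$ throughout the coupled run, provided no $i \leq K$ receives a good on the $O$-side either --- which is precisely item 1, so the three items have to be maintained in tandem.

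For item 1, the identity above combined with the monotonic non-increase of $v_i(\tilde O^t_i)$ along \Balance$(\instance,\tilde O)$ and Proposition~\ref{prop:sameNumberBigGoods} yields $v_i(O^t_i) \geq v_i(\tilde O^{\tilde T}_i) \geq \min_{k \leq K} b^*_k$ for every $i \leq K$, whereas Lemma~\ref{lemma:senderNotReceiver} puts the minimum-valuation agent strictly below $\min_{k \leq K} b^*_k$. Item 2 is then immediate: by item 1 every recipient lies in $\{K+1,\ldots,n\}$, and by Lemma~\ref{lemma:senderNotReceiver} the reallocated good is small for the recipient (it is globally small in phase 2 anyway), so no $j > K$ ever acquires a big good.

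For item 3 I would proceed by contradiction, supposing some $j > K$ attains the maximum in $O^t$ for some $t < \tilde T$. By item 2, $v_j(O^t_j) = b_j + v\, s_j^t$, where $s_j^t$ counts the small goods $j$ has received. If $s_j^t \geq 1$, applying Lemma~\ref{lemma:senderNotReceiver} at the last round in which $j$ received a good and using coprimality of $p$ and $q$ yields $v_j(O^t_j) \leq \min_{k \leq K} b^*_k + v - 1/q$; the fact that \Balance$(\instance,\tilde O)$ still performs a strictly profitable move at round $t+1$ forces $\max_{i \leq K} v_i(\tilde O^t_i)$, and hence $\max_{i \leq K} v_i(O^t_i)$, to exceed $\min_{k \leq K} b^*_k$ by at least $1$, contradicting $j$ being maximum via integrality on $\{1,\ldots,K\}$. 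If $s_j^t = 0$, then $v_j(O^t_j) = b_j$: the well-structured inequality $b_j^* \leq \min_{k \leq K} b^*_k$ dispatches the subcase $b_j \leq b_j^*$ directly, and in the subcase $b_j > b_j^*$, Lemma~\ref{lemma:orderingTildeB} furnishes an agent $i \leq K$ with $b_i > b_j$ whose valuation in $\tilde O^t$, hence in $O^t$, has not fallen below $b_j$.

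The main obstacle lies in the last subcase of item 3, where $s_j^t = 0$ and $b_j > b_j^*$: the uniform bound $v_i(O^t_i) \geq \min_{k \leq K} b^*_k$ is not sharp enough, and one must track the trajectory of $v_i(\tilde O^t_i)$ in \Balance$(\instance,\tilde O)$ carefully --- using Proposition~\ref{prop:sameNumberBigGoods} and Lemma~\ref{lemma:orderingTildeB} --- to ensure that the specific agent $i$ produced by Lemma~\ref{lemma:orderingTildeB} has not already shed enough big goods for its valuation to drop to $b_j$.
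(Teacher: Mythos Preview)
Your treatment of items (1) and (2) matches the paper's: both are read off from the proof of Lemma~\ref{lemma:senderNotReceiver} together with the coupling identity $v_i(O^t_i)=v_i(\tilde O^t_i)$ for $i\le K$.

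For item (3) you correctly isolate the delicate subcase ($s_j^t=0$ and $b_j>b^*_j$), but the route you choose there --- tracking the trajectory of the particular agent $i\le K$ produced by Lemma~\ref{lemma:orderingTildeB} and arguing its valuation has not yet dropped to $b_j$ --- is the wrong one, and you rightly flag it as an unresolved obstacle. The paper avoids this difficulty entirely by comparing $j$'s valuation not to a fixed $i$ but to the \emph{current giver} $\ell\le K$ at round $t{+}1$: since $\ell$ still gives away a good, $v_\ell(O^t_\ell)=v_\ell(\tilde O^t_\ell)\ge v_\ell(\tilde O^{\tilde T}_\ell)+1\ge \min_{k\le K} b^*_k+1$. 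Thus the whole subcase reduces to the static inequality
\[
b_j \;\le\; \min_{k\le K} b^*_k \qquad\text{for every } j>K.
\]
(The paper's printed proof writes $b^*_i$ where $b_i$ is meant and asserts this inequality without detail.) To see it: the minimum $\min_{k\le K} b^*_k$ is attained at some $k_0$ with $b_{k_0}>b^*_{k_0}$, by the second bullet of Definition~\ref{def:well-structured}. If $b_j\le b^*_j$, the third bullet of Definition~\ref{def:well-structured} gives $b_j\le b^*_j\le b^*_{k_0}$. If $b_j>b^*_j$ with $j>K$, then $\tilde b_j=b^*_j$, so by Lemma~\ref{lemma:eitherSBorBBpaths} there is a \BB-balancing path from some $j'$ with $b^*_{j'}>b_{j'}$ to $j$; Proposition~\ref{prop:bigGoodsInOpt} gives $b_j\le b^*_{j'}$, and Lemma~\ref{lemma:orderingTildeB} applied to $k_0$ and $j'$ gives $b^*_{j'}\le b^*_{k_0}$. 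Either way $b_j\le \min_{k\le K} b^*_k < v_\ell(O^t_\ell)$, so $j$ cannot be a maximum-valuation agent. No trajectory tracking is needed.
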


\begin{proof}
Statements (1) and (2) immediately follow by the proof of \Cref{lemma:senderNotReceiver}. 

We now show statement (3).
If $i$ has no small good, then $v_i(O^t_i) = b^*_i \leq \min\limits_{k \leq K} b^*_k$, and if $i$ has a small good, then $v_i(O^t_i) \le \min\limits_{j > K} v_j(O^t_j)+v < \min\limits_{k \le K} b^*_k + v$. Now, let $j < K$ be the agent that is giving away a good at time step $t+1$ with $t< \tilde{T}$, then since $j$ ends up with $v_j(\tilde O^{\tilde T}_j) \geq \min\limits_{k \le K} b^*_k $, we have  $v_j(O^t_j) = v_j(\tilde O^t_j) \ge \min\limits_{k \le K} b^*_k + 1 > v_i(O^t_i)$. So $i$ does not have maximum valuation.
\end{proof}

The next lemma shows that  $\NSW(O^{\tilde{T}}) $ is not higher than $ \NSW(O^T)$. 

\begin{lemma}\label{lemma:decreasingWelfare}
	 $\NSW(O^{\tilde{T}}) \leq \NSW(O^T)$.
\end{lemma}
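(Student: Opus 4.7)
My plan is to split on whether $\tilde T \le T$ or $\tilde T > T$, and in each case exploit the coupling together with \Cref{cor:behaviorOnO}.

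If $\tilde T \le T$, the coupled process on $O$ coincides with \Balance$(\instance,O)$ for every $t \le T$ (as observed just before the lemma), so $O^{\tilde T}$ is simply an intermediate state along the run of \Balance$(\instance,O)$. Every phase-2 move weakly increases the NSW (it only adds positive value to some agent) and every phase-3 move strictly increases it by construction. Hence the remaining $T - \tilde T$ steps of \Balance$(\instance,O)$ can only weakly improve the NSW, which already gives $\NSW(O^{\tilde T}) \le \NSW(O^T)$.

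The interesting case is $\tilde T > T$, where the coupling forces $O^t$ to continue moving past the termination of \Balance$(\instance,O)$. I would show that every such forced move is NSW-non-increasing. Combining the coupling rules with \Cref{cor:behaviorOnO}, for every $t < \tilde T$ the max-valuation agent of $O^t$ lies in $\{1,\dots,K\}$ while the min-valuation agent lies in $\{K+1,\dots,n\}$; moreover agents $\le K$ can only lose big goods and agents $> K$ can only receive small goods. Consequently, along $t \in \{T,\dots,\tilde T\}$, the quantities $M(t) := \max_{k\le K} v_k(O^t_k)$ and $m(t) := \min_{k>K} v_k(O^t_k)$ are, respectively, non-increasing and non-decreasing.

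The termination of \Balance$(\instance,O)$ at step $T$ supplies the crucial inequality. Let $j^* \le K$ and $h^* > K$ be the max- and min-valuation agents in $O^T$; since $O^T_{j^*} \subseteq B_{j^*}$, either $v_{j^*}(O^T_{j^*}) = 0$, in which case all valuations vanish and the lemma is trivial, or $A_{j^*}$ contains a big good $g$ and the fact that moving $g$ to $h^*$ fails to strictly improve the NSW unpacks, in both subcases $g \in B_{h^*}$ and $g \in S_{h^*}$, into $v_{h^*}(O^T_{h^*}) \ge v\bigl(v_{j^*}(O^T_{j^*}) - 1\bigr)$, i.e.\ $m(T) \ge v(M(T) - 1)$. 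Now consider any forced move at step $t \ge T$: by \Cref{lemma:senderNotReceiver}, the good leaves some $i \le K$ as a big good and arrives at the current minimum agent $h > K$ as a small good. The NSW is non-increasing iff $v_h(O^t_h) \ge v(v_i(O^t_i) - 1)$, which follows from
\[
v_h(O^t_h) \;=\; m(t) \;\ge\; m(T) \;\ge\; v(M(T) - 1) \;\ge\; v(M(t) - 1) \;\ge\; v(v_i(O^t_i) - 1).
\]
Iterating from $t = T$ up to $t = \tilde T - 1$ yields $\NSW(O^{\tilde T}) \le \NSW(O^T)$.

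The main obstacle I expect is exactly this second case: converting the termination of \Balance$(\instance,O)$ at step $T$ into the precise inequality $m(T) \ge v(M(T) - 1)$, carefully handling the subcase where a candidate good at $j^*$ happens to be big for $h^*$, and then propagating the inequality through all subsequent forced moves using the monotonicity of $M$ and $m$ granted by the coupling.
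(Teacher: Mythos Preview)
Your proposal is correct and follows essentially the same approach as the paper: the same case split on $\tilde T \lessgtr T$, and in the case $\tilde T > T$ the same mechanism of propagating the termination inequality $m(T) \ge v(M(T)-1)$ forward using monotonicity of the max and min valuations (granted by \Cref{cor:behaviorOnO} and \Cref{lemma:senderNotReceiver}). The only cosmetic difference is that the paper phrases this as a step-by-step induction on $t$, whereas you obtain the monotonicity of $M(t)$ and $m(t)$ directly from the fact that agents $\le K$ only lose big goods and agents $> K$ only receive small ones.
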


\begin{proof}
	
	If $\tilde{T} \leq T$ the claim is immediate, since $O^T$ is the output of \Balance\ applied on $O$ and thus for any time step $t < T$, $ \NSW(O^t)$ is strictly increasing.
	
	Hence, we now consider the case $T < \tilde{T}$ and we inductively show $\NSW(O^{t}) \geq  \NSW(O^{t+1})$, for each $T \leq t < \tilde{T}$. 
	Let us denote by $v^{t}_{max}$ and $v^{t}_{min}$ the maximum and the minimum valuation in $O^t$.
	
	If $t=T$, since  \Balance$(\mathcal{\instance}, O)$ terminates at time step $T$  we have $\NSW(O^{T}) \geq \NSW(O^{T+1})$. 

    We now assume the statement true until $t$ and show it for $t+1 < \tilde{T}$, i.e., our inductive hypothesis is $\NSW(O^{t+1}) \leq \NSW(O^{t})$, which is equivalent to $v\cdot v^{t}_{max} \leq v^t_{min} + v$.
    
    \Cref{cor:behaviorOnO} implies that by coupling the processes, the bundles of agents $i \le K$ are the same for $O^t$ and $\tilde{O}^t$, even for $T \le t \le \tilde{T}$ and that we are moving a big good from the bundle of a maximum valuation agent to one of the agent with minimum valuation. Moreover, \Cref{{cor:behaviorOnO}} shows that the agent giving away the good does not become one with minimum valuation, so $v_{max}^t - 1 > v_{min}^{t+1}$. As the valuation of the agent that receives a good increases, we have $v_{min}^t \leq v_{min}^{t+1}$. Thus, $v_{min}^t < v_{max}^t - 1$, and the agent that receives a good increases his value by at most $v$, so $v_{max}^{t+1} \le v_{max}^t$. Overall, we have
	
	\[v \cdot v^{t+1}_{max} \leq v \cdot v^t_{max} \le v^t_{min} + v \leq v^{t+1}_{min} + v \enspace .\]

	This proves $\NSW(O^{t+2}) \leq \NSW(O^{t+1})$. 
\end{proof}

Finally, to bound the approximation factor of Algorithm \ref{Algorithm}, we show that we can partition the agents into two groups. In one group, the agents have the same valuation in $O^{\tilde{T}}$ and $\tilde{O}^{\tilde{T}}$. In the other, the following properties are satisfied:

\begin{itemize}
	\item the utilitarian social welfare is the same (in particular, the number of goods assigned as big/small is the same in the two allocations)
	\item in $O^{\tilde{T}}$ the valuations of any pair of agents differs by at most $v$ 
\end{itemize}

To this end, reconsider $O$ and $\tilde{O}$. We consider a permutation $\sigma$ of $\agents$ such that $\tilde{b}^\sigma = (\tilde{b}^\sigma_1, \dots, \tilde{b}^\sigma_n) = (\tilde{b}_{\sigma(1)}, \dots, \tilde{b}_{\sigma(n)})$ is non-increasing and that minimizes the exchanged indices. Note that the two vectors, $b$ and $\tilde{b}$, are component-wise the same for the first $K$ entries, and $\tilde{b}_i \geq \tilde{b}_j$, for $i\leq K$ and $j>K$. Hence, we have $\sigma(i)=i$ for $i\leq K$. 

\begin{proposition}\label{prop:utilitiesForApproximation}
	There exists $N'\subset \agents$ such that:
	\begin{enumerate}
		\item for each $i\in \agents \setminus N'$, $v_i(O_i^{\tilde{T}})= v_{\sigma(i)}(\tilde{O}_{\sigma(i)}^{\tilde{T}})$, 
		\item  $\sum_{i\in N'} v_i(O_i^{\tilde{T}}) = \sum_{i\in {N}'} v_{\sigma(i)}(\tilde{O}_{\sigma(i)}^{\tilde{T}})$, and 
		\item for each $i, j\in N'$ $\modulus{v_i({O}_i^{\tilde{T}})-v_j({O}_j^{\tilde{T}})} \leq v$.
	\end{enumerate}
\end{proposition}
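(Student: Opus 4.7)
My plan is to take $N' = \{i \in \agents : v_i(O_i^{\tilde T}) \neq v_{\sigma(i)}(\tilde O_{\sigma(i)}^{\tilde T})\}$ so that (1) holds tautologically, and to verify (2) and (3) in turn. As a preliminary observation, I would note that $N' \subseteq \{K+1, \ldots, n\}$: by Lemma \ref{lemma:tildeOoptimal} the first $K$ entries of $\tilde b$ already form the non-increasing prefix of the sorted sequence, hence $\sigma(i) = i$ for every $i \leq K$; and by Corollary \ref{cor:behaviorOnO} no such agent ever receives a good in the coupled process on $O$, while the coupling removes exactly the same number of big goods from their bundles as $\Balance(\instance, \tilde O)$ removes in $\tilde O$. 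This ensures $v_i(O_i^{\tilde T}) = v_i(\tilde O_i^{\tilde T}) = v_{\sigma(i)}(\tilde O_{\sigma(i)}^{\tilde T})$ for every $i \leq K$.

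For (2), I would reduce it to the grand-total identity $\sum_i v_i(O_i^{\tilde T}) = \sum_i v_i(\tilde O_i^{\tilde T})$, using that $\sigma$ is a permutation and that the values on $\agents \setminus N'$ already coincide. Both totals split into a big-good part and a small-good part of size $v \cdot (|S| + |\mathcal{P}_{\SB}|)$, the latter being identical in both allocations because the coupling forces the same number of goods to be treated as small: the $|S|$ globally small goods added in phase 2 and the $|\mathcal{P}_{\SB}|$ big goods reallocated in phase 3. The big-good totals equal $\sum_{i \leq K} b^*_i + \sum_{j > K} b_j$ and $\sum_{i \leq K} b^*_i + \sum_{j > K} b^*_j$ respectively (via Proposition \ref{prop:sameNumberBigGoods} for the first $K$ agents), and these agree because all SB-sinks lie in $\{1, \ldots, K\}$ by Lemma \ref{lemma:tildeOoptimal}, which forces $\sum_{j > K}(b_j - b^*_j) = \sum_i(b_i - b^*_i) - \sum_{i \leq K}(b_i - b^*_i) = |\mathcal{P}_{\SB}| - |\mathcal{P}_{\SB}| = 0$.

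Property (3) is the technical core. I would show that every $j \in N'$ satisfies $v_j(O_j^{\tilde T}) \in [\min_{k \leq K} b^*_k,\, \min_{k \leq K} b^*_k + v)$, which immediately gives $|v_i(O_i^{\tilde T}) - v_j(O_j^{\tilde T})| < v$ for any $i, j \in N'$. If $j$ received a small good in the coupled process on $O$, Lemma \ref{lemma:senderNotReceiver} yields $v_j(O_j^t) < \min_{k \leq K} b^*_k$ at the reception time $t+1 \leq \tilde T$, so the final value satisfies $v_j(O_j^{\tilde T}) < \min_{k \leq K} b^*_k + v$; and if $j$ is a non-receiver, the argument from the proof of Corollary \ref{cor:behaviorOnO}(3) yields $v_j(O_j^{\tilde T}) = b_j \leq \min_{k \leq K} b^*_k$. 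The main obstacle is to rule out members of $N'$ sitting strictly below $\min_{k \leq K} b^*_k$, in particular receivers that absorbed only few small goods or non-receivers with small $b_j$; this is where I would use the conservation identity $\sum_{j > K} b_j = \sum_{j > K} b^*_j$ from the previous step and the sorted structure of $\sigma$ restricted to $\{K+1, \ldots, n\}$ to match each such low-valuation agent $j$ on the $O$-side with a $\sigma(j)$ on the $\tilde O$-side of exactly the same valuation, pulling $j$ out of $N'$ and preserving the claimed bound on the remaining $N'$.
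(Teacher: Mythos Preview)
Your choice of $N'$ and the arguments for (1) and (2) are correct; indeed (1) becomes tautological and (2) reduces to the global welfare identity, which you derive cleanly via the counting $\sum_{j>K} b_j = \sum_{j>K} b^*_j$.

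The difficulty is in (3). Your plan is to pin every $j\in N'$ into the specific window $[\min_{k\le K} b^*_k,\ \min_{k\le K} b^*_k+v)$, but only the upper half of this is actually argued: the appeal to Lemma~\ref{lemma:senderNotReceiver} gives $v_j(O^{\tilde T}_j) < \min_{k\le K} b^*_k + v$ for receivers, and the sorting gives $b_j \le \min_{k\le K} b^*_k$ for non-receivers. The lower bound is the real content, and you acknowledge it as ``the main obstacle'' but then only gesture at a fix: ``match each such low-valuation agent $j$ \dots\ with a $\sigma(j)$ \dots\ of exactly the same valuation, pulling $j$ out of $N'$.'' This is circular --- your $N'$ is already defined as the set of indices where the two values differ, so ``pulling $j$ out of $N'$'' amounts to proving $v_j(O^{\tilde T}_j) = v_{\sigma(j)}(\tilde O^{\tilde T}_{\sigma(j)})$, and nothing in the conservation identity $\sum_{j>K} b_j = \sum_{j>K} b^*_j$ or the sortedness of $\sigma$ forces that equality agent-by-agent. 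Since $v=p/q$ with $p,q$ coprime and $p\ge 2$, equality would require both $b_j = b^*_{\sigma(j)}$ and an identical count of small goods received by $j$ in $O$ and by $\sigma(j)$ in $\tilde O$; the sorting permutation has no reason to arrange this. In fact the specific window you target need not be the one in which all of $N'$ sits: the common interval can lie strictly below $\min_{k\le K} b^*_k$ (a non-receiving agent $j>K$ with $b_j$ small can belong to $N'$).

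The paper proves (3) by a different mechanism that you are missing. It does not take $N'$ minimal but instead takes $N'$ to be a suffix $\{h,\dots,n\}$ with $h=\min\{j_1,j_2,H\}$, where $j_1$ (resp.\ $j_2$) is the first index owning a small good in $O^{\tilde T}$ (resp.\ $\tilde O^{\tilde T}$ via $\sigma$) and $H$ is the first index where $b$ and $\tilde b^\sigma$ differ. Property (1) then needs a short argument, but property (3) becomes a three-case analysis on which of $j_1,j_2,H$ attains the minimum; in the two nontrivial cases one compares the utilitarian welfare $U$ on $N'$ from both sides and reaches a contradiction (using Lorenz domination of $O$ and Corollary~\ref{cor:bigGoodsInOTilde}) unless a maximum-valuation agent in $N'$ already holds a small good, which immediately gives the spread $\le v$. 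The suffix structure is what makes the welfare-comparison contradictions go through; your minimal $N'$ has no such shape, and that is why your direct interval argument stalls.
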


\begin{proof}
We show there exists an index $h>K$ such that $N'= \set{h, h+1, \dots, n}$ fulfills the properties.
	
	Let $H$ be the minimum index for which $b$ and $\tilde{b}^{\sigma}$ differ. Observe that $H>K$.
	Let $h=\min \set{j_1, j_2, H}$, where $j_1$ (resp.\ $j_2$) is the smallest index for which $ O^{\tilde{T}}_{j_1}\cap S_{j_1} \neq \emptyset$ (resp.\ $\tilde{O}^{\tilde{T}}_{\sigma(j_2)} \cap S_{\sigma(j_2)} \neq \emptyset$).  For completion, $\min(\emptyset) = n+1$.
	
    As a consequence of~\Cref{cor:behaviorOnO}, no agent $i\leq K$ receives small goods in both the computed allocations, so $j_1, j_2 > K$ and hence $h > K$.
	
	We first show property (1). Let $i \in \agents \setminus N'$. If $i \le K$, then, as the coupled process applies the same operations on agents $i \le K$, we have $v_{\sigma(i)}(\tilde O^{\tilde T}_{\sigma(i)}) = v_i(\tilde O^{\tilde T}_i) = v_i(O^{\tilde T}_i)$. If $K < i$, by definition of $h$, neither $i$ (in $O$) nor $\sigma(i)$ (in $\tilde O$) did give or receive a good, so $v_i(O^{\tilde T}_i) = b_i = \tilde b^\sigma_i = v_{\sigma(i)}(\tilde O^{\tilde T}_{\sigma(i)})$.
	
	Now, property (2) is immediate since both allocations have the same total utilitarian social welfare, i.e., the utilitarian social welfare for agents in $N'$ is the same in both allocations. Let us denote it by $U$.
	
	We now show property (3) and distinguish the cases $h=j_1$, $h=j_2 < H$, and $h=H$. If $h = n+1$, we have nothing to show, so we assume that $h < n+1$.
    
    \paragraph{Case $h = j_1$.} 
    Consider $i \geq j_1$. If $i$ has a small good, then $v_i(O^{\tilde T}_i) \leq \min_{j \in \agents} v_j(O^{\tilde T}_j)+v$ (because of the greedy allocation). If not, then, since $j_1 > K$, we see $v_i(O^{\tilde T}_i) = b_i \leq b_{j_1} \leq v_{j_1}(O^{\tilde T}_{j_1}) \leq \min_{j \in \agents} v_j(O^{\tilde T}_j)+v$.  	

    \paragraph{Case $h = j_2 < H$.} 
    If there is an agent $i \in N'$ with maximum valuation that has at least one small good in her bundle, then $v_i(O^{\tilde{T}}_i) \geq v_j(O^{\tilde{T}}_j) \geq v_i(O^{\tilde{T}}_i) -v$ for every $j_2 \leq j$. Property 3 is satisfied. 
	 
    Otherwise, $j_2$ has maximum valuation among the agents from $N'$ since $b$ is non-increasing. Hence, we have the upper bound 
	\[
	U \leq  b_{j_2}\cdot (n -j_2 +1) \enspace . 
	\]
	On the other hand, by definition of $j_2$ (and since $j_2 = h < n+1$), agent $\sigma(j_2)$ is receiving a small good in $\tilde{O}^{\tilde{T}}$. Hence, every agent has valuation at least $v_{\sigma(j_2)}(\tilde{O}^{\tilde{T}}_{\sigma(j_2)}) - v$ (with $j_2$ having a strictly larger one). Thus 
	\[
	U >  \left(v_{\sigma(j_2)}(\tilde{O}^{\tilde{T}}_{\sigma(j_2)}) - v \right) \cdot (n -j_2 +1)  \geq \tilde{b}^\sigma_{j_2} \cdot (n -j_2 +1) \enspace ,
	\]
	where the last inequality holds since $\tilde{b}^\sigma_{j_2}$ is the number of big goods of $\sigma(j_2)$ in $\tilde O^{\tilde T}$ and $\sigma(j_2)$ has at least one small good. This implies $b_{j_2} > \tilde{b}^\sigma_{j_2}$, but since $j_2< H$, we must have $b_{j_2}= \tilde{b}^\sigma_{j_2}$ -- a contradiction.

    \paragraph{Case $h = H$.} 
    Again, if a maximum valuation agent from $N'$ receives a small good, then property 3 follows. 
    
    Otherwise, agent $H$ has maximum valuation among the agents from $N'$. Hence, we can upper bound the utilitarian social welfare by $U \leq b_H\cdot(n-H+1)$. 
    
    On the other hand, we can assume the allocation $O$ Lorenz dominates\footnote{In \cite{BabaioffEF21} it is shown that we can compute in polynomial time an optimal allocation for the dichotomous submodular setting that Lorenz dominates any other allocation. Lorenz domination is defined as follows: given any two non-wasteful allocations $A$ and $A'$ and given their vectors of big goods $b$ and $d$, sorted in a non-increasing ordering, $A$ {\em Lorenz dominates} $A'$ if $\sum_{i > h}^n b_i \geq \sum_{i > h}^n d_i $ for each $h=0, \dots, n-1$.} any other non-wasteful allocation, $\tilde{O}$ included. Hence, we have $\sum_{i > h}^n b_i \geq \sum_{i > h}^n \tilde{b}^\sigma_i $ for each $h=0, \dots, n-1$. Since $H$ is the first index for which the vectors differ and $\sum_{i = 1}^n b_i = \sum_{i = 1}^n \tilde b^\sigma_i$, we see that $\sum_{i \geq H}^n b_i = \sum_{i \geq H}^n \tilde{b}^\sigma_i$ implying $\tilde{b}^\sigma_H \geq b_H$ and, in particular, $\tilde{b}^\sigma_H> b_H$ since the entries differ. Furthermore, agent $\sigma(H)$ has either strictly more or strictly less big goods in $\tilde{O}^{\tilde{T}}$ than in $O^{\tilde{T}}$. Indeed, if $\sigma(H)$ has the same number of big goods in both allocations we have $b_{\sigma(H)} = \tilde{b}_{\sigma(H)} = \tilde{b}^\sigma_{H} > b_H$, which implies $\sigma(H) < H$. Thus, by definition of $H$, $b_{\sigma(H)} = \tilde b^\sigma_{\sigma(H)}$, i.e. $\tilde b^\sigma_H = \tilde b^\sigma_{\sigma(H)}$. As $\sigma$ is injective, we have $\sigma(\sigma(H)) \neq \sigma(H)$, so $(\sigma(H), \sigma(\sigma(H))) \circ \sigma$ orders $\tilde b$ in non-increasing order and permutes less indices than $\sigma$, a contradiction to the way we choose $\sigma$. 
    
    Finally, we lower bound the valuation of any agent in $N'$ for the allocation $\tilde{O}^{\tilde{T}}$. If $\sigma(H)$ has less big goods in $\tilde{O}^{\tilde{T}}$ than in $O$, then by \Cref{cor:bigGoodsInOTilde} she receives a small good in $\tilde{O}^{\tilde{T}}$, and hence any other agent has valuation greater or equal than $v_{\sigma(H)}(\tilde{O}^{\tilde{T}}_{\sigma(H)})- v$. On the other hand, if $\sigma(H)$ has more big goods in $\tilde{O}^{\tilde{T}}$ than in $O$, then, as consequence of \Cref{cor:bigGoodsInOTilde}, there exists one agent receiving a small good who has valuation strictly greater than $\tilde{b}_H^\sigma - 1 + v \cdot \lfloor 1/v \rfloor$. Since that agent receives at least one small good, we can lower bound the valuation of any other agent by $\tilde{b}_H^\sigma - 1 + v \cdot \lfloor 1/v \rfloor -v$. 

    Hence, $ U >  (\tilde{b}_H^\sigma - 1 + v \cdot \lfloor 1/v \rfloor -v )\cdot(n-H+1) \geq (b_H + v \cdot \lfloor 1/v \rfloor -v )\cdot(n-H+1) $. This implies $b_H> b_H + v\cdot \lfloor 1/v \rfloor -v$, so $ 1 >  \lfloor 1/v \rfloor $, a contradiction.
\end{proof}

    Recall we defined $U$ as the utilitarian social welfare in $O^{\tilde{T}}$ restricted on the agents in $N'$. Denoted by $u= \frac{U}{n'}$, where $n'= \modulus{N'}$,
    The following corollary is a consequence of \Cref{prop:bigGoodsInOpt}.

\begin{corollary} \label{cor:optu2v}
    If $u < 2v$, then $O^{\tilde{T}}$ is an optimal allocation.
\end{corollary}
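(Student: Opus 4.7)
The plan is to show that under the assumption $u < 2v$, we have $\NSW(O^{\tilde{T}}) = \NSW(\tilde{O}^{\tilde{T}})$. Since $\tilde{O}$ is well-structured (\Cref{lemma:tildeOoptimal}) and therefore $\tilde{O}^{\tilde{T}}$ is optimal by \Cref{prop:sameNumberBigGoods}, this immediately gives optimality of $O^{\tilde{T}}$.

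First, I would reduce the comparison of Nash social welfares to the agents in $N'$. By \Cref{prop:utilitiesForApproximation} item (1), agents outside $N'$ contribute identical factors to the two allocations (up to the relabeling $\sigma$), so
\[
    \frac{\NSW(\tilde{O}^{\tilde{T}})}{\NSW(O^{\tilde{T}})} \;=\; \left( \frac{\prod_{i \in N'} v_{\sigma(i)}(\tilde{O}^{\tilde{T}}_{\sigma(i)})}{\prod_{i \in N'} v_i(O^{\tilde{T}}_i)} \right)^{1/n}.
\]
By item (2) of the same proposition, both numerator and denominator correspond to valuation profiles over $N'$ summing to the same quantity $U = n' u$, so the comparison becomes a pure AM--GM style question about how evenly distributed the two profiles are.

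Second, I would exploit $u < 2v$ together with item (3) of \Cref{prop:utilitiesForApproximation} to pin down the valuation profile of $O^{\tilde{T}}$ on $N'$. Since every valuation has the form $a + bv$ with $a, b \in \NN_0$ and any two valuations on $N'$ differ by at most $v$, the bound $u < 2v$ forces the maximum valuation on $N'$ to satisfy $\max < 3v$, which (given $v < 1$) allows only very few admissible profiles. A short arithmetic case analysis on possible $(a,b)$ combinations should show that the $O^{\tilde{T}}$-profile over $N'$ is already the multiset with maximal geometric mean among multisets of allowed valuations summing to $U$ --- essentially because valuations can only take two ``adjacent'' levels differing by exactly $v$, and the number in each level is forced by the sum constraint.

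Third, I would use \Cref{cor:bigGoodsInOTilde} to argue that the profile of $\tilde{O}^{\tilde{T}}$ restricted to $N'$ either coincides with that of $O^{\tilde{T}}$ (in which case the two Nash social welfares trivially agree), or it spreads out further: any agent $i$ with $b_i < \tilde b_i^{\tilde T}$ is paired with a partner $j$ whose valuation differs by almost one full big-good, producing a strictly less balanced profile with the same sum. By AM--GM, this would yield $\NSW(\tilde{O}^{\tilde{T}}) < \NSW(O^{\tilde{T}})$, contradicting the optimality of $\tilde{O}^{\tilde{T}}$. Hence the profiles coincide and the two Nash social welfares are equal.

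The main obstacle is the third step: verifying that every reconfiguration allowed by \Cref{cor:bigGoodsInOTilde} is \emph{strictly} less balanced than the $O^{\tilde{T}}$-profile on $N'$ when $u < 2v$. This requires careful bookkeeping of how the coupled process assigns big goods inside and outside $N'$, and a delicate use of the arithmetic that $v\lfloor 1/v\rfloor < 1 < v\lceil 1/v\rceil$ when $1/v \notin \NN$. All other steps reduce to direct applications of \Cref{prop:utilitiesForApproximation} and standard convexity arguments.
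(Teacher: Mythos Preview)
Your reduction to $N'$ via \Cref{prop:utilitiesForApproximation} is exactly the right starting point, and matches the paper. However, the remainder of the plan has real gaps, and the paper's route is quite different from the AM--GM balancing argument you sketch.

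In step~2 you assert that the valuations on $N'$ ``can only take two adjacent levels differing by exactly $v$''. This is not correct in general: valuations have the form $a+bv$ with $a,b\in\NN_0$, and for, say, $v>1/2$ the values $v$, $1$, and $2v$ are all within $v$ of one another yet are not all multiples of~$v$. So the ``forced by the sum constraint'' conclusion does not follow as stated. More seriously, step~3 --- which you yourself flag as the obstacle --- needs you to show that any $\tilde O^{\tilde T}$-profile on $\sigma(N')$ different from the $O^{\tilde T}$-profile is \emph{strictly} less balanced. But \Cref{cor:bigGoodsInOTilde} only controls pairs $(i,j)$ with $b_i<\tilde b_i^{\tilde T}$; it says nothing directly about the remaining agents in $\sigma(N')$, so turning this into a global majorization statement is not straightforward.

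The paper bypasses both issues by counting \emph{goods} rather than valuation levels. From $u<2v$ one gets that the minimum-valuation agent in $N'$ has valuation $<2v$, hence holds exactly one good (two goods would give value $\ge 2v$; zero goods is ruled out by $m\ge n$). Then item~(3) of \Cref{prop:utilitiesForApproximation} forces every agent in $N'$ to hold at most two goods. Since the total number of goods held by $N'$ in $O^{\tilde T}$ equals that held by $\sigma(N')$ in $\tilde O^{\tilde T}$, and one agent holds only one good, pigeonhole gives some agent in $\sigma(N')$ with at most one good, hence valuation $\le 1$. On the other hand, if any agent in $\sigma(N')$ held two big goods, \Cref{cor:bigGoodsInOTilde} would produce a partner with a small good and valuation $>1+v\lfloor 1/v\rfloor - 1 \ge v$, and in fact force every agent's valuation to exceed~$1$ --- a contradiction. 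Hence every agent in $\sigma(N')$ has at most one big good, and the two valuation profiles coincide. This counting argument is short and avoids the delicate balancing analysis you anticipate.
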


\begin{proof}
 Let $i \in N'$ with minimum valuation in $O^{\tilde T}$. Since $u < 2v$, we have $v_i(O^{\tilde T}_i) < 2v$, so $i$ has either nothing or one small good or one big good. Observe that the first case is not possible since $m \geq n$.
 We now show by contradiction that $\tilde O^{\tilde T}$ cannot allocate two big goods to the same agent in $\sigma(N')$. Let $j \in N'$ such that $\sigma(j)$ has (at least) two big goods in $\tilde O^{\tilde T}$. Then by \Cref{cor:bigGoodsInOTilde}, we have $k \in N'$ such that $v_{\sigma(k)}(\tilde O^{\tilde T}_{\sigma(k)}) > v_{\sigma(j)}(\tilde O^{\tilde T}_{\sigma(j)}) - 1 + v \cdot \lfloor \frac 1 v \rfloor \geq 1+v$. As $\sigma(j)$ receives a small good, every agent has valuation greater than $1$. The agents in $\sigma(N')$ have overall the same number of goods in $\tilde O^{\tilde T}$ as the agents in $N'$ in $O^{\tilde T}$, but in $O^{\tilde T}$, every agent in $N'$ has at most two goods, and $i$ has only one. So there is an agent in $\sigma(N')$ that has at most one good in $\tilde O^{\tilde T}$, and thus valuation at most $1$, a contradiction.\\
Knowing that $\tilde O^{\tilde T}$ allocates at most one good to each agent in $\sigma(N')$, we have that $O^{\tilde T}$ and $\tilde O^{\tilde T}$ have the same valuation profile, so $O^{\tilde T}$ is optimal.
\end{proof}

\begin{theorem}{theorem}
	Algorithm \ref{Algorithm} has an approximation factor of at most $\frac{24}{29} \exp\left(\frac{110}{493}\right) < 1.0345$.
\end{theorem}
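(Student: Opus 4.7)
The plan is to compare Algorithm~\ref{Algorithm}'s output $O^T$ with the allocation $\tilde O^{\tilde T}$ produced by \Balance$(\instance, \tilde O)$, which is optimal by Proposition~\ref{prop:sameNumberBigGoods} applied to the well-structured $\tilde O$ from Lemma~\ref{lemma:tildeOoptimal}. Lemma~\ref{lemma:decreasingWelfare} gives $\NSW(O^T) \ge \NSW(O^{\tilde T})$, so the approximation ratio is at most $\NSW(\tilde O^{\tilde T})/\NSW(O^{\tilde T})$. Using Proposition~\ref{prop:utilitiesForApproximation}(1) to cancel the equal factors for agents outside $N'$, the $n$-th power of this ratio equals
\[
\frac{\prod_{i \in N'} v_{\sigma(i)}(\tilde O^{\tilde T}_{\sigma(i)})}{\prod_{i \in N'} v_i(O^{\tilde T}_i)}.
\]

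Next I would bound the two products separately. Set $u = U/n'$. The numerator is at most $u^{n'}$ by AM-GM, using Proposition~\ref{prop:utilitiesForApproximation}(2). For the denominator, Proposition~\ref{prop:utilitiesForApproximation}(3) forces the $n'$ values to pairwise differ by at most $v$ while summing to $n' u$; a standard convexity/exchange argument shows that the product is minimized at the extremal two-point configuration in which $\beta n'$ agents take value $u - (1-\beta)v$ and $(1-\beta)n'$ take value $u + \beta v$, for some $\beta \in [0,1]$. Combining these, using $n' \le n$ and that the ratio is at least one, one obtains
\[
\frac{\NSW(\tilde O^{\tilde T})}{\NSW(O^{\tilde T})} \;\le\; F(\beta,c) \;:=\; \frac{c}{(c-1+\beta)^{\beta}\,(c+\beta)^{1-\beta}}, \qquad c := u/v.
\]

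It remains to maximize $F$ over $(\beta,c) \in [0,1] \times [2,\infty)$; the restriction $c \ge 2$ is justified because the regime $u < 2v$ already forces ratio~$1$ by Corollary~\ref{cor:optu2v}. Straightforward computation shows that
\[
\frac{\partial (\log F)}{\partial c} \;=\; \frac{1}{c} - \frac{\beta}{c-1+\beta} - \frac{1-\beta}{c+\beta} \;\le\; 0
\]
on $[2,\infty) \times [0,1]$, so the supremum is attained at $c = 2$, where the stationarity condition in $\beta$ reduces to $\log\tfrac{2+\beta}{1+\beta} = \tfrac{\beta}{1+\beta} + \tfrac{1-\beta}{2+\beta}$ with a unique interior root $\beta^* \in (0,1)$. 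The principal obstacle is turning the transcendental value $F(\beta^*, 2)$ into the declared closed-form bound $\tfrac{24}{29}\exp(\tfrac{110}{493}) < 1.0345$; I would do this by evaluating (or upper bounding via a Taylor/log-concavity estimate) $F$ at a convenient rational point $\beta$ near $\beta^*$ with $c = 2$, which yields the stated numerical constant.
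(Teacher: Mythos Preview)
Your skeleton matches the paper exactly: reduce to the factors over $N'$ via Proposition~\ref{prop:utilitiesForApproximation}, bound the numerator by $u^{n'}$ via AM--GM, bound the denominator by the two-point extremal configuration, and invoke Corollary~\ref{cor:optu2v} to restrict to $c=u/v\ge 2$.

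Your monotonicity-in-$c$ step is in fact cleaner than what the paper does. The paper never shows $\partial_c\log F\le 0$; instead it sets $d=v/u$, proves $a\mapsto f(d,a):=-\log F$ is convex, replaces $f$ by its tangent at the \emph{$d$-dependent} point $a_0=\tfrac12+\tfrac d6$, minimizes the affine tangent over $a\in[0,1]$, and only then shows the resulting function of $d$ is decreasing before plugging in $d=\tfrac12$. Your one-line Jensen argument (convexity of $t\mapsto 1/t$ gives $\beta/(c-1+\beta)+(1-\beta)/(c+\beta)\ge 1/c$) bypasses all of that and lets you set $c=2$ immediately.

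The gap is in your last paragraph. ``Evaluating $F$ at a convenient rational $\beta$ near $\beta^*$'' produces a \emph{lower} bound on $\max_\beta F(\beta,2)$, which is the wrong direction: it cannot certify $\max_\beta F\le\tfrac{24}{29}\exp(\tfrac{110}{493})$. What is actually needed---and what your phrase ``log-concavity estimate'' presumably intends---is to prove that $\beta\mapsto\log F(\beta,2)$ is concave (equivalently $\partial_a^2 f>0$, which the paper computes explicitly), so that the tangent line at any $\beta_0$ is a global \emph{upper} bound on $\log F$; maximizing that affine upper bound over $\beta\in[0,1]$ then gives a certified constant. The paper's specific number $\tfrac{24}{29}\exp(\tfrac{110}{493})$ arises from the tangent at $\beta_0=\tfrac{5}{12}$ (i.e.\ $a_0=\tfrac{7}{12}$) at $c=2$: the value term contributes $\ln(29/24)$ and the slope term contributes $-110/493$. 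So you must make the concavity-plus-tangent step explicit and hit $\beta_0=\tfrac{5}{12}$ to recover the stated constant; plain evaluation of $F$ at that point does not suffice.
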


\begin{proof} Recall that $O^T$ is the allocation computed by  \Balance$(\instance, O)$, $\tilde{O}^{\tilde{T}}$ is the optimal allocation computed by  \Balance$(\instance, \tilde{O})$ and $O^{\tilde{T}}$ is the allocation computed simultaneously with $\tilde{O}^{\tilde{T}}$. By \Cref{lemma:decreasingWelfare}
	$\NSW(O^T) \geq \NSW(O^{\tilde{T}})$. Moreover, $\NSW(O^*)=\NSW(\tilde{O}^{\tilde{T}})$. We see that
	\begin{dmath*}
	\frac{\NSW(O^*)}{\NSW(O^T)} \leq \frac{\NSW(\tilde{O}^{\tilde{T}})}{\NSW(O^{\tilde{T}})} \\
	=\left(  \frac{\left(\prod_{i\in N\setminus N'} v_i(\tilde{O}_{\sigma(i)}^{\tilde{T}})  \right) \left(\prod_{i\in  N'} v_i(\tilde{O}_{\sigma(i)}^{\tilde{T}}) \right)}{\left(\prod_{i\in  N \setminus N'} v_i(O_i^{\tilde{T}})\right) \left(\prod_{i\in  N'} v_i(O_i^{\tilde{T}})\right)} \right)^{\frac{1}{n}} \\
	= \left(  \frac{\prod_{i\in  N'} v_i(\tilde{O}_{\sigma(i)}^{\tilde{T}})}{\prod_{i\in  N'} v_i(O_i^{\tilde{T}})} \right)^{\frac{1}{n}}.
	\end{dmath*}
	We denote by $n'= \modulus{N'}$ and $u = \frac{1}{n'} \sum_{i \in N'} v_i(O^{\tilde T}_i)$. Then, by the AM-GM inequality,
	\[ 
	\left(  \prod_{i\in  N'} v_{\sigma(i)}(\tilde{O}_{\sigma(i)}^{\tilde{T}})\right)^{\frac{1}{n}} \leq u^{\frac{n'}{n}} \enspace .
	\]

We now provide a lower bound on $(\prod_{i \in N'} v_i(O^{\tilde T}_i))^{\frac 1 {n'}}$.

As first step towards this goal, we prove that $(\prod_{i \in N} v_i(O^{\tilde T}_i))^{\frac 1 {n'}} \geq \min\limits_{0 \leq a \leq 1} (u-av+v)^a (u-av)^{1-a}$. To decrease notational overhead, we use $u_i = v_i(O^{\tilde T}_i)$ for all $i \in N'$. 

First, if $\max_{i \in N'} u_i - \min_{i \in N'} u_i < v$, then take $i$ (resp. $j$) in $N'$ with maximum (resp. minimum) valuation, set $u_i \leftarrow \frac 1 2 (u_i+u_j+v)$ and $u_j \leftarrow \frac 1 2 (u_i+u_j-v)$. This increases $u_i$ and decreases $u_j$ without changing their arithmetic mean, so the geometric mean decreases. Thus we can assume that $\max_{i \in N'} u_i - \min_{i \in N'} u_i = v$.

Let $x = \min_{j \in N'} u_j$.

Then, if we have $i, j \in N'$ such that $u_i, u_j \in (x, x+v)$ and $u_i \leq u_j$, if $x+v - u_j < u_i - x$, we set $u_j \leftarrow x+v$ and $u_i = u_i + u_j-(x+v)$, otherwise $u_j \leftarrow u_j + x - u_i$ and $u_i \leftarrow x$. This increases $u_j$ and decreases $u_i$ without changing their arithmetic mean, so decreases the geometric mean. Thus, by repeating this transformation a finite number of times, we can assume that $\modulus{\{i \in N', x < u_i < x+v\}} \leq 1$.

 Let $f$ be the fraction of agents $i \in N'$ such that $u_i = x+v$. If, for $i \in N'$, we have $u_i \in \{x, x+v\}$, then we have (with $a = f$), $(1-a) n' x + an' (x+v) = n' u$ and thus $x = u - av$. This implies $\prod_{i \in N'} v_i(O^{\tilde T}_i)^{\frac 1 {n'}} \geq (\prod_{i \in N'} u_i)^{\frac 1 {n'}} = (u-av+v)^a(u-av)^{1-a}$. Otherwise, let $i \in N'$ such that $x < u_i < x+v$. Let $a = f + \frac 1 {n'} \frac{u_i - x} v$, so that $u_i = (1-(a-f)) n' \min_{j \in N'} u_j + (a-f) n' \max_{j \in N'} u_j$. We have $(1-a) n' x + a n' (x+v) = u$, so $x = u-av$ and we have again $(\prod_{i \in N'} v_i(O^{\tilde T}_i)^{\frac 1 {n'}} \geq (\prod_{i \in N'} u_i)^{\frac 1 {n'}} = (u-av+v)^a(u-av)^{1-a}$.

This implies 
\begin{dmath*}
\left(\prod_{i \in N} v_i(O^{\tilde T}_i)\right)^{\frac 1 {n'}} \ge \min\limits_{0 \leq a \leq 1} (u-av+v)^a (u-av)^{1-a}\\
= \exp(\min\limits_{0 \leq a \leq 1} a \ln(u-av+v)+(1-a)\ln(u-av))\\
= u\exp\left(\min\limits_{0 \leq a \leq 1} a \ln\left(1-a\frac v u + \frac v u\right)+(1-a)\ln\left(1-a \frac v u\right)\right)\enspace.
\end{dmath*}
Let $f : (0, 1)^2 \to \RR$ be a function over the open unit square, given by
\[
f(d, a) = a \ln(1+d-ad)+(1-a)\ln(1-ad).
\]
First note that $f$ is smooth on $(0, 1)^2$.

We show that for all $d \in (0, 1)$ the function $a \mapsto f(d, a)$ is convex. Towards this goal, we compute the derivatives:

\begin{dmath*}
    \frac{\partial f}{\partial a}(d, a) =  \ln\left(1+\frac d {1-ad}\right) - \frac{ad}{1+d-ad} - \frac{d-ad}{1-ad} \hspace{1cm} \text{for all }  d, a \in (0,1)
\end{dmath*}
Then
\begin{dmath*}
\frac{\partial^2 f}{\partial a^2}(d, a) = \frac{d^2}{(1-ad)^2(1+d-ad)^2} \cdot (-d^2 a^2 + d^2 a + 1-d^2) \hspace{1cm} \text{for all }  d, a \in (0, 1)
\end{dmath*}
For $a, d \in (0, 1)$, we have $d^2 < 1$ and $d^2a^2 < d^2 a$, so $0 < \frac{\partial^2 f}{\partial a^2}(d, a)$.

Let $d \in (0,1)$. By convexity, and taking the tangent at $\frac 1 2 + \frac d 6$ 
we have for all $a \in (0,1)$

\begin{dmath*}
    f(d, a) \geq  \frac{1}{6} (3+d)\ln\left(\frac{6+3d-d^2}{6-3d-d^2}\right) + \ln\left(1-\frac d 2 - \frac{d^2} 6\right) 
     \hspace{0.5cm} +\left(a-\frac 1 2 - \frac d 6 \right)\cdot \left(\ln\left(\frac{6+3d-d^2}{6-3d-d^2}\right) + 12d\cdot\frac{d^2-3}{d^4-21d^2+36}\right)
\end{dmath*}

In order to minimize an affine function, we only need to know if it is increasing or not. We define $g : (0,1) \to \RR$ by $g(d) = \ln(\frac{6+3d-d^2}{6-3d-d^2}) + 12d\frac{d^2-3}{d^4-21d^2+36}$. Then for all $d \in (0,1)$ 
\[ g'(d) = \frac{-6d^4(d^2+39)}{(d^4-21d^2+36)^2} \leq 0, \]
and we conclude 
\begin{dmath*}
    \min_{0 < a < 1} f(d, a) \geq  \frac 16 (3+d)\ln\left(\frac{6+3d-d^2}{6-3d-d^2}\right) + \ln\left(1-\frac d 2 - \frac{d^2} 6\right) 
 + \left(1-\frac 1 2 - \frac d 6\right)\cdot\left(\ln\left(\frac{6+3d-d^2}{6-3d-d^2}\right) + 12d \cdot \frac{d^2-3}{d^4-21d^2+36}\right)
 = \ln\left(1+\frac d 2 - \frac{d^2} 6\right) + \frac{2d(3-d)(d^2-3)}{d^4-21d^2+36} .
\end{dmath*}

%
Note that this last function of $d$ has derivative
\[
d \mapsto \frac d {d^4-21d^2+36} \cdot (2d^6-3d^5+9d^4-117d^3+225d^2-324)
\]
which is negative over $(0,1)$. Thus, it is minimized for maximum value of $d$.

Let us now set $d = \frac v u$. If $\frac 1 2 < d$, by \Cref{cor:optu2v}, $O^{\tilde T}$ is optimal, so we can assume $d \le \frac 1 2$.
Hence we have 
\[\begin{array}{rcl}
\min_{0 < a < 1} f(d, a)&\geq& \ln(\frac{29}{24}) - \frac{110}{493} \enspace . \end{array}\]

This implies $(\prod_{i \in N'} v_i(O^{\tilde T}_i))^{\frac 1 {n'}} \geq u \cdot (\frac{29}{24} \exp(-\frac{110}{493}))$, and, thus,

\begin{dmath*}
    \frac{\NSW(O^*)}{\NSW(O^T)} \leq \left(\frac{24}{29} \exp\left(\frac{110}{493}\right)\right)^{\frac{n'} n}
\leq\frac{24}{29} \exp\left(\frac{110}{493}\right) < 1.0345.    
\end{dmath*}

\end{proof}

Observe that \Cref{ex:OPTvsOPTstar} provides a lower bound for the approximation ratio of Algorithm \ref{Algorithm} of $6/\sqrt{35} \approx  1.01418$. It is an interesting open problem whether the approximation guarantee for the algorithm can be made tight.

\section{\classNP-Hardness when $p\geq 3$}\label{sec:NPhard}


In this section we almost complement our positive results on polynomial-time NSW optimization. In particular, we show: 

\begin{theorem}\label{NP-hard}
It is \classNP-hard to compute an allocation with optimal NSW for 2-value instances, for any constant coprime integers $q>p\geq 3$.
\end{theorem}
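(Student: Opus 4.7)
The plan is to reduce from a classical NP-hard combinatorial problem — a natural target is Exact Cover by 3-Sets (X3C) or an equivalent restricted matching problem — to computing an NSW-optimal allocation in 2-value instances with the given $p, q$. The guiding strategy is to design a gadget whose optimum concentrates on perfectly balanced allocations (by the AM--GM inequality, maximum NSW at fixed utilitarian welfare is attained when all agent valuations coincide), and in which a perfectly balanced allocation exists if and only if the combinatorial instance is feasible.

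Concretely, starting from an X3C instance with universe $U$ of size $3n$ and family $\mathcal F = \{S_1, \ldots, S_m\}$, I would build:
\begin{itemize}
    \item one \emph{element agent} $a_u$ per element $u \in U$, whose big goods are exactly the goods representing the sets $S_j \ni u$;
    \item one \emph{set good} $g_j$ per set $S_j \in \mathcal F$;
    \item a pool of \emph{buffer agents} together with \emph{filler goods} that are small for every agent, calibrated so that in any optimal allocation the buffer agents equalize to a fixed valuation independent of how the set goods are distributed.
\end{itemize}
A target value $N^\star$ would be chosen so that $\NSW(O^*) \ge N^\star$ iff the X3C instance is a YES instance. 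The forward direction is by explicit construction from an exact cover: each element agent receives exactly one covering set good, and filler goods are distributed to reach the target. For the converse, AM--GM together with the gadget's design force every element agent in any near-optimal allocation to receive exactly one big good, which then forces an exact cover.

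The main obstacle — and the reason this reduction is genuinely harder than the 3-value inapproximability result in~\cite{GargHM18}, which exploits $0$-valuations — is that every good contributes valuation $p>0$ to every agent, so there is no "forbidden" assignment detectable by a zero entry. Instead one must prevent an adversary from using small goods to camouflage a bad distribution of set goods. This is where $\gcd(p,q)=1$ and $p \ge 3$ enter: any agent's valuation has the form $aq + bp$, and coprimality pins down $b \bmod q$ from the valuation's residue modulo $q$; the hypothesis $p \ge 3$ gives enough granularity among the achievable residues to install congruence conditions forcing every element agent to receive the same number of big goods in any balanced allocation. The remaining work is quantitative bookkeeping — choosing the sizes of the buffer pool and the number of filler goods, parametrically in $p$ and $q$, so that any deviation from the balanced profile decreases NSW by a margin that cannot be recovered by local small-good exchanges. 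This last step is where I expect the bulk of the technical effort to lie.
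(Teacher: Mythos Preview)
Your high-level strategy---reduce from a covering/matching problem so that AM--GM forces a perfectly balanced profile, then use coprimality of $p,q$ to rigidify which bundles achieve the target value---is exactly right and matches the paper's approach. But your concrete gadget is inverted in a way that breaks it. You make elements into agents and sets into goods; then in a YES instance with exact cover $C$, each element $u$ is covered by a unique $S_j \in C$, but the single good $g_j$ can be given to only \emph{one} of the three element agents in $S_j$, so ``each element agent receives exactly one covering set good'' is impossible. To make this direction work you would need as many copies of each set good as the set has elements, at which point the combinatorial structure no longer encodes X3C. Your explanation of why $p \ge 3$ is needed (``enough granularity among the achievable residues'') is also not a proof; as stated it is a hope rather than an argument.

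The paper's reduction avoids all of this by going the other way around and, crucially, by letting the source problem depend on $p$: reduce from Exact-$p$-Dimensional-Matching, take one agent per hyperedge and one good per vertex, and let an agent value a vertex good big iff the hyperedge is incident to it. Add $q(m-n)$ dummy goods that are small for everyone. If there is a perfect matching, matched agents take their $p$ incident vertex goods and unmatched agents take $q$ dummies; everyone has valuation $p$, so $\NSW=p$. Conversely, total value is at most $pm$, so $\NSW=p$ forces every agent to valuation exactly $p$; now the coprimality step is a one-line Diophantine argument: $p = i + j\cdot p/q$ with $0 \le i \le p$ means $(p-i)q = jp$, hence $p \mid p-i$, so $i \in \{0,p\}$. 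Thus each agent either holds all $p$ of her incident vertex goods or $q$ small goods, and counting vertex goods forces a perfect matching. Here $p \ge 3$ is used solely because Ex-$p$-DM is \classNP-hard for $p \ge 3$; no buffer agents, no quantitative bookkeeping, and the role of coprimality is completely transparent.
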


We provide a reduction from \emph{Exact-$p$-Dimensional-Matching (Ex-$p$-DM)}: 
Given a graph $G$ consisting of $p$ disjoint vertex sets $V_1, \dots, V_p$, each of size $n$, and a set $E\subseteq V_1\times\dots\times V_p$ of $m$ edges, it is \classNP-hard to decide whether there exists a $p$-dimensional perfect matching in $G$ or not. Note that for $p=3$ the problem is Ex-3-DM and thus \classNP-hard. \classNP-hardness for $p>3$ follows by simply copying the third set of vertices in the Ex-3-DM instance $p-3$ times, thereby also extending the edges to the new vertex sets.

\paragraph{Transformation:} 
There is one good for each vertex of $G$, call them \emph{vertex goods}. Additionally, there are $q(m-n)$ \emph{dummy goods}. For each edge of $G$, there is one agent who values the $p$ incident vertex goods $1$ and all other goods $\frac{p}{q}$.

\begin{lemma}\label{lemma:NPh1}
    If $G$ has a perfect matching, then there is an allocation $\mathcal{A}$ of goods with $\NSW(\mathcal{A})=p$.
\end{lemma}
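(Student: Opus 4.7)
The plan is to exhibit an explicit allocation in which every single agent has valuation exactly $p$; then the geometric mean is trivially $p$ as well. The construction will use the perfect matching to handle vertex goods and the dummy goods to pad out the remaining agents.

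More concretely, assume $M \subseteq E$ is a perfect $p$-dimensional matching in $G$, so $|M|=n$ and the edges in $M$ partition $V_1 \cup \dots \cup V_p$. First, I would allocate vertex goods: for each edge $e \in M$, assign to the agent $a_e$ exactly the $p$ vertex goods corresponding to the endpoints of $e$. Because $M$ is a perfect matching, this uses each of the $pn$ vertex goods exactly once, and for each such agent all $p$ received goods are big (value $1$), giving $v_{a_e}(A_{a_e}) = p$.

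Next, I would allocate the dummy goods to the remaining $m-n$ agents (those whose edge is not in $M$). Since there are exactly $q(m-n)$ dummy goods, I can give each such agent a block of $q$ dummy goods. Every dummy good is small for every agent (value $p/q$), so each non-matching agent receives valuation $q \cdot (p/q) = p$.

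This yields a complete allocation $\mathcal{A}$ in which $v_i(A_i) = p$ for every agent $i \in \mathcal{N}$, hence
\[
\NSW(\mathcal{A}) \;=\; \Bigl(\prod_{i=1}^m p\Bigr)^{1/m} \;=\; p,
\]
as desired. There is no real obstacle here; the only thing to double-check is the bookkeeping that the vertex goods (exactly $pn$) are consumed by the $n$ matching agents and that the dummy goods (exactly $q(m-n)$) are consumed by the $m-n$ non-matching agents, so that $\mathcal{A}$ is indeed a partition of $\mathcal{G}$.
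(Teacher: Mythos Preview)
Your proposal is correct and follows exactly the same approach as the paper: assign each matching agent her $p$ incident vertex goods and each of the remaining $m-n$ agents a block of $q$ dummy goods, so every agent has valuation $p$. Your write-up merely adds a bit more explicit bookkeeping than the paper's version.
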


\begin{proof}
Suppose there exists a perfect matching in $G$. We allocate the goods as follows: Give each agent corresponding to a matching edge all $p$ incident vertex goods. Now there are $m-n$ agents left. Give each of them $q$ dummy goods. As each agent has valuation $p$, the NSW is $p$ as well.
\end{proof}

\begin{lemma}\label{lemma:NPh2}
    If $G$ has no perfect matching, then for every allocation $\mathcal{A}$ of goods,  $\NSW(\mathcal{A})<p$.
\end{lemma}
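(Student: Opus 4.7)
The plan is to upper-bound the utilitarian social welfare by $p$ per agent, apply the AM--GM inequality to bound $\NSW(\mathcal{A})$, and then show that the equality case forces a perfect matching in $G$.

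First I would compute the total utilitarian welfare of an arbitrary allocation $\mathcal{A}$. Let $k$ be the number of vertex goods assigned to an agent who is incident to them (i.e., who values them as big). The remaining $pn-k$ vertex goods, together with all $q(m-n)$ dummy goods, are assigned as small (value $p/q$). Hence
\[
\sum_{a} v_a(\mathcal{A}_a) \;=\; k + \tfrac{p}{q}(pn-k) + \tfrac{p}{q}\cdot q(m-n) \;=\; pm \;-\; \tfrac{(q-p)(pn-k)}{q} \;\le\; pm,
\]
since $q>p$ and $k\le pn$. By AM--GM, $\NSW(\mathcal{A})\le \tfrac{1}{m}\sum_a v_a(\mathcal{A}_a)\le p$, so it suffices to rule out equality.

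Next, I would analyze the equality conditions. If $\NSW(\mathcal{A})=p$, then both inequalities above are tight: $k=pn$ (every vertex good is received by an incident agent) and every agent has valuation exactly $p$. Writing $v_a(\mathcal{A}_a)=b_a+\tfrac{p}{q}s_a=p$, where $b_a$ and $s_a$ denote the numbers of big and small goods in $\mathcal{A}_a$, we obtain $q b_a + p s_a = pq$. Since $\gcd(p,q)=1$, this forces $p\mid b_a$, and because $b_a\le p$ (each agent is incident to exactly $p$ vertex goods), we must have $b_a\in\{0,p\}$ for every $a$. Agents with $b_a=p$ receive all their incident vertex goods and no small goods; agents with $b_a=0$ receive exactly $q$ small goods.

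Finally, from $k=pn$ and the fact that each agent with $b_a=p$ contributes exactly $p$ big goods, the set $M$ of agents with $b_a=p$ has size $|M|=n$. Each such agent corresponds to an edge $e\in E$ whose $p$ incident vertex goods all lie in $\mathcal{A}_a$. Since $k=pn$ counts each vertex good exactly once, the edges corresponding to $M$ partition $V_1\cup\dots\cup V_p$, i.e., they form a perfect matching in $G$, contradicting the hypothesis. The main obstacle is the equality-case analysis: it hinges on coprimality of $p$ and $q$ together with the bound $b_a\le p$ to pin down the structure $b_a\in\{0,p\}$, after which the counting argument extracting the matching is straightforward.
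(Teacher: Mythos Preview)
Your proof is correct and follows essentially the same approach as the paper: bound the utilitarian welfare by $pm$, apply AM--GM to get $\NSW(\mathcal{A})\le p$, and then use coprimality of $p$ and $q$ together with $b_a\le p$ to show that equality forces each agent to receive either all $p$ incident vertex goods or exactly $q$ small goods, from which a perfect matching is extracted by counting. The only cosmetic difference is that you track the parameter $k$ explicitly to compute the welfare exactly, whereas the paper upper-bounds it directly by assigning each good to a maximum-value agent; the equality analysis and the coprimality step are identical in substance.
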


\begin{proof}
Suppose there is an allocation $\mathcal{A}=(A_1,\dots,A_m)$ of goods with $\NSW(\mathcal{A})\geq p$. We show that in this case there must be a perfect matching in $G$.
First, observe that if we allocate each good to an agent with maximal value for it, we obtain an upper bound on the average sum social welfare of $\mathcal{A}$, i.e.\ $\tfrac{1}{m}\sum_i v_i(A_i) \leq \tfrac{1}{m}(pn + q(m-n)\cdot \tfrac{p}{q}) = p$. 
Applying the AM-GM inequality gives us also $\NSW(\mathcal{A})=\left( \prod_i v_i(A_i) \right)^{1/m}\leq p$, and, in particular, $\NSW(\mathcal{A}) = p$ iff $v_i(A_i) = p$ for all agents $i$.
Hence each agents valuation must be $p$ in $\mathcal{A}$ and each vertex good must be allocated to an incident agent. The next claim allows to conclude that there are only two types of agents in $\mathcal{A}$:

\begin{claim*}
    If an agent $i$ has valuation $v_i(A_i) = p$, then she either gets her $p$ incident vertex goods or $q$ other goods.
\end{claim*}

    We show that $(p,0)$ and $(0,q)$ are the only integral solutions $(i,j)$ of the equation $p = i + j\tfrac{p}{q}$, where $i,j\geq 0$. Clearly, every solution different from the above must satisfy $0<i<p$. Assume for contradiction that there exists such a solution. Then it must hold $(p-i)q = jp$. Since $p$ and $q$ are coprime, $p-i$ must be a multiple of $p$ and thus $i\in\{0,p\}$, a contradiction. This concludes the proof of the claim.

Let $b$ be the number of agents receiving their $p$ incident vertex goods in $\mathcal{A}$, and $m-b$ the number of agents receiving $q$ other goods. Since each vertex good must be allocated to an incident agent, $bp=np$ and thus $b=n$. Hence there must be $n$ agents receiving their $p$ incident vertex goods, which implies that there is a perfect matching in $G$.
\end{proof}

\Cref{lemma:NPh1} and \Cref{lemma:NPh2} yield the proof of Theorem \ref{NP-hard}.

\section{\classAPX-Hardness when $p/q = 4/5$}\label{sec:apxhardness}
	
	The main result in this section is \classAPX-hardness for general 2-value instances.
    
    \begin{theorem}\label{theo:approxHardness}
	It is \classNP-hard to approximate the maximum \NSW\ for 2-value instances to within a factor better than $1.000015$.
    \end{theorem}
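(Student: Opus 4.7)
The plan is to adapt the reduction from Section~\ref{sec:NPhard} to produce a quantitative gap rather than exact \classNP-hardness. The starting point is a gap version of 4-Dimensional Matching (Gap-4DM), which is \classAPX-hard: given an instance with $n$ vertices per side, $m = \Theta(n)$ edges, and bounded vertex degree, distinguish between (YES) existence of a perfect matching of size $n$ and (NO) every matching having size at most $(1 - \epsilon) n$ for some fixed constant $\epsilon > 0$. \classAPX-hardness of such a bounded-occurrence Gap-4DM is standard via PCP-based reductions.

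I would apply exactly the construction of Section~\ref{sec:NPhard} with $p = 4$ and $q = 5$: one \emph{vertex good} per vertex, one agent per edge (valuing her $4$ incident vertex goods as $1$ and every other good as $4/5$), and $5(m - n)$ \emph{dummy goods}. All valuations lie in $\set{4/5, 1}$, so the instance is a proper 2-value instance with no zeros -- matching the technical constraint highlighted in the introduction. In the YES case, Lemma~\ref{lemma:NPh1} (transparently generalized to $p = 4$) already yields an allocation with $\NSW = 4$.

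The heart of the proof is an upper bound on $\NSW$ in the NO case. As in Lemma~\ref{lemma:NPh2}, $\sum_i v_i(A_i) \le 4m$, with equality iff every vertex good is assigned to an incident agent; moreover, the only bundles of value exactly $4$ are $(4, 0)$ (four incident vertex goods) and $(0, 5)$ (five dummies). Since every matching has size at most $(1 - \epsilon) n$, at most $(1 - \epsilon) n$ agents can take a $(4, 0)$ bundle, and so at least $4 \epsilon n$ vertex goods must either (a) be given to non-incident agents, strictly decreasing the sum welfare below $4 m$, or (b) be placed with incident agents that then do not form a $(4, 0)$ bundle, creating $\Omega(\epsilon n)$ agents whose valuation differs from $4$. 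Using the discreteness of achievable valuations (all of the form $x + 4y/5$ with $x, y \in \NN_0$, hence at least $1/5$ away from $4$ whenever not equal to it) together with a quantitative AM-GM bound, each such deviation introduces a multiplicative loss of roughly $(1 + 1/(25 v^2))$ in the product of valuations, which, after taking the $m$-th root and using $m = \Theta(n)$, produces a constant multiplicative gap between YES and NO.

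The main obstacle will be pushing this quantitative AM-GM analysis all the way down to the explicit bound $1.000015$. This requires (i) choosing the right constant $\epsilon$ in the underlying Gap-4DM, (ii) identifying, among all ways to redistribute the uncovered vertex goods, the one that minimizes the drop in the geometric mean -- intuitively, spreading the loss across as many agents as possible while keeping each deviation as small as the $1/5$ grid allows -- and (iii) bounding the resulting loss with explicit constants rather than asymptotics. The choice $p/q = 4/5$ appears specifically tuned so that the ``perfect'' bundles $(p, 0)$ and $(0, q)$ remain simple while the grid spacing $1/q = 1/5$ is still coarse enough that each uncovered vertex contributes a visible, computable loss in the geometric mean.
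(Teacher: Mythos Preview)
Your high-level plan --- reduce from a gap version of 4-Dimensional Matching using the $p=4$, $q=5$ construction of Section~\ref{sec:NPhard} --- is the same as the paper's. But two concrete points separate your sketch from a working proof of the stated constant, and on both the paper proceeds differently.

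\textbf{The source problem and the dummy count.} You assume Gap-4DM with \emph{perfect} completeness (YES instances have a perfect matching). The hardness result actually used (Hazan--Safra--Schwartz) only gives \emph{almost-perfect} completeness: YES instances have a matching of size $(1-\epsilon)n$, NO instances have every matching smaller than $(\tfrac{53}{54}+\epsilon)n$, and the instance satisfies $m=3n$. These specific constants are what ultimately produce $1.000015$. Accordingly, the number of dummy goods is $5\bigl(m-(1-\epsilon)n\bigr)$ rather than your $5(m-n)$, so that in the YES case the $(1-\epsilon)n$ matched agents each take four incident vertex goods and the remaining $m-(1-\epsilon)n$ agents each take five dummies, all with value exactly $4$.

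\textbf{The soundness bound.} Your quantitative AM-GM idea (values lie on a $1/5$-grid, so every deviation from $4$ costs a fixed multiplicative amount) is correct in spirit, but as you acknowledge, getting the explicit $1.000015$ from it is the real work and you leave it open. The paper takes a more systematic route. Two short lemmas show that in any optimal allocation every agent's value is at most $4.8$, so only finitely many valuation types $(i,j)\in\{0,\dots,4\}\times\{0,\dots,6\}$ can occur. One then writes a \emph{linear program} in the fractions $x_{ij}$ of agents of each type (plus a variable $\alpha$ for the fraction of vertex goods assigned as small), with constraints encoding the good counts, $m=3n$, and the matching bound $\sum_j x_{4j}\le \tfrac{1}{3}(\tfrac{53}{54}+\epsilon)$, and objective $\sum_{ij} x_{ij}\ln u_{ij}$. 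Solving this $36$-variable LP at $\epsilon=0$ yields the optimal profile ($x_{40}=\tfrac{53}{162}$, $x_{31}=x_{14}=\tfrac{1}{162}$, $x_{05}=\tfrac{107}{162}$) and the explicit bound $\NSW(O^*)\le (4.2\cdot 3.8)^{1/162}\cdot 4^{160/162}$, whence the ratio $4/\bigl((4.2\cdot 3.8)^{1/162}\cdot 4^{160/162}\bigr)=1.00001545\ldots$ Your per-deviation estimate would at best have to rediscover this extremal configuration and then plug in $53/54$ and $m=3n$; the LP does this mechanically.
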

	
	We provide a gap-preserving reduction from \emph{Gap-4D-Matching} with almost perfect completeness. The instance transformation is similar to the \classNP-hardness construction, but with a slightly different number of dummy goods.
	
	\begin{lemma}[see \cite{HazanSS03}, section 3]\label{lemma:gap4dMatching}
		For all sufficiently small $\epsilon>0$, the following holds: Given a graph $G$ consisting of disjoint vertex sets $V_1, V_2, V_3, V_4$, each of size $n$, and a set $E\subseteq V_1\times V_2\times V_3\times V_4$ of $m=3n$ edges, it is \classNP-hard to distinguish whether there exists a matching in $G$ with $(1-\epsilon)n$ edges or each matching has less than $(\tfrac{53}{54}+\epsilon)n$ edges.
	\end{lemma}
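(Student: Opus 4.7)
The plan is a gap-preserving reduction from Gap-4D-Matching (\Cref{lemma:gap4dMatching}) to 2-value \NSW\ maximization with $p/q=4/5$. The construction mirrors the \classNP-hardness reduction of \Cref{sec:NPhard}, specialized to $p=4$, $q=5$, with a carefully chosen number of dummy goods so that a perfect 4D matching corresponds to the uniform-valuation-$4$ allocation. Given a Gap-4D-Matching instance $(V_1,\ldots,V_4,E)$ with $|V_i|=n$ and $|E|=3n$, I would introduce $3n$ agents (one per edge), $4n$ vertex goods (one per vertex) and $10n$ dummy goods. For agent $e=(v_1,v_2,v_3,v_4)$ the four goods at $v_1,\ldots,v_4$ are big (value $1$) and every other good is small (value $4/5$). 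The number $10n$ of dummies is chosen so that a perfect matching yields a uniform allocation with $\NSW=4$: the $n$ matched agents each receive their $4$ incident vertex goods (valuation $4$) and the remaining $2n$ agents each receive $5$ dummy goods (valuation $4$).

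In the YES case, given a matching $M$ of size $(1-\epsilon)n$, I would give each $e \in M$ her $4$ incident vertex goods and distribute the remaining $4\epsilon n$ vertex goods together with the $10n$ dummies as evenly as possible among the $(2+\epsilon)n$ other agents. A direct balancing argument shows that a $1-O(\epsilon)$ fraction of these agents receives $5$ small goods (valuation $4$) while an $O(\epsilon)$ fraction receives $4$ small goods (valuation $16/5$). Computing the geometric mean yields $\NSW \geq 4\cdot(4/5)^{\epsilon/3}$, which tends to $4$ as $\epsilon\to 0$.

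In the NO case, I would upper bound $\NSW$ over all allocations. The integer argument from the claim in \Cref{sec:NPhard} (the only non-negative integer solutions of $4 = i + (4/5)j$ with $i \leq 4$ are $(i,j) \in \{(4,0),(0,5)\}$) shows that an agent attains valuation exactly $4$ only if she receives either all $4$ incident vertex goods (``matching-happy'') or exactly $5$ small goods (``small-happy''). Matching-happy agents correspond to disjoint edges, so their number $a$ satisfies $a \leq (\tfrac{53}{54}+\epsilon)n$. Any non-wasteful allocation then assigns the remaining $4(n-a) \geq 4(\tfrac{1}{54}-\epsilon)n$ vertex goods to incident but non-matching-happy agents, forcing at least $\Omega(n)$ agents to receive $x_e \in \{1,2,3\}$ big goods. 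By the integer argument such agents cannot attain valuation $4$, and their valuations differ from $4$ by at least $1/5$. Since any non-wasteful allocation has total valuation at most $4n + (4/5)\cdot 10n = 12n$ (average $4$), AM--GM gives $\NSW \leq 4$; the forced $\Omega(n)$ deviation from $4$ then makes AM--GM strict, and a quantitative version yields $\NSW \leq 4\cdot(4/5)^{c}$ for an absolute constant $c>0$ independent of $\epsilon$.

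Combining both cases, $\NSW_{\text{YES}}/\NSW_{\text{NO}} \geq (5/4)^{\,c - \epsilon/3}$, which exceeds $1.000015$ for sufficiently small $\epsilon$, establishing \classAPX-hardness. The main obstacle is the quantitative upper bound in the NO case: one has to argue that no allocation can compensate for the matching deficit by cleverly redistributing goods. Allowing matching-happy agents to receive extra small goods raises their valuations above $4$ only at the cost of other agents' valuations, while sending vertex goods to non-incident agents strictly decreases the total value and hence the AM-bound. Careful bookkeeping of how vertex-good surpluses propagate through the unavoidable $\{1,2,3\}$ big-good choices, together with a lower bound on the variance of the resulting valuation vector, yields the absolute constant $c$ above and completes the reduction.
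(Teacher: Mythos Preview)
Your proposal addresses the wrong statement. \Cref{lemma:gap4dMatching} is the Gap-4D-Matching hardness result itself; what you have sketched is a reduction \emph{from} \Cref{lemma:gap4dMatching} \emph{to} the 2-value \NSW\ problem, i.e., you have outlined a proof of \Cref{theo:approxHardness} rather than of the lemma. The paper's treatment of \Cref{lemma:gap4dMatching} is not an \NSW\ reduction at all: since the lemma is cited from \cite{HazanSS03}, the paper merely recounts that reduction from Max-E3-Lin(2) via the consistency-gadget (dual disperser graphs $G_x$) and the gap-gadget (the fourth partition $V_4$ with four vertices per equation), in order to verify the one fact not stated explicitly in \cite{HazanSS03}, namely that the resulting 4-partite hypergraph has exactly $m=3n$ edges. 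Your write-up contains none of this; it simply \emph{uses} the lemma as a black box.

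Even read as an attempt at \Cref{theo:approxHardness}, your sketch diverges from the paper and has a genuine gap. The paper uses $5(m-(1-\epsilon)n)=10n+5\epsilon n$ dummy goods (not $10n$), which makes the YES case yield $\NSW=4$ exactly rather than $4\cdot(4/5)^{\epsilon/3}$. More importantly, for the NO case the paper does not rely on a vague ``quantitative AM--GM plus variance'' argument; it first bounds the set of valuation types that can appear in an optimal allocation (every agent has valuation at most $4.8$), then encodes the constraints---number of agents, matching bound, big-good budget, small-good budget---as a constant-size linear program in the type frequencies $x_{ij}$ and solves it to obtain the explicit bound $\NSW \le (4.2\cdot 3.8)^{1/162}\cdot 4^{160/162}$. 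Your soundness paragraph asserts the existence of an absolute constant $c>0$ but gives no mechanism to produce it; in particular, the step ``careful bookkeeping \ldots\ together with a lower bound on the variance'' is exactly where all the work lies, and without the LP (or an equivalent explicit calculation) you have not established any concrete inapproximability constant, let alone the claimed $1.000015$.
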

	
	Before proceeding to the proof of our main theorem, let us note that in \cite{HazanSS03} the authors do not explicitly state that the number of edges is $m=3n$ in their construction. We give a short overview of their reduction and argue that this is indeed true.
	
    Starting point is the problem \emph{Max-E3-Lin(2)}. The input of this problem is a set $\Phi$ of equations of the form $\varphi : x+y+z = t \; (\text{mod } 2)$, where $t\in\{0,1\}$ and $x,y,z$ are boolean variables from a set $X$. The \classNP-hard gap problem is to decide whether there exists an assignment to the variables satisfying a $(1-\epsilon)$-fraction of equations or each assignment satisfies less than a $(\tfrac{1}{2}+\epsilon)$-fraction, see \cite{Hastad01}.
		
    Now we describe the transformation to 4D-Matching.
	Let $d_x$ be the number of occurrences of variable $x\in X$ in the input. It can be assumed that each equation contains three different variables and hence $3|\Phi| = \sum_{x\in X} d_x$. Furthermore, it can be assumed that each variable occurs equally often in the first, second, and third position of an equation.
		
	The construction is based on two steps: In the first step, the so called \emph{consistency-gadget} is constructed, a 3-partite 3-uniform hyper-graph $G=(V_1,V_2,V_3, F)$, where $V_1, V_2, V_3$ are equally sized vertex partitions and $F\subseteq V_1\times V_2\times V_3$ is a set of degree-3 hyper-edges. In the second step, the so called \emph{gap-gadget} is introduced into $G$. In this step, the edges of $G$ are connected to a fourth vertex set $V_4$ to build the resulting 4D-Matching instance $G'=(V_1,V_2,V_3,V_4,F')$ with edge set $F'\subseteq V_1\times V_2\times V_3\times V_4$. Now we describe these two steps.
		
	Consistency-gadget:
	For each variable $x\in X$, a 3-regular, edge-3-colorable, bipartite disperser graph $\hat{G}_x=(U_x^0\cup U_x^1,E_x^1\cup E_x^2\cup E_x^3)$ is created, where $U_x^0$ and $U_x^1$ are the vertex partitions, each of size $\tfrac{4}{3} d_x$, and the $E_x^c$ are the three color classes of edges, also each of size $\tfrac{4}{3}d_x$. The two partitions correspond to the two possible assignments to variable $x$, and dispersing properties of $\hat{G}_x$ imply that in every large independent set of $\hat{G}_x$, a large set of vertices stems from only one partition. Now all $\hat{G}_x$ are converted to their dual graphs $G_x = (V_x=V_x^1\cup V_x^2\cup V_x^3, F_x=F_{x=0}\cup F_{x=1})$, that means each edge of $E_x^c$ is now a degree-2 vertex in $V_x^c$ and each vertex in $U_x^0$ (resp.\ $U_x^1$) is now a degree-3 hyper-edge in $F_{x=0}$ (resp.\ $F_{x=1}$). Note that every independent set in $\hat{G}_x$ corresponds to a matching in $G_x$ and vice versa. Now define the graph $G=(V_1,V_2,V_3, F)$ with $V_c=\bigcup_{x\in X} V_x^c$ and $F=\bigcup_{x\in X} F_x$ to be the union of all $G_x$. Note that $|V_c|=\sum_{x\in X} \tfrac{4}{3} d_x = 4|\Phi|$.
		
	Gap-gadget:
	Now the graph $G$ is extended to the final 4D-Matching instance $G'$. Therefore, the edges of $G$ are amended with a vertex from a fourth vertex set $V_4$. $V_4$ contains for each equation $\varphi\in \Phi$ four vertices $v_\varphi^1, v_\varphi^2, v_\varphi^3, v_\varphi^4$. Note that $|V_4| = 4|\Phi|$, and hence $|V_1|=|V_2|=|V_3|=|V_4| =: n$. 
	The connection scheme for an equation $\varphi : x+y+z = 0 \; (\text{mod } 2)$ is as follows (the case with a right-hand-side of 1 is similar):
	\begin{enumerate}
	    \item Pick two edges $e,f\in F_{x=0}$ and add $v_\varphi^1$ to $e$ and $v_\varphi^2$ to $f$.
	    \item Pick two edges $e,f\in F_{x=1}$ and add $v_\varphi^3$ to $e$ and $v_\varphi^4$ to $f$.
	    \item Pick one edge $e\in F_{y=0}$, duplicate it and add $v_\varphi^1$ to $e$ and $v_\varphi^3$ to the copy.
	    \item Pick one edge $e\in F_{y=1}$, duplicate it and add $v_\varphi^2$ to $e$ and $v_\varphi^4$ to the copy.
	    \item Pick one edge $e\in F_{z=0}$, duplicate it and add $v_\varphi^1$ to $e$ and $v_\varphi^4$ to the copy.
	    \item Pick one edge $e\in F_{z=1}$, duplicate it and add $v_\varphi^2$ to $e$ and $v_\varphi^3$ to the copy.
	\end{enumerate}
	Each edge of $G$ will be picked only once. Now consider a variable $x\in X$ after this process. The number of picked edges from $F_{x=0}$ is $\tfrac{1}{3}d_x\cdot 2 + \tfrac{2}{3}d_x \cdot 1 = \tfrac{4}{3}d_x = |F_{x=0}|$, as $1/3$ of the occurrences of $x$ are in the first position and $2/3$ are in the second or third position. The number of duplicated edges is $\tfrac{2}{3}d_x$. The same holds for $F_{x=1}$. Hence the overall number of edges is $m := \sum_{x\in X} 2(\tfrac{4}{3} d_x + \tfrac{2}{3} d_x) = 12|\Phi| = 3n$.

	The hard gap of the matching instance follows from the hard gap of Max-E3-Lin(2) and dispersing properties of the graphs $\hat{G}_x$.
		
	\begin{proof}[Proof of Theorem~\ref{theo:approxHardness}]
	    Let us now proceed to the proof of our main theorem.
		\paragraph{Transformation:}
		There is one good for each vertex, call them \emph{vertex goods}. Additionally, there are $5(m-(1-\epsilon)n)$ \emph{dummy goods}.
		For each edge, there is one agent who values the four incident vertex goods $1$ and all other goods $\tfrac{4}{5}$.
		
		\paragraph{Completeness:}
		Suppose there exists a matching of size $(1-\epsilon)n$. Then there is an allocation $\mathcal A$ with $\text{NSW}(\mathcal A) = 4$:
		Give each agent corresponding to a matching edge all four incident vertex goods. Now there are $m-(1-\epsilon)n$ agents left. Give each of them $5$ dummy goods. As each agent has valuation $4$, the NSW of this allocation is also $4$.
		
		\paragraph{Soundness:}
		Suppose every matching is smaller than $(\tfrac{53}{54}+\epsilon)n$. We show that in this case the NSW of any allocation is upper bounded by a constant strictly smaller than $4$.
		
		First, observe that in every allocation, each agent valuation is of the form $u_{ij} = i+\tfrac{4}{5}j$ with integers $i\in \{0,\dots,4\}$ and $j\geq 0$, meaning that the agent receives $i$ goods she values at $1$ and $j$ goods she values at $\tfrac{4}{5}$. We call the pairs $(i,j)$ \emph{valuation types}.
		Using the next two lemmas, we show that the number of different valuation types which may occur in Nash optimal allocations is limited by a constant.
		
		\begin{restatable}{lemma}{utilityBoundOne}\label{lemma:utilityBound1}
			In each allocation there is an agent with valuation at most $4$.
		\end{restatable}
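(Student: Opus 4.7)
The plan is a straightforward averaging (sum social welfare) argument, exploiting the fact that the valuations $i + \tfrac{4}{5}j$ have a ``gap'' just above the value $4$.

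First, I would upper bound the total sum social welfare of any allocation. Each of the $4n$ vertex goods is valued $1$ by its four incident agents and $\tfrac{4}{5}$ by all others, so it contributes at most $1$ to the sum. Each of the $5(m-(1-\epsilon)n) = 5(2+\epsilon)n$ dummy goods contributes exactly $\tfrac{4}{5}$ regardless of the receiver. Hence, for any allocation $\mathcal{A}$,
\[
\sum_{i} v_i(A_i) \;\le\; 4n + 5(2+\epsilon)n \cdot \tfrac{4}{5} \;=\; (12+4\epsilon)n.
\]
Since there are $m=3n$ agents, the average valuation is at most $4 + \tfrac{4}{3}\epsilon$.

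Second, I would identify the smallest possible agent valuation strictly greater than $4$. Valuations have the form $u_{ij} = i + \tfrac{4}{5}j$ with $i \in \{0,1,2,3,4\}$ and $j \ge 0$, so $u_{ij} > 4$ is equivalent to $5i + 4j > 20$, i.e., $5i+4j \ge 21$. A finite case check shows that the unique pair minimizing $5i+4j$ above $20$ subject to $i \in \{0,\ldots,4\}$ is $(i,j)=(1,4)$, yielding $u_{1,4}=\tfrac{21}{5}=4.2$. Therefore any valuation exceeding $4$ is at least $\tfrac{21}{5}$.

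Finally, suppose for contradiction that in some allocation every agent has valuation strictly greater than $4$. Then every agent has valuation at least $\tfrac{21}{5}$, so the sum is at least $3n \cdot \tfrac{21}{5} = 12.6\,n$. Combined with the upper bound $(12+4\epsilon)n$, this forces $\epsilon \ge 0.15$, contradicting the ``sufficiently small $\epsilon$'' regime of \Cref{lemma:gap4dMatching}. Hence some agent must have valuation at most $4$.

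The main (and only) subtle step is the finite enumeration establishing the gap $(4,\tfrac{21}{5})$ in the valuation lattice; once that is in place, the statement follows from a one-line averaging inequality, so I expect no real obstacle.
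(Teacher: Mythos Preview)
Your argument is correct and essentially identical to the paper's: both compute the upper bound $4n + 5(m-(1-\epsilon)n)\cdot\tfrac{4}{5} = 4m + 4\epsilon n$ on the sum social welfare, both observe that the smallest feasible valuation above $4$ is $4.2$, and both derive a contradiction for sufficiently small $\epsilon$. The only cosmetic difference is that you spell out the finite case check for the gap explicitly, whereas the paper simply asserts it.
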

		
		\begin{proof}
			Suppose not. Then every agent has valuation at least $4.2$, as this is the smallest possible valuation strictly greater than $4$. Hence the sum of agents valuations is at least $4.2m$. However, if each good is allocated to an agent with maximal value for it, we get a total sum of valuations of $4n + 5(m-(1-\epsilon)n)\cdot \tfrac{4}{5} = 4m + 4\epsilon n$, which is strictly smaller than $4.2m$ for sufficiently small $\epsilon$ (note that $n=m/3$). This is a contradiction.
		\end{proof}
		
		\begin{restatable}{lemma}{utilityBoundTwo}\label{lemma:utilityBound2}
			In a \NSW-optimal allocation, every agent has valuation at most $4.8$.
		\end{restatable}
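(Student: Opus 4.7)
The plan is to argue by contradiction via a single-good move: if some agent $h$ has $v_h(\mathcal{A}_h) > 4.8$ in a \NSW-optimal allocation $\mathcal{A}$, I will exhibit a reallocation of a single good from $h$ that strictly increases the \NSW.

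First I would identify the good to move. Each agent owns some $i$ of her four incident vertex goods and some $j$ non-incident goods, so her valuation is $i + 4j/5$ with $i\in\{0,\ldots,4\}$. If $h$ had $j=0$, her valuation would be at most $4$, contradicting the assumption $v_h(\mathcal{A}_h) > 4.8$. Hence $h$ owns at least one good $g$ that she values at $4/5$. Moreover, by \Cref{lemma:utilityBound1}, there exists an agent $\ell$ with $v_\ell(\mathcal{A}_\ell) \le 4$; in particular $\ell \neq h$.

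Next I would show that transferring $g$ from $h$ to $\ell$ strictly improves the \NSW. Since all attainable valuations are multiples of $1/5$ and the smallest attainable value strictly above $4.8$ is $5$, we have $v_h(\mathcal{A}_h) \ge 5$. Let $w \in \{4/5,\, 1\}$ denote $\ell$'s value for $g$; the value $1$ occurs precisely when $g$ happens to lie on $\ell$'s edge. The move strictly improves the product of valuations iff
\[
\bigl(v_h(\mathcal{A}_h) - 4/5\bigr)\bigl(v_\ell(\mathcal{A}_\ell) + w\bigr) \;>\; v_h(\mathcal{A}_h)\cdot v_\ell(\mathcal{A}_\ell),
\]
which rearranges to $w\cdot v_h(\mathcal{A}_h) - (4/5)\cdot v_\ell(\mathcal{A}_\ell) > (4/5)\cdot w$. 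For $w = 4/5$ this reduces to $v_h(\mathcal{A}_h) - v_\ell(\mathcal{A}_\ell) > 4/5$, which holds since $v_h(\mathcal{A}_h) - v_\ell(\mathcal{A}_\ell) \ge 5 - 4 = 1$. For $w = 1$ it reduces to $v_h(\mathcal{A}_h) - (4/5)v_\ell(\mathcal{A}_\ell) > 4/5$, which holds because $5 - (4/5)\cdot 4 = 9/5$. Either sub-case contradicts the \NSW-optimality of $\mathcal{A}$, so no such $h$ exists.

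There is essentially no structural obstacle here, since any good can be freely moved between any two agents; the entire argument reduces to the two-case computation above. What makes the cutoff $4.8$ tight for this style of proof is exactly the $w=4/5$ case: the required gap $v_h(\mathcal{A}_h) - v_\ell(\mathcal{A}_\ell) > 4/5$ would become non-strict if we only had $v_h(\mathcal{A}_h) \ge 4.8$, so pushing the bound lower would require a more global exchange argument rather than a single-good swap.
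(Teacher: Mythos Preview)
Your proof is correct and follows essentially the same approach as the paper: assume an agent with valuation above $4.8$, invoke \Cref{lemma:utilityBound1} to find an agent with valuation at most $4$, and move one small-valued good between them to strictly raise the \NSW. The paper notes in passing that the high-valuation agent must hold at least \emph{two} small goods (since $i\le 4$ forces $j\ge 2$ once $i+\tfrac{4}{5}j>4.8$), but only one is used, just as in your argument; your more careful case split on $w\in\{4/5,1\}$ makes explicit what the paper leaves implicit.
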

		
		\begin{proof}
			Suppose there is an agent $k$ with valuation strictly greater than $4.8$ in a \NSW-optimal allocation. Then $k$ has at least two small valued goods. According to Lemma \ref{lemma:utilityBound1} there is an agent $k'$ with valuation at most $4$. Giving one of the small goods of $k$ to $k'$ strictly increases the NSW which contradicts optimality.
	    \end{proof}

		Now consider an optimal allocation $O^*$. From Lemma \ref{lemma:utilityBound2} it follows that the set of valuation types $(i,j)$ occurring in $O^*$ is a (strict) subset of $I := \{0,\ldots,4\} \times \{0,\ldots,6\}$, as otherwise there would be an agent with valuation more than $4.8$.
		This allows us to construct an LP with $|I|+1=36$ variables, whose optimal value yields an upper bound on $\text{NSW}(O^*)$:
		
		Let $x_{ij}$ be the fraction of agents with valuation $u_{ij}$ in $O^*$. Furthermore, let $\alpha$ be the fraction of small allocated vertex goods in $O^*$. Then the following must hold:
		\begin{align}
		\sum_{ij\in I} x_{ij} &= 1 \tag{no other valuation types occur}\\
		\sum_{j=0\ldots 6} x_{4j} m &\leq \left( \tfrac{53}{54} + \epsilon \right)n \tag{otherwise there would be a `large' matching}\\
		\sum_{ij\in I} ix_{ij}m &\leq \left( 1-\alpha \right) \hspace{-6mm} \underbrace{4n}_{\text{\# vertex goods}} \tag{\# big valued goods not exceeded}\\
		\sum_{ij\in I} jx_{ij}m &\leq \underbrace{5\left(m-(1-\epsilon)n\right)}_{\text{\# dummy goods}} + 4\alpha n \tag{\# small valued goods not exceeded}
		\end{align}
		
		Now observe that $\text{NSW}(O^*) = \prod_{ij\in I} u_{ij}^{x_{ij}}$, hence by applying the logarithm and substituting $m=3n$ in the above inequalities we obtain an LP whose optimal solution yields an upper bound on $\log \text{NSW}(O^*)$:
		
		\begin{align*}
		\max_{x} &\; \sum_{ij\in I} x_{ij} \ln u_{ij} \quad \text{s.t.} \\
		&\sum_{ij\in I} x_{ij} = 1
		&&\sum_{j=0\ldots 6} x_{4j} \leq \tfrac{1}{3}\left( \tfrac{53}{54} + \epsilon \right) \\
		&\sum_{ij\in I} ix_{ij} \leq \tfrac{4}{3}\left( 1-\alpha \right)
		&&\;\; \sum_{ij\in I} jx_{ij} \leq \tfrac{1}{3}\left( 10 + 5\epsilon \right) + \tfrac{4}{3}\alpha \\
		&\; x_{ij} \geq 0, \quad \forall ij\in I \\
		&\; 0\leq \alpha \leq 1
	    \end{align*}
		
		Solving the LP for $\epsilon=0$ shows an optimum at

		\[\alpha{=}0,\;\;\; x_{40}{=}\tfrac{53}{162},\;\;\;  x_{14}{=}x_{31}{=}\tfrac{1}{162},\;\;\;\]
		\[x_{05}{=}\tfrac{107}{162},\;\;\; x_{ij}{=}0 \;\;\; \text{otherwise.} \]
		
		Hence $\text{NSW}(O^*)\leq (4.2\cdot 3.8)^{1/162}\cdot 4^{160/162} + \delta$, where $\delta\to 0$ as $\epsilon\to 0$.\\
		
		From our completeness and soundness arguments it follows that for all sufficiently small $\delta>0$, it is \classNP-hard to distinguish between problem instances where the maximum NSW is at least 4 and instances where the maximum NSW is at most $(4.2\cdot 3.8)^{1/162}\cdot 4^{160/162} + \delta$.
		Therefore, it is \classNP-hard to approximate the maximum Nash social welfare with a factor smaller than $\frac{4}{(4.2\cdot 3.8)^{1/162}\cdot 4^{160/162}} = 1.00001545\ldots$
	\end{proof}

\bibliographystyle{abbrv}
\bibliography{main}
\end{document}